\def\spacingset#1{\renewcommand{\baselinestretch}%
	{#1}\small\normalsize} \spacingset{1}
\theoremstyle{remark}
\newtheorem{rem}{Remark}
\theoremstyle{definition}
\newtheorem{example}{Example}
\newtheorem{definition}{Definition}
\newenvironment{customthm}[1]
{\innercustomthm}
{\endinnercustomthm}
\theoremstyle{plain}
\newtheorem{thm}{Theorem}
\newtheorem{ass}{Assumption}
\newtheorem{lemma}{Lemma}
\newcommand{\E}{\mathbb{E}}
\newcommand{\V}{\mathbb{V}}
\DeclareMathOperator{\tr}{tr}
\DeclareMathOperator{\plim}{plim}
\title{Estimation and exclusion restrictions in clustered \\ linear models\thanks{
We are grateful to Isaiah Andrews, Patrik Guggenberger, Lihua Lei, and Morten Ø. Nielsen for helpful and insightful discussions. We thank the authors of \cite{egger2022general} for sharing access to their data.
This research was supported by grants from the Danish National and Aarhus University Research Foundations (DNRF grant number DNRF186 and AUFF grant number AUFF-E-2022-7-3)
}}
\author{
    \textsc{Anna Mikusheva}%
    \thanks{
        Department of Economics, M.I.T.,  and Aarhus Center for Econometrics, 50 Memorial Drive, E52-526, Cambridge, MA, 02142, United States. E-mail: amikushe@mit.edu.; corresponding author
      }, \ 
    \textsc{Mikkel S\o lvsten}%
    \thanks{
         Department of Economics and Aarhus Center for Econometrics, Aarhus University, Fuglsangs Allé 4, Building 2621, B 13a, 8210 Aarhus V, Denmark. E-mail: miso@econ.au.dk. },
         \ 
         \textsc{Baiyun Jing}%
    \thanks{Department of Economics, Harvard University, Littauer Center, Cambridge, MA, 02138, United States. E-mail: bjing@fas.harvard.edu}}
\date{ March 6, \the\year }
\begin{document}

\maketitle

\begin{abstract}
    We study linear regression models with clustered data, high-dimensional controls, and intricate exclusion restrictions. We propose a correctly centered internal instrument IV estimator that accommodates a broad class of exclusion restrictions and allows within-cluster dependence. The estimator admits a simple leave-out interpretation and is computationally tractable. We derive a central limit theorem for the associated quadratic form and propose a robust variance estimator. We also develop identification-robust inference procedures. Our framework extends dynamic panel methods to general clustered settings. We illustrate the approach in a large-scale fiscal intervention in rural Kenya, where spatial interference generates the exclusion-restriction pattern.

    %An empirical application of the approach to a large-scale fiscal intervention in rural Kenya with spatial interference illustrates the approach.
    \bigskip
    
    \noindent
    \textsc{Keywords:}  weak exogeneity, many controls, interference, OLS inconsistency, internal instrument, leave-out approach 
     
     \bigskip
     
     \noindent
     \textsc{JEL Codes:} C13, C22
\end{abstract}

\section{Introduction}
Clustered datasets---such as panel, network, spatial, or group-structured data in which individuals or firms are observed repeatedly or nested within groups---have become increasingly common in empirical research. These settings present several methodological challenges. One is the need to accommodate heterogeneity, often addressed by including fixed effects and flexible time or group trends, which typically requires many control variables. A more fundamental difficulty is within-cluster dependence: observations in the same cluster may be correlated due to, e.g., spatial and network interference, spillover effects, and time-series dependence. Such dependencies complicate statistical inference and limit the effective aggregation of information across observations. \looseness=-1

The concept of exogeneity becomes more nuanced in the presence of clustered data. As we demonstrate, assuming only a per-observation exclusion restriction---uncorrelatedness of the error with the regressor within an observation---may deliver no identifying variation for consistent estimation of structural parameters. By contrast, strict exogeneity---each error term uncorrelated with all regressors in the cluster---is often implausible in many empirical contexts. \looseness=-1

To bridge strict exogeneity and unrestricted dependence, we assume the error term is uncorrelated with a subset of within-cluster regressors; the subset is application-specific. In panel data, it is common to assume errors are mean-zero conditional on current and past (but not future) regressors, allowing feedback from shocks to future policies. In spatial contexts, localized spillovers motivate analogous restrictions: regressors corresponding to sufficiently distant units within a cluster may plausibly be uncorrelated with a given error term, even if nearby regressors are not. Examples include spatial leakage in deforestation \citep{jayachandran2017cash}, spillovers in policing \citep{blattman2021place}, and fiscal intervention experiments in rural Kenya \citep{egger2022general}.
In network settings, treatment effects may propagate along social ties, so regressors associated with nodes that are unconnected—or sufficiently distant in the network—may be uncorrelated with a given error term \citep{paluck2016changing}. These partial exogeneity assumptions more accurately reflect the dependence structures present in many empirical applications.\looseness=-1

When only a subset of exclusion restrictions holds, it becomes inappropriate to treat regressors as fixed or to condition on them. Their stochastic variation may correlate with the errors, undermining standard arguments that rely on conditioning on regressors. As a result, even estimators like OLS must be interpreted as ratios of stochastic quadratic forms, which introduces several challenges. \looseness=-1

The most well-known is \emph{Nickell bias} \citep{nickell1981biases}, which occurs when the expected value of the OLS error numerator is nonzero, leading to asymptotic bias. While prominent in dynamic panels with fixed effects and lagged outcomes, analogous bias arises more broadly under clustered dependence.  %A second challenge is inference: standard variance estimators often fail to account for the cross-cluster dependence of quadratic forms. 
A second challenge is inference: standard variance estimators often fail to account for dependence among cross-cluster terms in the quadratic form. Finally, the denominator may remain asymptotically random, unlike in classical settings, leading to \emph{weak identification} and further complicating inference. \looseness=-1

This paper studies the estimation of structural parameters in linear regression models with clustered data, high-dimensional controls, and researcher-specified exclusion restrictions. A key feature of the framework is the grouping of observations into disjoint clusters, permitting within-cluster dependence and independence across clusters. The setting accommodates unbalanced panels with multiple high-dimensional fixed effects, as well as spatial, network, and group-structured data. Our approach extends dynamic panel methods to a broader class of models, addressing the central challenges of bias, inference, and identification. \looseness=-1

Our first contribution is to characterize a class of \textit{correctly centered} internal instrument estimators that remove the leading asymptotic bias of OLS. The approach accommodates high-dimensional exogenous controls and adapts to the exclusion structure specified in the application. The resulting estimator is an internal instrument IV estimator, defined as the one closest to OLS in a specified norm. The estimator is easy to implement and asymptotically efficient under some conditions. \looseness=-1

The procedure has a simple interpretation: for each observation, controls are partialled out using only those observations whose errors are uncorrelated with the observation's regressor, yielding an observation-specific leave-out projection. A just-identified IV regression is then performed on the transformed equation, using the original regressor as the instrument. \looseness=-1

We further characterize the efficiency loss from varying the strength of exclusion restrictions. In particular, assuming exogeneity only at the observation level eliminates all identifying variation when fixed effects are present. This observation underscores the importance of carefully specifying the exogeneity structure in applied work. \looseness=-1

Our second contribution addresses inference. We show that, outside special cases such as (i) strict exogeneity and (ii) block-diagonal residualization (e.g., models with only cluster fixed effects), the estimator's numerator is a nontrivial quadratic form in the errors. In these more general settings, standard cluster-robust variance estimators may fail, and valid inference requires central limit theorems for clustered quadratic forms. These issues are especially acute in models with many high-dimensional controls, such as two-way fixed effects \citep{verdier2018estimation}. We propose a new central limit theorem and variance estimator that account for this structure. \looseness=-1

Our third contribution addresses weak identification, which arises when the internal instrument---though valid---captures little identifying variation due to many controls or weak exclusion restrictions. As in the classical dynamic panel literature \citep{blundell1998initial, bun2010weak}, weak identification can compromise standard inference. We develop inference procedures that remain valid even when the estimator's denominator exhibits substantial sampling variability. \looseness=-1

We propose a comprehensive empirical strategy that combines a correctly centered estimator with identification-robust inference and confidence sets. We apply it to a prominent randomized evaluation of a large-scale fiscal intervention in rural Kenya \citep{egger2022general}. A key challenge in this setting is spatial interference: treatment in one village affects outcomes in neighboring villages, complicating the choice of plausible exclusion restrictions. We show that weakening the exclusion restrictions yields less precise estimates and wider confidence sets. \looseness=-1

This paper contributes to several strands of the econometrics literature. It relates to the extensive work on linear dynamic panel data models---a mature field with too many important contributions to list exhaustively. For recent overviews on correcting Nickell bias, see the surveys by \cite{okui20211} and \cite{bun2015dynamic}. Our estimation approach is more closely aligned with the internal instrument framework developed in \cite{anderson1981estimation}, \cite{arellano1991some}, \cite{ahn1995efficient}, \cite{alvarez2003time}, and \cite{blundell1998initial}, and we extend the framework to accommodate more general models. In addition, we directly address weak identification concerns noted in the dynamic panel literature, particularly in \cite{bun2010weak}. We also contribute to the emerging literature on quadratic central limit theorems for clustered data structures. These results have recently been employed in many-instrument panel settings, such as \cite{ligtenberg2023inference}, and are crucial for understanding the distribution of estimators that are quadratic functionals of the errors. \looseness=-1

The remainder of the paper is organized as follows. Section~\ref{sec: setup} describes the data structure, a plausible set of exclusion restrictions, and two modeling perspectives---outcome-based and design-based. Section~\ref{sec: removing bias} characterizes correctly centered internal instrument estimators, introduces our proposed estimator as the solution to an efficiency optimization problem, and interprets it as a leave-out internal instrument procedure. Section~\ref{sec: variance of numerator} addresses uncertainty quantification, and Section~\ref{sec: CLT} establishes a central limit theorem for quadratic forms with clustered data. Section~\ref{sec: weak id} develops inference procedures robust to weak identification and a corresponding variance estimator. Section~\ref{sec: empirical application} illustrates the full strategy in an empirical application, showing how both the estimator and, especially, its uncertainty depend on the maintained set of assumptions.  \looseness=-1

\paragraph{Notation.}  
We use \(0< c < 1 \) and \( C > 0 \) to denote generic finite constants that may vary across equations but do not depend on the sample size. For a matrix \( A \), we let \( A^+ \) denote its Moore–Penrose generalized inverse, \( \tr(A) \) its trace, \( \|A\|_F^2 = \tr(A'A) \) its Frobenius norm squared, and \( \|A\| \) its operator (spectral) norm. \looseness=-1

\section{Exclusion restrictions with clustered data}\label{sec: setup}

\subsection{Data structure}

We consider the estimation of a structural parameter $\beta$ in the linear regression model
\begin{align}\label{eq: panel model} y_\ell = x_\ell \beta + w_\ell'\delta + e_\ell, \quad \text{for } \ell = 1, \dots, n, \end{align}
where \(\ell\) indexes the \(n\) observations in the dataset. The parameter of interest, \(\beta\), is a scalar, though most of our results extend to vectors of fixed dimension.  The vector of strictly exogenous controls \(w_\ell\) has dimension \(K\), which may be large but is assumed to satisfy \(K < n\). We denote by \(W = [w_1, \dots, w_n]'\) the matrix of control variables. For simplicity, we treat \(W\) as non-random; equivalently, all results may be interpreted conditional on \(W\). Throughout, we assume that \(W\) has full rank and define the projection matrix  $M = I_n - W(W'W)^{-1}W',$ which projects out the variation associated with the controls \(W\).\looseness=-1

In contrast to standard cross-sectional settings with independent observations, we allow for dependence within clusters. Specifically, we assume the data can be partitioned into \(N\) disjoint clusters, with independence across clusters and arbitrary dependence within them. Formally, let \(\{ S_i \}_{i=1}^N\) be a partition of the index set \(\{1, \dots, n\}\), so that \(\cup_{i=1}^N S_i = \{1, \dots, n\}\) and \(S_i \cap S_j = \emptyset\) for \(i \ne j\). Let \(i(\ell)\) denote the index of the cluster containing observation \(\ell\), and define \(T_i = |S_i|\) as the size of cluster \(i\), so that \(n = \sum_{i=1}^N T_i\). This cluster structure reflects the sampling design and is essential for the application of laws of large numbers and central limit theorems.  
\looseness=-1

\begin{example}[Unbalanced panel data]\label{ex: ex 2}
Consider a regression with individual and time fixed effects: $y_{it} = \alpha_i + \mu_t + \beta x_{it} + e_{it},$ where \(y_{it}\) and \(x_{it}\) denote the outcome and treatment for student \(i\) at time \(t\). Let \(\ell = \ell(i, t)\) index the observations, and define clusters as \(S_i = \{\ell(i, t) : t \text{  observed for individual  } i\}\). Students may be observed across different, potentially non-overlapping periods. %Clusters need not coincide with individual indices \(i\); for example, one may define clusters at the family level if students are siblings. \qed
%More generally, clusters need not coincide with individual indices $i$; for example, one may instead define clusters at the family level if students are siblings. \qed
Here, following the standard panel-data setup, clusters coincide with individuals indexed by \(i\). In other applications, however, clusters may be defined more broadly---for example, at the family level when students are siblings. \qed
\end{example}

\begin{example}[Spatial data]\label{ex: ex 1new}
%Suppose we estimate the causal effect of a policy \(x_\ell\), implemented in municipality \(\ell\), on outcome \(y_\ell\). 
Suppose we study the causal effect of a municipality-level policy \(x_\ell\) on outcome \(y_\ell\). Assume that municipalities are nested within metropolitan areas indexed by \(i\), which serve as the clusters. A rich set of controls is included---such as cluster fixed effects and federally influenced covariates---so that \( \{ (x_\ell, e_\ell) : \ell \in S_i \}\) is independent across clusters, but arbitrary dependence is allowed within clusters. \qed
\end{example}

\begin{example}[Network data]\label{ex: new ex- network}
Consider treatment and outcome data for individuals connected via a social or informational network. The data may be drawn either from distinct clusters within a large network or from multiple independent networks. Conditional on a comprehensive set of controls, we assume independence across clusters, while allowing for complex, unrestricted dependence within each cluster. \qed
\end{example}

\subsection{Exclusion restrictions}

Assume that random regressors $x_\ell$ and errors $e_\ell$ have uniformly bounded second moments. There are a variety of exclusion assumptions one may make in regression \eqref{eq: panel model}. The weakest of these is the contemporaneous exclusion restriction, \( \E[x_\ell e_\ell] = 0 = \E[e_\ell] \) for all \( \ell \), which defines the parameter \( \beta \). When all observations are independent, this restriction (under mild regularity conditions) ensures that ordinary least squares (OLS) yields a consistent estimator of \( \beta \). However, as we show later, in the presence of clustered data, when the regression includes cluster fixed effects, the contemporaneous exclusion assumption alone fails to yield a consistent estimator of \( \beta \).

At the other extreme is the \emph{strict exogeneity} condition: \( \E[e_\ell \!\mid\! x] = 0 \) where \( x = [x_1, \dots, x_n]' \), which implies \( \E[x_{\tilde\ell}e_\ell] = 0 \) for all \( \tilde\ell \) and \( \ell \). Under strict exogeneity, OLS is unbiased conditional on $x$. In particular, by the Frisch-Waugh-Lovell theorem, we have:
\begin{align}
\hat\beta^\mathrm{LS} = \frac{x' M y}{x' M x} = \beta + \frac{\tilde x' e}{\tilde x' \tilde x},
\end{align}
where \( y = [y_1, \dots, y_n]' \), \( e = [e_1, \dots, e_n]' \) and \( \tilde x = Mx \) denotes the residual from regressing \( x \) on \( W \). Conditionally on \( x \), the regression errors \( \{e_\ell\} \) are the only source of randomness. Since \( \hat\beta^\mathrm{LS} \) is linear in the errors, standard results---such as unbiasedness, the law of large numbers, and the central limit theorem---hold. The only complication introduced by the clustered data structure is the need to adjust standard errors appropriately: cluster-robust standard errors are required because the relevant level of independence is at the cluster level.  \looseness=-1

Strict exogeneity is often too strong to hold in applications with clustered data, where unmodeled dependence between regressors and outcomes within clusters is likely. Researchers should tailor the exclusion restrictions to reflect institutional or structural features of the setting. We assume that the researcher can specify a matrix of exclusion restrictions, denoted by $\mathcal{E}$, which is an $n\times n$ indicator matrix encoding the moment conditions the researcher is willing to impose. Specifically, \( \mathcal{E}_{\tilde\ell\ell} = 1 \) encodes the assumption \( \E[x_{\tilde\ell} e_\ell ] = 0 \), while \( \mathcal{E}_{\tilde\ell\ell} = 0 \) encodes the absence of a restriction and allows \( \E[x_{\tilde\ell} e_\ell ] \ne 0 \).\looseness=-1 

Due to the assumed independence across clusters, we automatically have \( \E[x_{\tilde\ell}e_\ell] = 0 \) whenever \( i(\ell) \ne i(\tilde\ell) \). For notational simplicity, we assume \( \mathcal{E}_{\tilde\ell\ell} = 1 \) in such cases. Thus, zero entries in \( \mathcal{E} \) may only occur within the blocks corresponding to clusters. Zeros within these blocks represent the absence of exclusion restrictions. We consider the problem of estimating \( \beta \) in regression \eqref{eq: panel model} under the exclusion restrictions encoded in the indicator matrix \( \mathcal{E} \). \looseness=-1

\begin{ass}\label{ass: panel model}
Assume $\E[e_\ell]=0$, $\mathcal{E}_{\ell\ell} = 1$ for all $\ell$, and $\E[x_{\tilde\ell}e_\ell ] = 0$ whenever $\mathcal{E}_{\tilde\ell\ell} = 1$.  
\end{ass}

Assumption~\ref{ass: panel model} ensures contemporaneous exogeneity, \( \E[ x_\ell e_\ell] = 0 \), which underlies the desirable properties of OLS in standard cross-sectional settings.

\begin{customthm}{\ref{ex: ex 2}}[continued]
    Suppose a researcher is interested in estimating the effect of program participation \( x_{it} \) on student achievement \( y_{it} \). It is plausible that current achievement \( y_{it} \) could influence future eligibility or desire to participate \(x_{is}\) for $s>t$, thereby inducing correlation between the current error term \( e_{it} \) and future regressors \( x_{is} \), violating strict exogeneity. In such settings, it is common to assume \emph{weak exogeneity}: \( \E[x_{is} e_{it} ] = 0 \) for \( s \le t \), meaning the error is uncorrelated with current and past regressors but may influence future regressors. This structure arises naturally in dynamic panel models when, for instance, the lagged outcome is included as a regressor, \( x_{it} = y_{i, t-1} \), or more generally when regressors and outcomes jointly evolve over time. In this case, the matrix \( \mathcal{E} \) has triangular blocks of zeros if the observations are ordered by clusters and by time within each cluster.

    If clustering is defined at the family level, the researcher may also be concerned about \emph{sibling feedback} effects. For example, if an older child's experience with the program affects a younger sibling’s participation. In this case, the researcher may refrain from assuming exclusion restrictions across such pairs, encoding this by zeros in the corresponding entries in \( \mathcal{E} \), linking the older sibling’s error term with the younger sibling’s regressor. \looseness=-1\qed
\end{customthm}

\begin{customthm}{\ref{ex: ex 1new}}[continued]
    Suppose a researcher is interested in estimating the average treatment effect of a policy using a randomized controlled trial, in which an economic resource is randomly assigned across municipalities $\ell$. Spatial interference arises when outcomes in a given location may also be affected by treatment in neighboring locations—for example, because of resource sharing or price adjustments in shared local markets. Such spillover effects violate the exogeneity assumption across neighboring localities, since treatment in one location directly affects outcomes in the neighboring locality. If the researcher believes that interference occurs only when locations are within $R$ kilometers of each other and is negligible at larger distances, then the matrix $\mathcal{E}$ encodes indicators of whether two locations lie more than $R$ kilometers apart. Notable examples of such settings are spatial leakage in deforestation \citep{jayachandran2017cash}, spillovers in policing  \citep{blattman2021place}, or a fiscal intervention experiment in rural Kenya \citep{egger2022general}.  \looseness=-1\qed
\end{customthm}

\begin{customthm}{\ref{ex: new ex- network}}[continued]
    Consider the randomized controlled trial studied in \citep{paluck2016changing}, in which students are randomly assigned to participate in a program that teaches teenagers conflict-resolution skills. A researcher may believe that the treatment affects not only the outcomes of treated students, but also those of their close friends. In this case, it is reasonable to assume that an exclusion restriction holds only for pairs of observations that are not directly connected by friendship, so that the matrix $\mathcal{E}$ is given by the complement of the friendship adjacency matrix. Such spillovers of information or resources through social networks are increasingly common in modern randomized controlled trials. Prominent examples include \cite{banerjee2013diffusion}, \cite{banerjee2019using}, and \cite{banerjee2024can}. \looseness=-1\qed
\end{customthm}

\subsection{Asymptotic bias of OLS}
When strict exogeneity fails, the OLS estimator may suffer from asymptotic bias and, in some cases, may even be inconsistent. \looseness=-1

\begin{lemma}\label{lem: nickell bias}
    Suppose assumptions \ref{ass: panel model} and \ref{ass: technical} (stated in the Appendix) hold.  Then,
    \begin{align}
        \hat \beta^\mathrm{LS} = \beta + \frac{\frac{1}{n} \sum_{\ell=1}^n \sum_{\tilde\ell=1}^n M_{\tilde\ell\ell} \E[x_{\tilde \ell} e_\ell  ]}{Q} + o_p(1),
            \end{align}
    where $M_{\tilde \ell\ell}$ are entries of the $n\times n$ projection matrix $M$, and $Q=\plim_{n \rightarrow \infty} \frac{1}{n}\sum_{\ell=1}^n\tilde x_{\ell}^2$.
\end{lemma}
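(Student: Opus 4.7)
The plan is to start from the Frisch--Waugh--Lovell representation,
\begin{align}
\hat\beta^{\mathrm{LS}} \;=\; \beta \;+\; \frac{x'Me/n}{x'Mx/n},
\end{align}
and reduce the claim to two separate limit statements, one for the numerator and one for the denominator. The denominator is immediate from Assumption~\ref{ass: technical}, which delivers $\frac{1}{n}x'Mx = \frac{1}{n}\tilde x'\tilde x \xrightarrow{p} Q > 0$, so by Slutsky's lemma the remaining task is a law of large numbers for the quadratic form in the numerator:
\begin{align}
\frac{1}{n}\,x'Me \;-\; \frac{1}{n}\sum_{\tilde\ell,\ell}M_{\tilde\ell\ell}\,\E[x_{\tilde\ell}e_\ell] \;=\; o_p(1).
\end{align}
Since $\E[x'Me]=\sum_{\tilde\ell,\ell}M_{\tilde\ell\ell}\E[x_{\tilde\ell}e_\ell]$ by linearity, this reduces to showing $\V(x'Me)=o(n^2)$ and invoking Chebyshev's inequality.

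To bound the variance I would expand
\begin{align}
\V(x'Me) \;=\; \sum_{\tilde\ell,\ell,\tilde\ell',\ell'} M_{\tilde\ell\ell}\,M_{\tilde\ell'\ell'}\,\mathrm{Cov}\bigl(x_{\tilde\ell}e_\ell,\,x_{\tilde\ell'}e_{\ell'}\bigr)
\end{align}
and exploit cross-cluster independence, which forces the covariance to vanish unless the cluster pairs $\{i(\tilde\ell),i(\ell)\}$ and $\{i(\tilde\ell'),i(\ell')\}$ share at least one cluster. Under bounded fourth moments of $x_\ell$ and $e_\ell$ (which I expect Assumption~\ref{ass: technical} to provide), Cauchy--Schwarz shows each surviving covariance is uniformly $O(1)$. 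What remains is to bound the weighted sum of $|M_{\tilde\ell\ell}M_{\tilde\ell'\ell'}|$ over overlapping-cluster quadruples by $o(n^2)$, which I would do by splitting into cases according to which of the four indices share a cluster and applying a mix of Cauchy--Schwarz and projection-matrix identities such as $|M_{\tilde\ell\ell}|\le 1$, $\sum_\ell M_{\tilde\ell\ell}^2 = M_{\tilde\ell\tilde\ell}\le 1$, and $\tr(M)=n-K\le n$.

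The main obstacle will be the combinatorial bookkeeping across these cluster-overlap configurations. For each configuration one has to absorb the right number of matrix entries into an $\ell^2$-norm to leverage the projection-identity bounds, while using the assumed uniform bound on cluster sizes to control cluster-overlap multiplicity. Once the variance bound is established, Chebyshev's inequality yields $(x'Me-\E[x'Me])/n = o_p(1)$, and combining with the denominator convergence via Slutsky's lemma and the continuous mapping theorem delivers the claimed expansion with remainder $o_p(1)$.
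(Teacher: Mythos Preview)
Your overall strategy is sound and would lead to the result, but there are two points worth flagging.

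First, a small gap: the denominator is \emph{not} immediate. Assumption~\ref{ass: technical}\ref{ass: technical2} only asserts $\frac{1}{n}\sum_\ell \E[\tilde x_\ell^2]\to Q$; it says nothing about $\frac{1}{n}\sum_\ell \tilde x_\ell^2$ converging in probability. You therefore need a law of large numbers for the denominator as well---exactly the same variance-bound argument you outline for the numerator, applied to the quadratic form $x'Mx$. The paper treats numerator and denominator symmetrically for this reason.

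Second, your route differs from the paper's. You propose a direct expansion of $\V(x'Me)$ into a four-index sum, followed by a case analysis over cluster-overlap configurations with Cauchy--Schwarz and the projection identities $\sum_\ell M_{\tilde\ell\ell}^2=M_{\tilde\ell\tilde\ell}\le 1$. This works, but the bookkeeping is delicate: several overlap patterns survive (e.g., $i(\tilde\ell)=i(\tilde\ell')\ne i(\ell)=i(\ell')$, or three indices sharing a cluster), and each needs a tailored bound. Also be careful with the phrase ``uniform bound on cluster sizes'': Assumption~\ref{ass: technical}\ref{ass: technical3} only gives $\max_i T_i/n\to 0$, not bounded $T_i$, so your case bounds must be sharp enough to deliver $o(n^2)$ under this weaker condition. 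The paper instead uses the Efron--Stein inequality: writing $\mathcal{S}_n=\frac{1}{n}x'Me$ and bounding $\V(\mathcal{S}_n)\le\sum_i\E[(\Delta_i\mathcal{S}_n)^2]$, where $\Delta_i\mathcal{S}_n$ zeroes out cluster $i$. This collapses the combinatorics into a single cluster-wise bound, and the projection inequality $\sum_\ell \tilde v_\ell^2\le\sum_\ell v_\ell^2$ does the work uniformly across cases. The Efron--Stein route is shorter and delivers the variance bound directly in terms of $\max_i T_i/n$; your direct expansion is more elementary but more laborious.
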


This lemma is a relatively straightforward extension of the well-known Nickell bias \citep{nickell1981biases}. Nickell’s original work focused on a more restricted setting commonly referred to as a “dynamic panel data” model. \looseness=-1

\begin{example} [Dynamic panel data] 
    \begin{align}\label{eq: dynamic panel data}
    y_{it}=\alpha_i+\beta y_{i,t-1}+e_{it}.
\end{align}
This is a special case of our Example \ref{ex: ex 2} with the weak exogeneity restriction $\E[e_{it} \!\mid\!\{ y_{is} : s < t \}]=0$ for all $i,t$. Consider the simplest case of a balanced panel with $T_i = T$ for all $i$. The projection matrix $M$ corresponds to demeaning within units, and thus has a block-diagonal structure when observations are ordered by cluster. Under homoskedasticity with error variance $\sigma^2$, $\E[x_{\tilde\ell} e_\ell] = \E[y_{is} e_{it}] = \beta^{s - t} \sigma^2$ when $s \ge t$. Substituting these expressions into Lemma \ref{lem: nickell bias} yields the bias formula originally derived in \cite{nickell1981biases}. It is well known that in this setting, OLS is inconsistent for fixed $T$, as the bias does not vanish. The leading term of the bias is $O(1/T)$, and if $T$ grows slower than the number of clusters $N$, the bias may dominate the variance asymptotically, resulting in invalid inference. \qed \looseness=-1
\end{example}
 
The source of Nickell’s bias is well understood: differencing out fixed effects mixes observations from different periods, and future values of the regressor may correlate with current errors. Our framework is more general than the standard dynamic panel specification in \eqref{eq: dynamic panel data}. It is worth emphasizing that the result established by \cite{nickell1981biases}---and generalized in Lemma \ref{lem: nickell bias}---is best understood as an asymptotic bias: a term which, when subtracted from the OLS estimator, yields a consistent (or asymptotically unbiased) estimator. Asymptotic bias is distinct from finite sample bias, as it ignores the randomness in the denominator of the OLS expression. While this randomness becomes negligible asymptotically, it is present in finite samples.\looseness=-1

\subsection{Design based modeling}
\label{sec:design_base}
Model \eqref{eq: panel model} combined with Assumption~\ref{ass: panel model} can be viewed as an instance of \emph{outcome modeling}, since it specifies the outcome equation. In cross-sectional settings, however, OLS estimators perform well when either the outcome or the treatment equation is correctly specified. The alternative approach—based on the treatment equation—is referred to as \emph{design-based} and is commonly used when researchers understand the random assignment of treatment, such as in randomized controlled trials (RCTs). In this paper, we define a design-based model by specifying the following treatment equation, along with Assumption~\ref{ass: design model}:
\begin{align}\label{eq: design model}
x_\ell = w_\ell^\prime \delta_x + v_\ell.
\end{align}

\begin{ass}\label{ass: design model}
Assume \( \E[v_\ell] = 0 \), \( \mathcal{E}_{\ell\ell} = 1 \) for all \( \ell \), and \( \E[v_{\tilde\ell}(y_\ell - \beta x_\ell)] = 0 \) whenever \( \mathcal{E}_{\tilde\ell\ell} = 1 \).
\end{ass}

\begin{customthm}{\ref{ex: new ex- network}}[continued]
    Suppose a researcher wishes to estimate the effect of access to resources on individual outcomes by conducting an RCT in which individuals \( \ell \) are randomly provided with additional resources. The researcher assumes that the treatment assignment equation is correctly specified as $x_\ell = w_\ell' \delta_x + v_\ell,$ where \( w_\ell' \delta_x \) denotes the propensity score, which may depend on individual characteristics as well as cluster fixed effects.\looseness=-1

Now suppose there are reasons to expect \emph{interference}—that is, an individual's outcome may be influenced by the treatment assignment of their friends, for example, due to the sharing of resources within social networks. The researcher observes the friendship network and randomizes treatment independently of it. The outcome equation is posited as
\[
y_\ell = \beta x_\ell + g(W, F_\ell' x) + e_\ell,
\]
where \( F_\ell \) encodes the friendship links relevant for individual $\ell$, so that $F_\ell' x$ is the vector of treatment assignments for peers who may influence individual $\ell$. %is a friend-selection matrix indicating which peers may influence individual \( \ell \). 
The error term \( e_\ell \) is assumed to be independent of all assignments \( x \), but may be correlated within clusters to capture common shocks (e.g., village-level effects). In this case, the outcome modeling approach is complicated by the need to specify the functional form of the indirect effect. This effect might depend on whether \emph{any} friends are treated, the \emph{number} of treated friends, or the unobserved \emph{intensity} of friendships. The quantity \( g(W, F_\ell' x) \) may vary systematically within clusters, and this variation may not be well captured by fixed effects or observed covariates. Consequently, if the indirect effect is absorbed into the unmodeled regression error term, the outcome model \eqref{eq: panel model} may be misspecified.\looseness=-1

Nonetheless, design-based assumptions remain valid for pairs \( \ell \) and \( \tilde\ell \) who are not friends:
\[
\E[v_{\tilde\ell}(y_\ell - \beta x_\ell)] = \E[v_{\tilde\ell}(g(W,F_\ell' x) + e_\ell)] = \E[v_{\tilde\ell}] \cdot \E[g(W,F_\ell' x) + e_\ell] = 0,
\]
by the independence of treatment assignment. In this context, the exclusion restriction matrix \( \mathcal{E} \) is the complement of the adjacency matrix of the friendship graph: it has zeros for direct friends and ones elsewhere. \qed
\end{customthm}

The asymptotic bias expression from Lemma~\ref{lem: nickell bias} can equivalently be restated in the design-based framework:
\begin{align}
\hat\beta^\mathrm{LS} = \frac{x' M y}{x' M x} = \frac{v'M y}{v'  Mx} = \beta + \frac{v'M (y-\beta x)}{v' M x}=\beta + \frac{\frac{1}{n} \sum_{\ell,\tilde\ell=1}^n  M_{\tilde\ell\ell} \E[v_{\tilde \ell} (y_\ell-\beta x_\ell)  ]}{Q} + o_p(1).
\end{align}

 \begin{figure}[htbp!]            \begin{center}              \includegraphics[width=0.5\linewidth]{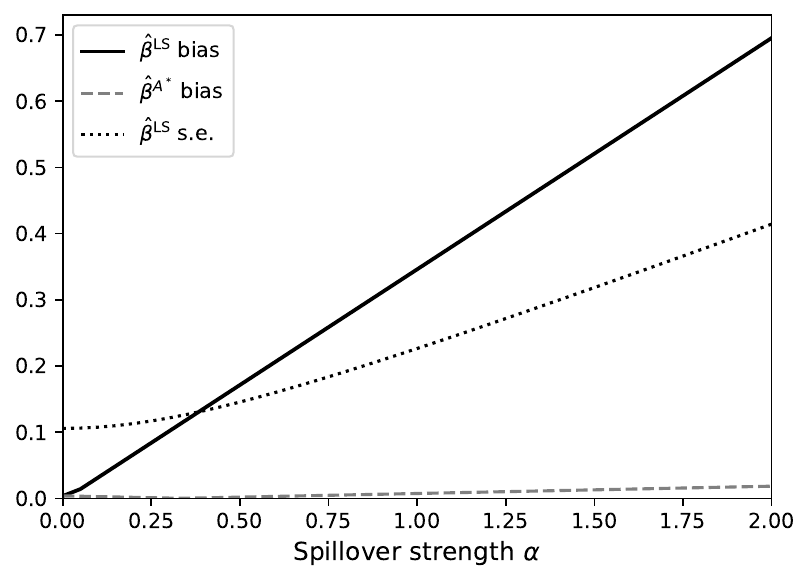}
      \caption{Bias of $\hat\beta^{\mathrm{LS}}$ in simulations}
      \label{fig:sim}
      \end{center}
      {\small Notes. This figure presents the absolute bias of $\hat\beta^{\mathrm{LS}}$ (solid line) and its cluster-robust standard error (dotted line) in a design-based DGP with $n=500$ and varying spillover strength $\alpha$. The absolute bias of 
  the proposed estimator $\hat\beta^{A^*}$ (described in Theorem \ref{thm: leave out}, dashed line) is shown as a benchmark.}
  \end{figure}

%Figure~\ref{fig:sim} illustrates this asymptotic bias using simulation analysis trying to mimic example \ref{ex: new ex- network} of social interference. 
Figure~\ref{fig:sim} illustrates this asymptotic bias in a simulation designed to mimic the network-interference setting in Example~\ref{ex: new ex- network}. We consider a DGP based on a stochastic block model where clusters represent school classes and within-cluster edges represent friendships. The researcher observes the friendship network but not the intensity of each connection, denoted as weight $G_{\tilde\ell\ell}=G_{\ell\tilde\ell}\sim\mathrm{Exp}(1)$ for connected pairs and zero otherwise.   Treatment is binary and assigned independently across units with probability $\mu_{i(\ell)}$, which varies across clusters.   The outcome $y_\ell$ includes a spillover component depending on the treatment of friends weighted by unobserved friendship intensities: $ y_\ell = \beta \, x_\ell + \alpha \sum_{\tilde\ell} G_{\ell \tilde\ell} \, x_{\tilde\ell} + e_\ell. $ The least-squares estimator controlling for cluster fixed effects exhibits substantial bias in this setting. Full details of the simulation design are provided in Appendix~\ref{sec:sim}. Despite random assignment of treatment, the OLS estimator can be highly misleading when interference is present. The bias arises from the correlation induced by spillovers within clusters, illustrating the Nickell-type bias in clustered regressions with network interference. 

The simulation illustrates the usefulness of the design-based perspective in settings where the treatment assignment mechanism is credible but the outcome equation may be difficult to specify. The primary exposition in this paper is framed within the outcome modeling approach. However, all results can equivalently be reformulated under the design-based perspective with only minor modifications. We provide such restatements in the remarks following the main theorems. Additionally, where appropriate, we discuss the possibility of a \emph{doubly robust} formulation, in which the researcher assumes that either the outcome equation or the treatment equation is correctly specified, without committing to which one. 

\looseness=-1

\section{Correctly centered estimators}\label{sec: removing bias}

\subsection{Class of  estimators}

%This subsection shows that no unbiased estimator exists in these settings with clustered data  and suggests a somewhat weaker restriction that we call correct centering. 
This subsection shows that unbiased estimation is generally impossible once regressors are random, even under standard orthogonality conditions, and motivates a weaker requirement that we call correct centering. Assume we have clustered data \( \{y_\ell, x_\ell\} \) generated according to model \eqref{eq: panel model} under Assumption~\ref{ass: panel model}. The set of controls \( W \), the cluster assignments \( S_i \), and the exclusion restriction matrix \( \mathcal{E} \) are given to the econometrician and treated as fixed. We consider the problem of estimating the parameter \( \beta \). Let \( \mathcal{F} \) denote the class of all joint distributions \( F \) over \( (x, y) \) that satisfy model \eqref{eq: panel model} under Assumption~\ref{ass: panel model}.
\begin{lemma}\label{lem- no unbiased est}
There exists no estimator \( \hat\beta = u(x, y) \) that is unbiased for all \( F \in \mathcal{F} \).
\end{lemma}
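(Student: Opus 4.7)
The plan is to exhibit two distributions $F_0, F_1 \in \mathcal{F}$ that induce the identical joint law of $(x,y)$ but correspond to distinct parameter values $\beta_0 \ne \beta_1$. Any measurable estimator $u$ would then satisfy $\E_{F_0}[u(x,y)] = \E_{F_1}[u(x,y)]$, which is incompatible with the unbiasedness requirements $\E_{F_i}[u] = \beta_i$ for both $i = 0,1$. So the entire argument reduces to constructing such a pair.

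For the explicit construction, I would take $x_\ell = 0$ almost surely for every $\ell$, set $\delta = 0$, and draw $e$ from any mean-zero law with independence across clusters and finite second moments—for concreteness, $e \sim N(0, I_n)$. Under the structural equation $y_\ell = x_\ell\beta + w_\ell'\delta + e_\ell$, the observable $y$ equals $e$ regardless of the value assigned to $\beta$, so the joint law of $(x,y)$ is the same for every $\beta \in \R$. Assumption~\ref{ass: panel model} is then trivially satisfied: $\E[e_\ell] = 0$ by choice of $e$; the diagonal convention $\mathcal{E}_{\ell\ell}=1$ is a property of the exclusion matrix rather than a probabilistic restriction; and $\E[x_{\tilde\ell}e_\ell]=0$ holds automatically since $x\equiv 0$. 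Consequently every $\beta\in\R$ yields a distribution $F_\beta \in \mathcal{F}$, and selecting any $\beta_0 \ne \beta_1$ furnishes the required pair.

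The main point to check is conceptual rather than technical: that $\mathcal{F}$ genuinely admits a degenerate regressor. This follows directly from the definition of $\mathcal{F}$, which imposes only the structural equation together with the moment conditions in Assumption~\ref{ass: panel model}, neither of which rules out $x \equiv 0$. Should one prefer to avoid an entirely degenerate construction, essentially the same argument goes through whenever $x$ is deterministic and lies in the column space of $W$—say $x = Wb$ for some fixed $b \in \R^K$—because then $x\beta + W\delta = W(b\beta + \delta)$, so the parameter pairs $(\beta_0, \delta)$ and $(\beta_1, \delta + b(\beta_0-\beta_1))$ induce identical joint laws of $(x,y)$, and the two-distribution reasoning applies verbatim. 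Either way, the lemma is a non-identification statement dressed up as a statement about estimators.
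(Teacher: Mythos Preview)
Your argument is correct as written: the class $\mathcal{F}$, as defined in the paper, does not exclude degenerate regressors, so setting $x\equiv 0$ (or $x=Wb$) produces two structural models with distinct $\beta$ and identical observable law, which immediately rules out an unbiased estimator. This is, however, a genuinely different route from the paper's. The paper borrows a construction from \cite{lehmann1998theory}: it builds a one-parameter family of i.i.d.\ distributions indexed by $p\in(0,1)$, with \emph{non-degenerate} $x$ and \emph{identified} $\beta = (1+p)^{-1}$, and then observes that the expectation of any estimator is a polynomial in $p$ of degree at most $n$ and hence cannot match the rational function $(1+p)^{-1}$ on an interval.

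What each approach buys is different. Your proof is shorter and requires no computation, but it establishes only the lemma as literally stated, by exploiting a non-identified corner of $\mathcal{F}$. The paper's proof is more involved but delivers the stronger claim the authors advertise immediately after the lemma: even in a well-behaved cross-sectional i.i.d.\ regression with $\E[e_i]=\E[x_ie_i]=0$ and $\beta$ fully identified, no unbiased estimator exists. Your closing remark---that the lemma is ``a non-identification statement dressed up as a statement about estimators''---is therefore off the mark: the authors' intended content is precisely that unbiasedness fails even when identification holds, and your construction does not speak to that.
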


In fact, the proof of Lemma \ref{lem- no unbiased est} establishes something even stronger: in a cross-sectional setting with an independent sample $\{y_i, x_i\}_{i=1}^n$ from a correctly specified regression model $y_i = \beta x_i + e_i$, where the regressor $x_i$ is random and $\E[e_i] = \E[x_i e_i] = 0$, no unbiased estimator of $\beta$ exists. The OLS estimator is not unbiased in this setting because the randomness of the regressors induces randomness in the OLS denominator, making it impossible to pass the expectation operator through the ratio. This observation motivates the need to consider a broader and more appropriate class of estimators, in which the normalization is separated from the centering.

%In the absence of strict exogeneity, treating the regressors \( x \) as fixed is no longer valid, since some components of \( x \) may carry stochastic information about the error terms. The randomness in \( x \) must therefore be taken seriously. \cite{lei2022estimators} and \cite{portnoy2022linearity} argue that in a linear regression model with fixed regressors, any unbiased estimator must be linear in \( y \). We extend this insight to the case of random regressors satisfying Assumption~\ref{ass: panel model}, and show that any unbiased estimator in class $\mathcal{F}$ must also be linear in \( x \). However, the linearity of an estimator in \( x \) contradicts a fundamental scaling property of linear regression: if the scale of \( x \) is changed by a factor of 10 while \( y \) remains fixed, the coefficient \( \beta \) must change by a factor of \( 1/10 \). For example, the ordinary least squares (OLS) estimator is \emph{not} unbiased in linear regression with random regressors, because expectations cannot generally be passed through the nonlinear normalization. This observation motivates the need to consider a broader, more appropriate class of estimators, in which the normalization is separated from the centering.\looseness=-1

\begin{definition}
An estimator \( \hat\beta = \frac{C_1(x, y)}{C_2(x)} \) is said to be \emph{correctly centered} if for all \( F \in \mathcal{F} \):
\[
\mathbb{E}_F[C_1(x, y)] = \beta \, \mathbb{E}_F[C_2(x)].
\]
\end{definition}
\looseness=-1
Correct centering is a weaker condition than unbiasedness. The key distinction lies in the treatment of randomness in the denominator. When the denominator is non-random—so that $\mathbb{E}[C_2(x)] = C_2(x)$—the two concepts coincide. In a correctly specified cross-sectional regression with random regressors, OLS is correctly centered but not unbiased.

For many of the estimators we consider, we can establish a form of the law of large numbers (under broadly applicable technical conditions) that ensures $C_2(x) - \mathbb{E}[C_2(x)] \xrightarrow{p} 0$ and $C_1(x, y) - \mathbb{E}[C_1(x, y)] \xrightarrow{p} 0$—with any necessary normalization absorbed into the functions themselves. For instance, Lemma \ref{lem: nickell bias} establishes that $C_2(x) = \frac{1}{n}x'Mx = \frac{1}{n}\mathbb{E}[x'Mx] + o_p(1)$ and $C_1(x, y) = \frac{1}{n}\mathbb{E}[x'My] + o_p(1)$. Regarding the denominator, what matters is the size of its uncertainty relative to its signal—that is, the strength of identification. If $\frac{C_2(x)}{\mathbb{E}[C_2(x)]} \xrightarrow{p} 1$ and the above law of large numbers holds, then correct centering implies consistency (or asymptotic unbiasedness) of the estimator.

\begin{lemma}\label{lem: OLS not cc} The OLS estimator is not correctly centered in the outcome model \eqref{eq: panel model} with Assumption \ref{ass: panel model} or in the design-based model \eqref{eq: design model} with Assumption \ref{ass: design model}.   
\end{lemma}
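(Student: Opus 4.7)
The plan is to expose the bias of OLS by computing $\mathbb{E}[C_1(x,y)] - \beta\, \mathbb{E}[C_2(x)]$ for $C_1 = x'My$ and $C_2 = x'Mx$, and then exhibit a specific distribution in $\mathcal{F}$ under which this expression is nonzero. Substituting \eqref{eq: panel model} and using $MW = 0$, one obtains
\begin{align}
\mathbb{E}[x'My] - \beta\, \mathbb{E}[x'Mx] = \mathbb{E}[x'Me] = \sum_{\ell=1}^n \sum_{\tilde\ell=1}^n M_{\tilde\ell\ell}\, \mathbb{E}[x_{\tilde\ell} e_\ell].
\end{align}
Under Assumption~\ref{ass: panel model}, the terms with $\mathcal{E}_{\tilde\ell\ell} = 1$ vanish, reducing the right-hand side to a sum over the pairs $(\tilde\ell,\ell)$ with $\mathcal{E}_{\tilde\ell\ell} = 0$. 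Correct centering would require this residual sum to equal zero for every $F \in \mathcal{F}$.

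To finish, I would construct a distribution in $\mathcal{F}$ that violates this identity. Fix a pair $(\tilde\ell_0,\ell_0)$ with $\mathcal{E}_{\tilde\ell_0\ell_0} = 0$ and $M_{\tilde\ell_0\ell_0} \neq 0$ (the existence of such a pair is precisely what places us outside the two degenerate cases mentioned in the paper's introduction, namely strict exogeneity and block-diagonal residualization). I would then take $(x,e)$ to be jointly Gaussian with mean zero, marginal variances equal to one, and cross-covariance matrix $\Sigma$ whose only nonzero off-block entry is $\Sigma_{\tilde\ell_0\ell_0} = c$ for some small $c \neq 0$, with every other cross-moment constrained to vanish consistently with $\mathcal{E}$. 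For $|c|$ small enough the full covariance is positive semidefinite, so such an $F$ exists in $\mathcal{F}$, and the bias sum collapses to $c\, M_{\tilde\ell_0\ell_0} \neq 0$, contradicting correct centering.

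For the design-based model, the identical algebra goes through with $v = x - W\delta_x$, since $MW = 0$ yields $Mx = Mv$ and hence
\begin{align}
\mathbb{E}[x'My] - \beta\, \mathbb{E}[x'Mx] = \mathbb{E}[v'M(y - \beta x)] = \sum_{\ell=1}^n \sum_{\tilde\ell=1}^n M_{\tilde\ell\ell}\, \mathbb{E}[v_{\tilde\ell}(y_\ell - \beta x_\ell)].
\end{align}
Assumption~\ref{ass: design model} kills the terms with $\mathcal{E}_{\tilde\ell\ell} = 1$, and an analogous construction (specifying the joint law of $v$ and $y - \beta x$ with a single nonzero cross-moment on an unrestricted pair where $M$ is nonzero) produces a distribution in the design-based class under which the residual sum is nonzero.

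The only subtle step is the construction in the second paragraph: I need to verify that I can freely set a specific cross-moment to a nonzero value on an unrestricted coordinate while keeping all restricted cross-moments at zero and maintaining a valid joint law. This is the main obstacle, but it is handled cleanly by the joint Gaussian construction above, where positive semidefiniteness of $\Sigma$ is guaranteed for sufficiently small $|c|$, and nothing else in Assumption~\ref{ass: panel model} (or Assumption~\ref{ass: design model}) is disturbed.
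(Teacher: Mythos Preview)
Your proposal is correct and takes essentially the same route as the paper: expand $\E[x'My]-\beta\,\E[x'Mx]=\sum_{\tilde\ell,\ell}M_{\tilde\ell\ell}\,\E[x_{\tilde\ell}e_\ell]$ and exhibit a distribution in $\mathcal{F}$ making this sum nonzero. The paper's one-line version sets $\E[x_{\tilde\ell}e_\ell]=M_{\tilde\ell\ell}(1-\mathcal{E}_{\tilde\ell\ell})$ for all pairs simultaneously, so the bias collapses to the sum of squares $\sum_{\tilde\ell,\ell}M_{\tilde\ell\ell}^2(1-\mathcal{E}_{\tilde\ell\ell})$, which is positive under exactly the nondegeneracy condition you invoke (existence of a pair with $\mathcal{E}_{\tilde\ell\ell}=0$ and $M_{\tilde\ell\ell}\neq 0$) and sidesteps the need to isolate a single coordinate.
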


To better understand the concept, it is helpful to recall that in a standard cross-sectional overidentified IV regression, the two-stage least squares (2SLS) estimator is not correctly centered. When the number of instruments is large, the resulting bias can be of first-order importance and is known as the many-instrument problem. In contrast, a one-step just-identified IV estimator—as well as jackknife or leave-one-out IV estimators in overidentified cross-sectional settings—remains correctly centered whenever the instruments are exogenous.

\begin{lemma}\label{lem: correct centered class} Consider an estimator
\(
\hat{\beta}^A = \frac{x' A y}{x' A x},
\)
where \( A \) is an \( n \times n \) fixed matrix.\footnote{Not every correctly centered estimator has the form \(\hat{\beta}^A\), or is even linear in \(y\). 
The independence between clusters imposes strong restrictions on the joint distribution of the data and allows higher-order $U$-statistics to be added to the numerator. 
For instance, if \((x_i,y_i), \ i=1,\dots,4\) are observations from distinct (and thus independent) clusters, and \(y_i = \beta x_i + e_i\), then $\mathbb{E}\big[x_1 x_2 y_3 y_4 - y_1 y_2 x_3 x_4\big] = 0.$ This observation connects to recent discussions in \cite{hansen2022modern} and \cite{portnoy2022linearity,lei2022estimators}.} It is correctly centered if and only if 
\( A \)  satisfies the following conditions:
\begin{equation}\label{eq: restrictions}
\emph{(POP) } AM = A 
\quad \text{and} \quad 
\emph{(CC) } A_{\tilde \ell \ell} = 0 \text{ for all pairs } (\tilde \ell, \ell) \text{ such that } \mathcal{E}_{\tilde\ell\ell} = 0.
\end{equation}
\end{lemma}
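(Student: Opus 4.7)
The plan is to expand the numerator of $\hat\beta^A$ using the model equation $y = x\beta + W\delta + e$, which gives
\[
x'Ay - \beta\, x'Ax \;=\; x'AW\delta + x'Ae.
\]
Taking expectations, correct centering is equivalent to
\[
\E[x]'AW\delta \;+\; \sum_{\tilde\ell,\ell} A_{\tilde\ell\ell}\, \E[x_{\tilde\ell}e_\ell] \;=\; 0 \quad\text{for every } F \in \mathcal{F}.
\]
These two terms isolate, respectively, the interaction between $A$ and the column space of $W$, and the interaction between $A$ and the pattern of permissible cross-moments of $x$ and $e$.

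For the sufficient direction, I would first observe that $AM = A$ is equivalent to $AW = 0$, which follows immediately from $M = I_n - W(W'W)^{-1}W'$ and the full rank of $W$. Under (POP) the first term vanishes identically. Under Assumption~\ref{ass: panel model}, a summand in the second term can be nonzero only at pairs with $\mathcal{E}_{\tilde\ell\ell}=0$, where (CC) kills the coefficient $A_{\tilde\ell\ell}$. Hence the right-hand side is zero on all of $\mathcal{F}$.

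For the necessary direction, I would produce two explicit distributions in $\mathcal{F}$ that separately activate each restriction. To force $AW=0$, take $e\equiv 0$ and $x$ degenerate at an arbitrary $\mu\in\R^n$; correct centering then reduces to $\mu'AW\delta=0$ for all $\mu$ and $\delta$, so $AW=0$. To force $A_{\tilde\ell_0\ell_0}=0$ at a given pair with $\mathcal{E}_{\tilde\ell_0\ell_0}=0$, I set every component of $x$ and $e$ to zero except $x_{\tilde\ell_0}\in\{\pm 1\}$ uniformly and $e_{\ell_0}=c\, x_{\tilde\ell_0}$ for some nonzero constant $c$. Then $\E[x]=0$, $\E[e_\ell]=0$ for all $\ell$, every assumed exclusion restriction holds, and the unique nonzero cross-moment is $\E[x_{\tilde\ell_0}e_{\ell_0}]=c$, located precisely where $\mathcal{E}=0$. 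Substitution yields $A_{\tilde\ell_0\ell_0}\cdot c=0$, hence $A_{\tilde\ell_0\ell_0}=0$.

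The only point that requires care is verifying that the second construction truly lies in $\mathcal{F}$: one must check that independence across clusters is preserved and that no unintended cross-moment is introduced. This is automatic because zeros of $\mathcal{E}$ can only occur within a single cluster block, so the randomness in the construction is confined to the cluster containing both $\tilde\ell_0$ and $\ell_0$, and all other clusters remain deterministic and hence independent. Beyond this bookkeeping, the argument is purely algebraic.
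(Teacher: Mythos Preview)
Your proposal is correct and follows essentially the same route as the paper: expand $x'Ay-\beta\,x'Ax=x'AW\delta+x'Ae$, then exhibit specific distributions in $\mathcal{F}$ that activate each of the two terms. Your treatment is in fact slightly more complete---you spell out the sufficiency direction and give an explicit Rademacher construction for the (CC) necessity, whereas the paper simply asserts the existence of such an $F$---but the underlying argument is the same.
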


%\begin{lemma}\label{lem: correct centered class}
%Any correctly centered (for all $F\in\mathcal{F}$) estimator \( \hat\beta \)  can be written in the form \(\hat{\beta}^A = \frac{x' A y}{x' A x},\)where \( A \) is an \( n \times n \) fixed matrix satisfying the following conditions:\begin{equation}\label{eq: restrictions}\emph{(POP) } AM = A \quad \text{and} \quad \emph{(CC) } A_{\tilde \ell \ell} = 0 \text{ for all pairs } (\tilde \ell, \ell) \text{ such that } \mathcal{E}_{\tilde\ell\ell} = 0.\end{equation}\end{lemma}

Here, (POP) stands for the \emph{partialling-out property}, and (CC) refers to \emph{correct centering}. Under condition (POP), the estimator satisfies:
\[
\hat\beta^A = \frac{x' A y}{x' A x} = \frac{x' A (\beta x + W \delta + e)}{x' A x} = \beta + \frac{x' A e}{x' A x}.
\]
Condition (CC) guarantees that \( \mathbb{E}[x' A e] = 0 \), ensuring correct centering of the estimator.
\looseness=-1

The class of estimators described in Lemma~\ref{lem: correct centered class} coincides with the class of just-identified linear IV estimators that use linear internal instruments. Specifically, define the instrument \( z = A'x \), a linear transformation of the regressors. Condition (CC) implies the exogeneity condition \( \mathbb{E}[z_\ell e_\ell] = 0 \) for all \( \ell \). If we estimate model \eqref{eq: panel model} via a just-identified IV regression using \( z \) as the instrument for \( x \), and including \( W \) as controls, then a straightforward generalization of the Frisch-Waugh-Lovell theorem for IV estimators yields:
\[
\hat\beta^{\mathrm{IV}} = \frac{z' M y}{z' M x} = \frac{x' A M y}{x' A M x} = \frac{x' A y}{x' A x} = \hat\beta^A,
\]
where the final equality follows from the partialling-out property (POP).
\looseness=-1

\begin{rem}
    If the data are generated from the design-based model \eqref{eq: design model} under Assumption~\ref{ass: design model}, then the results of Lemma~\ref{lem: correct centered class} continue to hold with a slightly modified (POP) condition. Specifically, we must replace (POP) with the following: (POP$^\prime$) $MA=A$. In the case of the doubly robust model, the corresponding condition becomes: (POP$^{\prime\prime}$) $MA=AM=A$. \qed
\end{rem}

%Numerous methods have been proposed to correct for the Nickell bias in the standard dynamic panel model \eqref{eq: dynamic panel data}. See \cite{okui20211} and \cite{bun2015dynamic} for recent surveys. Our approach is closely related to methods based on internal instruments, such as \cite{anderson1981estimation}, \cite{arellano1991some}, and \cite{ahn1995efficient}. Both \cite{arellano1991some} and \cite{ahn1995efficient} propose stacking all available exclusion restrictions and estimating the model by two-stage least squares (or GMM). In contrast, we construct a just-identified linear combination of internal instruments and implement a one-step IV procedure. This yields a correctly centered estimator—a property that is generally not satisfied by the estimators of \cite{arellano1991some} and \cite{ahn1995efficient}, because those procedures rely on estimated weighting matrices. In settings with a high degree of overidentification, the estimator of \cite{arellano1991some} may suffer from a substantial many-instrument bias (\cite{alvarez2003time}). Different choices of the matrix $A$ satisfying restrictions (CC) and (POP) span a full class of linear internal instruments and in particular include the infeasible optimally weighted \cite{arellano1991some} instrument.\looseness=-1

Our approach is closely related to methods based on internal instruments for dynamic panel data, such as \cite{anderson1981estimation}, \cite{arellano1991some}, and \cite{ahn1995efficient}. Both \cite{arellano1991some} and \cite{ahn1995efficient} stack all available exclusion restrictions and estimate the model using overidentified 2SLS or GMM. In contrast, we construct a just-identified linear combination of internal instruments and implement a one-step IV procedure. This delivers a correctly centered estimator. Overidentified procedures generally do not enjoy this property, as the combination of many moment conditions together with an estimated weighting matrix introduces additional randomness. When the degree of overidentification is large, this can result in substantial many-instrument bias \citep{alvarez2003time}. Different choices of the matrix 
$A$ satisfying restrictions (CC) and (POP) span a broad class of linear internal instruments and, in particular, include the infeasible optimally weighted instrument of \cite{arellano1991some} as a special case. \looseness=-1

\subsection{Asymptotic efficiency considerations}

Let $\mathcal{A}$ denote the set of $n \times n$ matrices satisfying conditions (POP) and (CC). This set forms a linear subspace within the space of $n \times n$ matrices. The full space of $n \times n$ matrices has dimension $n^2$. Condition (POP), which is equivalent to $AW = 0_{n \times K}$, imposes $nK$ linear constraints. Condition (CC) imposes an additional $L$ zero restrictions, where $L$ is the number of entries $A_{\tilde\ell \ell}$ required to be zero due to (CC). Therefore, the dimension of $\mathcal{A}$ is at least $n(n - K) - L$. If the dimension of $\mathcal{A}$ is positive, then our estimator class contains infinitely many candidates, raising the natural question: Which matrix $A$ should we use? As a default, we propose the choice
\begin{equation}\label{eq: optimization problem}
    A^*=\arg\min_{A\in \mathcal{A}}\|A-I_n\|_F=\arg\min_{A\in \mathcal{A}}\|A-M\|_F.
\end{equation}
The equality holds due to condition (POP). The Frobenius norm defines a Euclidean geometry on the space of matrices, with the corresponding inner product given by $\langle A, B \rangle_F = \tr(A'B)$. The optimization problem \eqref{eq: optimization problem} admits a unique solution: the orthogonal projection of $M$ onto the subspace $\mathcal{A}$ with respect to the Frobenius inner product. In the lemma below, we show that this choice of $A^*$ can be motivated by asymptotic efficiency considerations.

\begin{lemma}\label{lem: efficiency statement}
    Suppose Assumption~\ref{ass: panel model} holds and assume that $\E[ee' \!\mid\!x] = \sigma^2 I_n$ and $\E[xx'] = \lambda I_n$ for some constants $\sigma^2,\lambda > 0$. Consider the function    
    $$
    V(A)=\frac{\V(x'Ae)}{\left(\E[x'Ax]\right)^2}=\frac{\V(z'e)}{\left(\E[z'x]\right)^2}
    $$
    where $z = A' x$. Then the minimum of $V(A)$ over the set $\mathcal{A}$ is attained at $A^*$. 
\end{lemma}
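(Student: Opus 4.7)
The plan is to reduce $V(A)$ to a simple Rayleigh-like ratio in the Frobenius geometry and then exploit the fact that $A^*$ is, by construction, the Frobenius projection of $I_n$ onto the linear subspace $\mathcal{A}$. The key observation is that under the homoskedasticity and isotropy conditions, both the numerator and denominator of $V(A)$ reduce to clean quadratic expressions in $A$.

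First I would evaluate numerator and denominator explicitly. Because $A$ satisfies (POP) and (CC), Lemma~\ref{lem: correct centered class} yields $\E[x'Ae]=0$, so $\V(x'Ae)=\E[(x'Ae)^2]=\E[\tr(Aee'A'xx')]$. Applying the tower property with $\E[ee'\mid x]=\sigma^2 I_n$ gives $\E[\tr(Aee'A'xx')]=\sigma^2\E[\tr(AA'xx')]=\sigma^2\lambda\tr(AA')=\sigma^2\lambda\|A\|_F^2$, where the last step uses $\E[xx']=\lambda I_n$. Similarly, $\E[x'Ax]=\tr(A\,\E[xx'])=\lambda\tr(A)$. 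Thus on the subset of $\mathcal{A}$ where $\tr(A)\ne 0$,
$$V(A)=\frac{\sigma^2}{\lambda}\cdot\frac{\|A\|_F^2}{\tr(A)^2},$$
an expression that is invariant under scaling $A\mapsto cA$, so the minimization is really a direction-selection problem in $\mathcal{A}$.

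Next I would bring in the projection characterization of $A^*$. By the first-order condition for the Frobenius projection of $I_n$ onto the linear subspace $\mathcal{A}$, the residual $I_n-A^*$ is orthogonal to every element of $\mathcal{A}$, i.e.\ $\langle A,I_n-A^*\rangle_F=0$ for all $A\in\mathcal{A}$. Hence $\tr(A)=\langle A,I_n\rangle_F=\langle A,A^*\rangle_F$ for every $A\in\mathcal{A}$; taking $A=A^*$ gives the key identity $\tr(A^*)=\|A^*\|_F^2$.

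The inequality $V(A)\ge V(A^*)$ then follows from Cauchy--Schwarz: for each $A\in\mathcal{A}$,
$$\tr(A)^2=\langle A,A^*\rangle_F^2\le\|A\|_F^2\,\|A^*\|_F^2=\|A\|_F^2\,\tr(A^*),$$
so $\|A\|_F^2/\tr(A)^2\ge 1/\tr(A^*)$, with equality precisely when $A$ is a scalar multiple of $A^*$. Substituting back gives $V(A)\ge\sigma^2/(\lambda\tr(A^*))=V(A^*)$, which is the claim. I do not anticipate a serious obstacle: the entire argument rests on recognizing that the Frobenius projection characterization gives exactly the inner-product identity needed to apply Cauchy--Schwarz. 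The only minor point to check is nondegeneracy, namely $\tr(A^*)=\|A^*\|_F^2>0$, which is automatic whenever $\mathcal{A}$ contains any matrix with nonzero trace (otherwise all correctly centered estimators in the class have identically zero expected denominator and $V$ is not even defined).
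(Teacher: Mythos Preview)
Your proof is correct and is essentially the same argument as the paper's: both derive the projection first-order condition $\langle A, I_n - A^*\rangle_F = 0$ (the paper phrases it as $\E[z'z^*]=\E[z'x]$ via the prediction-error problem \eqref{eq: alternative optimization}) and then apply Cauchy--Schwarz. The only cosmetic difference is that you work directly in Frobenius geometry, whereas the paper carries out the identical steps in the instrument space $z=A'x$ and then identifies the two via $\E[xx']=\lambda I_n$.
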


The function $V(A)$ represents the asymptotic variance of the IV estimator under strong identification, i.e., when ${x'Ax}/{\E[x'Ax]} \xrightarrow{p} 1$. This efficiency criterion follows the classical approach of \cite{chamberlain1987asymptotic}. Lemma \ref{lem: efficiency statement} shows that, under ideal conditions, our proposed estimator using $A^*$ from \eqref{eq: optimization problem} achieves asymptotic efficiency within the class of correctly centered estimators. The two ideal conditions are: first, conditional homoskedasticity, $\E[ee' \!\mid\!x] = \sigma^2 I_n$, which implies that the optimal instrument $z^* $ minimizes the first stage prediction error  over all $z = A'x$ with $A \in \mathcal{A}$: 
\begin{equation}\label{eq: alternative optimization}
(A^*)'x=z^*=\arg\min_{z=A'x, A\in \mathcal{A}}\E\!\left[ \|z-x\|^2 \right].  
\end{equation}
Second, the assumption $\E[xx'] = \lambda I_n$ imposes stochastic homogeneity of the regressors and ensures the equivalence between the optimization problems \eqref{eq: optimization problem} and \eqref{eq: alternative optimization}.\looseness=-1

It is natural that the efficiency of an estimator depends on the distributional properties of the regressor $x$, especially in settings like ours where $x$ must be treated as random. Any prior information---such as the scale of $x$ or its dynamic dependence---can, in principle, be used to improve efficiency, for example, through cluster-specific weighting or time transformations (as in GLS). However, caution is needed: the matrix $A$ must remain independent of the randomness in either $x$ or $e$ to avoid bias, such as the many-instrument bias observed in the two-step GMM estimator of  \cite{arellano1991some}. If additional structure is available, such as $\E[xx'] = \Omega$, then asymptotic efficiency is attained by solving:
$$
A_{\Omega}^* = \arg\min_{A\in \mathcal{A}}\tr\left((A-I_n)'\Omega(A-I_n)\right) = \arg\min_{A\in \mathcal{A}} \|\Omega^{1/2}(A-I_n)\|_F.
$$
Most results in this paper extend naturally to this weighted setting, except for the leave-one-out projection characterization discussed below in Theorem 1\ref{thm: leave out3}.\looseness=-1

\subsection{Solution to the optimization problem}

This subsection derives and discusses the solution to the optimization problem \eqref{eq: optimization problem}.\looseness=-1

We begin by introducing what we refer to as leave-out projections. For each index $\tilde{\ell}$, define $\mathcal{E}_{\tilde{\ell},\cdot}$ to be the $\tilde{\ell}$th row of the matrix $\mathcal{E}$, which contains indicators for those indices $\ell$ such that $\E[x_{\tilde{\ell}} e_\ell] = 0$. In other words, $\mathcal{E}_{\tilde{\ell},\cdot}$ identifies all observations for which $x_{\tilde{\ell}}$ is uncorrelated with their error term. Let $\mathbf{i}_{\tilde{\ell}} = \operatorname{diag}(\mathcal{E}_{\tilde{\ell},\cdot})$ denote the $n \times n$ diagonal selection matrix with ones on the diagonal corresponding to entries where $\mathcal{E}_{\tilde{\ell} \ell} = 1$, and zeros elsewhere. Then define
$W_{\tilde{\ell}} = \mathbf{i}_{\tilde{\ell}} W,$
an $n \times K$ matrix that retains the rows $w_\ell$ for which $\mathcal{E}_{\tilde{\ell} \ell} = 1$, and sets the remaining rows to zero. The corresponding leave-out projection matrix is given by $M^{(\tilde\ell)}=I-W_{\tilde \ell}(W_{\tilde \ell}'W_{\tilde \ell})^{+}W_{\tilde \ell}',$ which partials out controls using only observations whose errors are uncorrelated with $x_{\tilde\ell}$. Unlike the full projection $M$, it excludes data points that violate exogeneity.\looseness=-1

\begin{thm}\label{thm: leave out}
    Assume $W$ has full rank. Let $A^*$ be the solution to the optimization problem \eqref{eq: optimization problem}. Then, $A^*$ admits the following three equivalent characterizations:
    \begin{enumerate}[label=(\roman*)]
    
        \item\label{thm: leave out1} Let the linear restrictions in (CC) be written as $\mathcal{L}vec(A)=0$, then
        $$ vec(A^*)=(I_{n^2}-\mathcal{L}'(\mathcal{L}\mathcal{L}')^+\mathcal{L})vec(M).
        $$
        
        \item\label{thm: leave out2} There exists an $n\times n$ matrix $B$ satisfying condition $B_{\tilde\ell\ell}=0$ whenever $ \mathcal{E}_{\tilde\ell\ell}=1$ such that
        $A^*=(I_n-B)M$. 
        
        \item\label{thm: leave out3}$A^*_{\tilde\ell\ell}=M^{(\tilde\ell)}_{\tilde\ell\ell}$, if $\mathcal{E}_{\tilde\ell\ell}=1$, and  $A^*_{\tilde\ell\ell}=0$, otherwise.

    \end{enumerate}
    
\end{thm}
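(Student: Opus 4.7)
The plan is to observe that the feasible set $\mathcal{A}$ is a linear subspace of $\R^{n\times n}$ under the Frobenius inner product, so $A^{*}$ in~\eqref{eq: optimization problem} is the unique orthogonal projection of $M$ onto $\mathcal{A}$. Each of the three parts describes this projection from a different angle: part~\ref{thm: leave out1} uses the vectorized projection formula, part~\ref{thm: leave out2} uses the orthogonality condition $M - A^{*} \in \mathcal{A}^\perp$ in matrix form, and part~\ref{thm: leave out3} uses a row-wise decomposition.

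Part~\ref{thm: leave out1} is immediate once all linear equations cutting out $\mathcal{A}$---the $nK$ equations from (POP) (equivalent to $AW=0$) together with the support zeros from (CC)---are stacked as $\mathcal{L}\,\mathrm{vec}(A)=0$: the stated expression is then the standard projection of $\mathrm{vec}(M)$ onto $\ker\mathcal{L}$. For part~\ref{thm: leave out2}, I would use $\mathcal{A}^\perp = \mathcal{A}_1^\perp + \mathcal{A}_2^\perp$, where $\mathcal{A}_1 = \{A:AW = 0\}$ and $\mathcal{A}_2$ encodes the (CC) zeros. A direct computation shows $\mathcal{A}_1^\perp = \{FW' : F\in \R^{n\times K}\}$ (matrices with rows in $\mathrm{col}(W)$) and $\mathcal{A}_2^\perp = \{B : B_{\tilde\ell\ell}=0 \text{ whenever } \mathcal{E}_{\tilde\ell\ell}=1\}$, so $A^{*} = M - FW' - B$ for some such $F$ and $B$. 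Imposing $A^{*}W = 0$ together with $MW = 0$ yields $F = -BW(W'W)^{-1}$, hence $FW' = B(M - I_n)$; substituting and simplifying gives $A^{*} = (I_n - B)M$ as required, and existence of such $B$ follows from the existence of $A^{*}$.

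For part~\ref{thm: leave out3}, I would exploit that $\|A - M\|_F^2 = \sum_{\tilde\ell}\|A_{\tilde\ell,:} - M_{\tilde\ell,:}\|^2$ and both constraints decouple across rows, so the optimization splits into $n$ row-wise least-squares problems. Fixing $\tilde\ell$ and letting $S = S_{\tilde\ell}:=\{\ell:\mathcal{E}_{\tilde\ell\ell}=1\}$, the (CC) constraint forces $A_{\tilde\ell,\ell}=0$ for $\ell\notin S$, and (POP) restricted to row $\tilde\ell$ collapses to $W_S' a_S = 0$, where $a_S = A_{\tilde\ell,S}^\prime$ and $W_S$ is the submatrix of $W$ on rows $S$. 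The row-$\tilde\ell$ subproblem is therefore to project $m_S := M_{\tilde\ell,S}^\prime$ onto $\ker W_S'\subset \R^{|S|}$, with projector $M_S := I_{|S|} - W_S(W_S'W_S)^+ W_S'$. Writing $m_S = e_{\tilde\ell} - W_S(W'W)^{-1}w_{\tilde\ell}$ and invoking the Moore--Penrose identity $W_S(W_S'W_S)^+(W_S'W_S) = W_S$ (which yields $M_S W_S = 0$), I would obtain $M_S m_S = M_S e_{\tilde\ell} = (M_S)_{:,\tilde\ell}$. Finally, $M^{(\tilde\ell)}$ has block structure $\mathrm{diag}(M_S, I_{|S^c|})$ in the $(S,S^c)$ partition---since the rows of $W_{\tilde\ell}$ on $S^c$ vanish---so $A^{*}_{\tilde\ell,\ell} = (M_S)_{\tilde\ell,\ell} = M^{(\tilde\ell)}_{\tilde\ell,\ell}$ for $\ell \in S$ and $0$ otherwise. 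The main obstacle will be this final step---matching $M_S m_S$ with the $\tilde\ell$-th column of the leave-out projection via the pseudoinverse identity together with the block-diagonal identification of $M_S$ as a principal submatrix of $M^{(\tilde\ell)}$; parts~\ref{thm: leave out1} and~\ref{thm: leave out2} reduce to routine applications of projection and orthogonality formulas.
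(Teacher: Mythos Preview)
Your argument is correct, and for parts (ii) and (iii) it follows a genuinely different route from the paper. For (ii), the paper first verifies that any $BM$ with $B$ supported on $\{(\tilde\ell,\ell):\mathcal{E}_{\tilde\ell\ell}=0\}$ is Frobenius-orthogonal to $\mathcal{A}$, and then \emph{constructs} $B$ explicitly by solving, row by row, the linear system forcing $(M-BM)_{\tilde\ell\ell}=0$ at the (CC) positions---obtaining the closed form $B_{\tilde\ell\ell_1}=\sum_{\ell} M_{\tilde\ell\ell}(M_{\tilde\ell}^{+})_{\ell\ell_1}$, which is later reused in Lemma~\ref{lem: assumption check}. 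Your orthogonal-complement decomposition $\mathcal{A}^\perp=\mathcal{A}_1^\perp+\mathcal{A}_2^\perp$ combined with $A^*W=0$ is slicker and bypasses the linear system, at the cost of not exhibiting $B$ explicitly. For (iii), the paper works \emph{backward} from (ii), applying a generalized Woodbury identity to $(W_{\tilde\ell}'W_{\tilde\ell})^{+}$ and matching the resulting expression for $M^{(\tilde\ell)}_{\tilde\ell\tilde\ell_1}$ to the $(I-B)M$ formula; your row-wise decoupling instead computes the projection directly and needs only $M_SW_S=0$ together with the block-diagonal structure of $M^{(\tilde\ell)}$, which is more elementary and does not rely on (ii). One remark on (i): you fold the (POP) equations into $\mathcal{L}$, whereas the paper's statement and proof list only the (CC) constraints in $\mathcal{L}$---your reading is the one actually needed for the projection formula to return a matrix in $\mathcal{A}$.
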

Characterization \ref{thm: leave out1} in Theorem \ref{thm: leave out} highlights the analytical tractability of $A^*$ by noting its connection with linear projection.\looseness=-1

Characterization \ref{thm: leave out2} shows that $A^*$ closely resembles the OLS projection matrix $M$, with the difference given by $A^* - M = -BM$, where $B$ is a sparse matrix. Specifically, $B_{\tilde\ell\ell}$ is nonzero only if $\E[x_{\tilde\ell} e_\ell] \ne 0$. As a result, $B$ is block-diagonal (with blocks corresponding to clusters) and has zeros on the main diagonal. Thus, $A^*$ can be computed separately within each cluster, which is computationally appealing. \looseness=-1

Finally, part \ref{thm: leave out3} of Theorem \ref{thm: leave out} offers an intuitive interpretation of the solution. For each observation $\tilde\ell$, we perform an observation-specific transformation of the original model \eqref{eq: panel model} by partialling out the control $w_{\tilde\ell}$ using only observations uncorrelated with $x_{\tilde\ell}$.  Define the residualized outcome as $ y_{\tilde\ell}^*=y_{\tilde\ell}-w_{\tilde\ell}^\prime\hat\delta^y_{{\tilde\ell}}$, where $\hat\delta^y_{{\tilde\ell}}$ is obtained from regressing $y$ on $w$ using only observations with errors uncorrelated with $x_{\tilde\ell}$. Similarly define $x_{\tilde\ell}^*$ and $e_{\tilde\ell}^*$. These definitions yield the transformed equation: \looseness=-1
\begin{equation}\label{eq: differenced equation}
    y_{\tilde\ell}^*=\beta x_{\tilde\ell}^*+e_{\tilde\ell}^*.
\end{equation}
Since $e^*_{\tilde\ell}$ is constructed using only  observations uncorrelated with $x_{\tilde\ell}$, we have $\E[x_{\tilde\ell}e_{\tilde\ell}^*]=0$. Hence, $x_{\tilde\ell}$ is a valid instrument in \eqref{eq: differenced equation}, and the IV estimator is $\hat\beta=\frac{\sum_{\tilde\ell}x_{\tilde\ell}y^*_{\tilde\ell}}{\sum_{\tilde\ell}x_{\tilde\ell}x^*_{\tilde\ell}}=\frac{x'A^*y}{x'A^*x}=\hat\beta^{A^*}$.\looseness=-1

\begin{example}[Panel data with fixed effects]\label{ex: 4}
    Consider the model 
    $$
    y_{it}=\alpha_i+\beta x_{it}+e_{it}, \mbox{   with    } \E[e_{it} \!\mid\!x_{it},x_{i,t-1},\dots]=0,
    $$
    where the error is unpredictable given current and past values of the regressor, but may affect future values. In this case, the optimal matrix $A^*$ has a block-diagonal structure, with each block corresponding to a cluster. Our estimator transforms the data by demeaning $y_{it}$ and $x_{it}$ using only current and future observations:
\[y_{it}^*=y_{it}-\frac{1}{T_i-t+1}\sum_{s=t}^{T_i}y_{is}, ~~~ x_{it}^*=x_{it}-\frac{1}{T_i-t+1}\sum_{s=t}^{T_i}x_{is}\]
and estimates the regression $y_{it}^* = \beta x_{it}^* + e_{it}^*$ using $x_{it}$ as an instrument. This estimator lies within the class proposed by \cite{arellano1991some}, but applies more broadly. See also \cite{hayashi1983nearly,arellano1995another}.\looseness=-1 \qed
\end{example}

\begin{example}[Panel data with fixed effects  with a limited time feedback]\label{ex: 5}
    Consider a setting 
    $$
    y_{it}=\alpha_i+\beta x_{it}+e_{it}, \mbox{   with    } \E[e_{it} \!\mid\!\{x_{is},s\neq t+1\}]=0.
    $$
    Unlike Example \ref{ex: 4}, the error $e_{it}$ may affect the next period’s regressor $x_{i,t+1}$ but not beyond; feedback is limited to one period. In this case, the optimal matrix $A^*$ again has a block-diagonal structure across clusters. For each observation, we demean $y_{it}$ and $x_{it}$ using all time periods except $t-1$, which may be endogenous:
\[y_{it}^*=y_{it}-\frac{1}{T_i-1}\sum_{s\neq t-1}y_{is}, ~~~ x_{it}^*=x_{it}-\frac{1}{T_i-1}\sum_{s\neq t-1}x_{is}, \mbox{  for   } t>1,\]
and for $t = 1$, we use the full mean:
$
y_{i1}^*=y_{i1}-\frac{1}{T_i}\sum_{s=1}^{T_i}y_{is}, ~x_{i1}^*=x_{i1}-\frac{1}{T_i}\sum_{s=1}^{T_i}x_{is}.
$
We then estimate the regression $y_{it}^* = \beta x_{it}^* + e_{it}^*$ using $x_{it}$ as the instrument.\looseness=-1   \qed 
\end{example}

\paragraph{Price of robustness.} 
Our estimator can be seen as a robust alternative to OLS when strict exogeneity, $\E[e_\ell \!\mid\!x] = 0$, is not credible. The exclusion matrix $\mathcal{E}$ determines the types of violations it is robust to: Example \ref{ex: 5} allows $\E[x_{i,t+1}e_{it}] \ne 0$, while Example \ref{ex: 4} allows broader future dependence. A natural question is whether this robustness reduces efficiency when strict exogeneity holds. Under the assumptions of Lemma \ref{lem: efficiency statement}, we have
$
V(A^*)=\frac{\sigma^2}{\lambda\tr(A^*)}=\frac{\sigma^2}{\lambda\|A^*\|_F^2},
$
where $\tr(A^*)$ reflects the effective sample size.
In the standard panel model with strict exogeneity and homoskedasticity, OLS achieves the efficiency bound with $\tr(M)=\sum_{i=1}^N(T_i-1)=n-N$, as one degree of freedom is lost per cluster due to demeaning. Each block $M_{ii}$ contributes $\tr(M_{ii}) = T_i - 1$.\looseness=-1

In Example \ref{ex: 5}, we have $\tr(A_{ii}) = T_i - 1 - \frac{1}{T_i}$; the term $-\frac{1}{T_i}$ reflects the loss of information due to robustness against one-period feedback. In Example \ref{ex: 4}, $\tr(A_{ii})=T_i-1-\sum_{t=2}^{T_i}\frac{1}{t}$. With $\sum_{t=2}^{T_i}\frac{1}{t} \ge \log( T_i+1) - \log(2)$, this shows a substantially greater efficiency loss when guarding against dependence on any future regressor. Finally, in a panel model with fixed effects, assuming contemporaneous exogeneity only $\E[x_\ell e_\ell] = 0$ leads to $A^* = 0$. In this case, the model contains no usable identifying variation. \looseness=-1

\begin{rem}
If one considers the design-based model \eqref{eq: design model} under Assumption~\ref{ass: design model}, then a minor modification of Theorem~\ref{thm: leave out}, part~\ref{thm: leave out3}, is required. The procedure begins by constructing a proxy for \( v_\ell \), which is obtained as the \(\ell\)th residual from regression \eqref{eq: design model}, estimated on a subsample consisting only of observations \( \tilde\ell \) for which \( \mathcal{E}_{\tilde\ell\ell} = 1 \). This residual is then used as an instrument in a just-identified instrumental variables (IV) regression of \( y_\ell \) on \( x_\ell \). \qed
\end{rem}

\section{Quantification of uncertainty}\label{sec: variance of numerator}

\subsection{Importance of cross-dependence}

Let $x_i$ denote the $T_i \times 1$ vector of regressor values within cluster $S_i$---that is, all $x_\ell$ for which $i(\ell) = i$---and define $e_i$ analogously. Due to the cluster structure, the pairs $(x_i, e_i)$ are independent across $i$. Let $A_{ij}$ denote the $T_i \times T_j$ blocks of the matrix $A\in\mathcal{A}$, corresponding to observations in clusters $S_i$ and $S_j$, respectively. The numerators of the estimation errors take the form of quadratic expressions:
$$
\quad
\hat\beta^A-\beta=\frac{x'Ae}{x'Ax} 
\quad \text{and} \quad
x'Ae=\sum_{i,j}x_i'A_{ij}e_j.
$$
As these are quadratic forms in random $(x_i,e_i)$, the variance of the estimators has a non-standard structure. In particular, establishing asymptotic normality may require a CLT for quadratic forms. \looseness=-1

\paragraph{Special case: block-diagonal structure.} It is worth highlighting a special case ---where the numerator reduces to linear forms. If the off-diagonal blocks $A_{ij}$ are equal to zero for all $i\neq j$, then $x'Ae=\sum_i x_i'A_{ii}e_i$.
Since $(x_i, e_i)$ are independent across clusters, standard inference follows directly from the law of large numbers and central limit theorem, provided the number of clusters $N \to \infty$ and that the clusters satisfy a negligibility condition. This scenario arises in  Examples \ref{ex: 4} and \ref{ex: 5}  when only cluster fixed effects $\alpha_i$ are included in $W$. In that case, usual clustered standard errors can be used.  \looseness=-1

However, this structure breaks down when more complex controls are included. \looseness=-1

\begin{example}[Multi-way fixed effects]\label{ex: 6}
 Consider the model:
\begin{equation}
    y_{it}=\alpha_i+\gamma_{g(i,t)}+\beta x_{it}+e_{it},
\end{equation}
where $\alpha_i$ is an individual fixed effect, $g(i,t)$ denotes the group (e.g., teacher or classroom) assigned to individual $i$ at time $t$, and $\gamma_g$ is a group fixed effect. For instance, $y_{it}$ may represent student $i$’s academic achievement in year $t$, and $x_{it}$ could be last year’s score (to study persistence of achievement) or an intervention determined by past performance. The fixed effect $\alpha_i$ captures student ability, while $\gamma_g$ represents teacher or classroom effects. The clustering is assumed at the individual level, and we assume weak exogeneity: $\E[e_{it} \!\mid\!\{x_{is}, s\geq t\}]=0$. \looseness=-1

 Assume all students are observed in two periods. In year 1, students are split evenly between teachers $g = 1$ and $g = 2$; in year 2, they are randomly reassigned to $g = 3$ and $g = 4$. Each classroom has the same number of students.
For a student $i$ with teachers $(1,3)$, the differenced equation is:
$$
\Delta y_{i}=y_{i2}-y_{i1}=\gamma_3-\gamma_1+\beta\Delta x_i+\Delta e_i.
$$
OLS partials out teachers' effects and transforms this to:
$$
y_i^*=\Delta y_i-\left\{\frac{3}{4}\overline{\Delta y}_{(1,3)}+\frac{1}{4}\overline{\Delta y}_{(1,4)}+\frac{1}{4}\overline{\Delta y}_{(2,3)}-\frac{1}{4}\overline{\Delta y}_{(2,4)}\right\},
$$
where the term in brackets estimates $\gamma_3 - \gamma_1$, and $(2,3)$ refers to the group of students who had teacher $g = 2$ in year 1 and $g = 3$ in year 2, and the upper bar denotes the average over the group.  Similar transformations apply to $x_i^*$ and $e_i^*$. Our estimator $\hat\beta^{A^*}$ is IV on the equation $y_i^* = \beta x_i^* + e_i^*$, using $x_{i1}$ as an instrument:
$\hat\beta^{A^*}-\beta=\frac{\sum_{i=1}^N x_{i1}e_i^*}{\sum_{i=1}^Nx_{i1}x_i^*}$. But since $e_i^*$ depends on all $e_{jt}$, the terms $x_{i1} e_i^*$ are dependent across $i$. Here, $A^*$ is not block-diagonal, making standard inference approaches invalid and motivating the need for a different variance estimator. \qed\looseness=-1
\end{example}

Example \ref{ex: 6} highlights several key points. First, the difference between $e_i^*$ and $e_i$ reflects estimation error from recovering the teacher effects $\gamma_g$. Usage of the full sample to estimate these effects mixes observations and induces complex dependence, complicating inference. If $\gamma_g$ estimates are very precise, then $\sum_i x_{i1} e_i^*$ will be close to $\sum_i x_{i1} e_i$, and ignoring the dependence may still yield valid inference.\looseness=-1

Second, the OLS transformation for a student in group $(1,3)$ involves not only students with shared teachers ($g=1$ or $g=3$), but also those indirectly connected, e.g., students in $(2,4)$, who share a teacher with others in $(1,4)$ or $(2,3)$. More generally, we can view students as nodes in a network, where edges represent shared teachers. OLS demeaning propagates information across this network, creating strong dependencies within connected components and complicating standard inference methods.\footnote{While OLS demeaning is efficient under the benchmark conditions of Lemma \ref{lem: efficiency statement}, one may prefer a less efficient estimator that induces less dependence. In Example \ref{ex: 6}, instead of using the full sample to estimate $\gamma_3 - \gamma_1$, one could rely only on students with the same teacher history, e.g., $\overline{\Delta y}_{(1,3)}.$ This results in independent “super-clusters” defined by the teacher pairs: $(1,3)$, $(1,4)$, $(2,3)$, and $(2,4)$. One can go further by forming pairs of students with identical teacher histories and differencing them, creating many small, independent clusters, then using clustered standard errors on that level. However, this comes at a cost: the resulting estimator may be highly inefficient due to poor estimation of fixed effects. This strategy is explored in \cite{verdier2018estimation}. \looseness=-1}

\subsection{The correct formula for quantifying uncertainty }

Denote $\lambda_i = \E[x_i]$ and $v_i = x_i - \lambda_i$. Then,
$$
x'Ae=\sum_{j=1}^N\left(\sum_{i=1}^N\lambda_i^\prime A_{ij}+v_j'A_{jj}\right)e_j+\sum_{j=1}^N\left(\sum_{i\neq j}v_i'A_{ij}\right)e_j=\sum_{j=1}^N\omega_j e_j+\sum_{j=1}^NQ_je_j.
$$
where $Q_j = \sum_{i \ne j} v_i' A_{ij}$ is mean-zero and independent of $(e_j, v_j)$. The above is the so-called Hoeffding decomposition \citep{hoeffding1948class}; the two sums are the first two U-statistics, which are uncorrelated with each other by construction. Thus, the variance decomposes as: \looseness=-1
$$
\V(x'Ae)=\V\left(\sum_{j=1}^n\omega_je_j\right)+\V\left(\sum_{j=1}^nQ_je_j\right).
$$
The first term is the first order U-statistic with $\V\left(\sum_{j=1}^n\omega_je_j\right)=\sum_{j=1}^n\V\left(\omega_je_j\right)$. The second term has correlated summands:
\begin{align}
\label{eq: quad_var}
    \V\left(\sum_{j=1}^nQ_je_j\right) = \sum_{j=1}^n\V\left(Q_je_j\right) + \sum_{j=1}^N\sum_{i\neq j}\E[v_i'A_{ij}e_jv_j'A_{ji}e_i].
\end{align}
Standard cluster-robust variance formulas omits the final term $\sum_{j=1}^N\sum_{i\neq j}\E[v_i'A_{ij}e_jv_j'A_{ji}e_i]$, which captures cross-cluster dependence. This term vanishes in two important cases: (i) under strict exogeneity, $\E[e_j v_j'] = 0$ for all $j$; or (ii) when $A$ is block-diagonal, i.e., $A_{ij} = 0$ for $i \ne j$. These are the settings where standard clustered standard errors yield valid inference. \looseness=-1

\section{Asymptotic Gaussianity}\label{sec: CLT}

We establish a result on the asymptotic Gaussianity of the numerator of the estimation error, $x'Ae$. This result enables the construction of a hypothesis test for $H_0: \beta = \beta_0$, as well as the development of a confidence set for $\beta$ in the presence of clustered data. Implementation details are provided in Section~\ref{sec: weak id}. Subsection~\ref{subsec: nontech} presents a non-technical exposition and discusses the assumptions required to establish Gaussianity. Subsection~\ref{subsec: tech} is intended for theoretical econometricians and contains the technical details. Readers whose focus is more applied may skip it.

\subsection{Non-technical exposition}\label{subsec: nontech}

Two main challenges arise when establishing Gaussianity. First, the statistic includes not only a linear term but also a quadratic form in random shocks. Second, intra-cluster dependence introduces complexities, raising the question of how to restrict such dependence and what trade-offs arise between dependence and cluster size. Our asymptotic framework requires the number of clusters to grow, i.e., $N \to \infty$.

The first challenge is addressed by developing a new Central Limit Theorem (Lemma \ref{lem: CLT for quadratics}) for quadratic forms of independent, normalized random vectors. These vectors represent observations within each cluster. The theorem establishes that, under certain negligibility conditions, a quadratic form involving matrix-weighted sums of such vectors converges in distribution to a normal distribution.  The core requirement is that the contribution of any single cluster—as measured, for instance, by the Frobenius norm of the corresponding submatrix—becomes negligible relative to the total variance. A key strength of this result is that it avoids direct control over cluster sizes; the cluster size enters only implicitly through restrictions on the weight matrices.

Our main result---Gaussianity of $x'Ae$---is formalized in Theorem~\ref{thm: gaussianity} for a general matrix $A$ and later specialized to our proposed estimator matrix $A^*$. Theorem~\ref{thm: gaussianity} combines a Lyapunov-type CLT for linear forms with the new quadratic CLT. Two sets of assumptions are required.

The first set (Assumption~\ref{ass: gaussianity}) imposes conditions on the distribution of errors, including moment bounds and tail behavior within clusters. Specifically, any normalized linear combination of errors within a cluster (i.e., one with unit variance) must have a uniformly bounded fourth moment. Likewise, normalized combinations of $e_i v_i'$ must have uniformly bounded $(2 + 2\delta)$ moments. These assumptions ensure that intra-cluster dependence is well-approximated by the covariance matrix and rule out extreme tail dependence, such as aligned outliers not reflected in the covariance. Further, we require that certain correlations are bounded away from one and that variances are bounded away from zero, thereby excluding cases of near-perfect predictability.

The second set (Assumption~\ref{ass: gaussianity matrix A}) concerns the interaction between the weight matrix $A$ and the operator norms of the cluster-specific covariance matrices $\Sigma_i$. These assumptions ensure that each cluster's contribution to the total variance remains asymptotically negligible. By working with $\|\Sigma_i\|$, we allow for a trade-off between cluster size and intra-cluster dependence. Two stylized examples clarify the implications for the quadratic form (Assumption~\ref{ass: gaussianity matrix A}\ref{ass: gaussianity matrix A3}) that typically requires stronger restrictions :

\begin{itemize}
      \item In the presence of a pervasive cluster-level shock (e.g., a factor structure), errors within a cluster may be strongly correlated. Then $\|\Sigma_i\|$ can grow proportionally with cluster size $T_i$,  up to $\|\Sigma_i\| \leq C T_i$. To achieve Gaussianity of the quadratic term in this case, stronger conditions must be imposed on the weight matrix and cluster size, such as $\max_i T_i^4 / n \to 0$.
       
    \item At the opposite extreme, if no pervasive shocks exist and dependence decays rapidly with distance (in some metric) within the cluster, then $\|\Sigma_i\|$ may remain uniformly bounded regardless of cluster size. In such cases, weaker conditions, such as $\max_i T_i^2 / n \to 0$  suffice.       
\end{itemize}

\subsection{Technical details}\label{subsec: tech}

\begin{lemma}\label{lem: CLT for quadratics}
Let $\xi_i$ be independent $T_i \times 1$ random vectors with $\E[\xi_i] = 0$, $\V(\xi_i) = I_{T_i}$, and $\E|\tau' \xi_i|^4 \leq C \| \tau \|^4$ for all $T_i \times 1$ vectors $\tau$. Let $\Gamma$ be an $n \times n$ block matrix with blocks $\Gamma_{ij}$ of size $T_i \times T_j$, symmetric across the diagonal ($\Gamma_{ij} = \Gamma_{ji}'$), and with $\Gamma_{ii} = 0$ for all $i$. Assume: \looseness=-1
$$ \frac{\max_i\left(\sum_{j\neq i}\|\Gamma_{ij}\|_F^2\right)}{\sum_i\sum_{j\neq i}\|\Gamma_{ij}\|_F^2}\to 0\mbox{ 
     and     }\frac{\|\Gamma\|^2}{\|\Gamma\|_F^2}\to 0.$$ 
    Then as $N\to\infty$
    $$\frac{1}{\sqrt{V}}\sum_{i=1}^N\sum_{j\neq i} \xi_i'\Gamma_{ij}\xi_j\xrightarrow{d} N(0,1),~~~ \mbox{where  
   }  V=2\|\Gamma\|_F^2=2\sum_i\sum_{j\neq i}\|\Gamma_{ij}\|_F^2.
    $$
\end{lemma}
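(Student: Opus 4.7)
The plan is to establish this CLT by reducing it to a martingale CLT applied to a cleverly ordered sum, in the spirit of de Jong's 1987 CLT for degenerate quadratic forms, adapted to the block/vector setting. Order the clusters arbitrarily $1,\ldots,N$ and exploit the block symmetry $\Gamma_{ij}=\Gamma_{ji}'$ to collapse the sum:
$$ S_N := \sum_{i\ne j}\xi_i'\Gamma_{ij}\xi_j \;=\; 2\sum_{k=1}^N \xi_k'u_k, \qquad u_k := \sum_{j<k}\Gamma_{kj}\xi_j. $$
With the filtration $\mathcal{F}_k=\sigma(\xi_1,\ldots,\xi_k)$, the terms $D_k:=2\xi_k'u_k$ form a martingale difference array because $u_k$ is $\mathcal{F}_{k-1}$-measurable and $\E[\xi_k\mid\mathcal{F}_{k-1}]=0$; the diagonal-zero assumption $\Gamma_{ii}=0$ is essential here. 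I would invoke a standard martingale CLT (e.g.\ Hall and Heyde, Theorem~3.2), reducing the conclusion to (a) the conditional-variance convergence $V^{-1}\sum_k\E[D_k^2\mid\mathcal{F}_{k-1}]\xrightarrow{p}1$, and (b) the Lyapunov bound $V^{-2}\sum_k\E[D_k^4]\to 0$.

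The Lyapunov step (b) is the easier one. Since $u_k\in\mathcal{F}_{k-1}$, the fourth-moment hypothesis applied to $\xi_k$ gives $\E[D_k^4\mid\mathcal{F}_{k-1}]\le 16C\|u_k\|^4$. Applying the same moment bound component-wise to the sum defining $u_k$ (which is itself a linear combination of independent mean-zero vectors), a Cauchy--Schwarz argument yields $\E\|u_k\|^4\le C'(\E\|u_k\|^2)^2=C'(\sum_{j<k}\|\Gamma_{kj}\|_F^2)^2$. Summing and pulling out the maximum row sum,
$$ \sum_k\E[D_k^4]\;\le\;16CC'\,\max_k\!\Bigl(\sum_{j<k}\|\Gamma_{kj}\|_F^2\Bigr)\cdot\sum_k\sum_{j<k}\|\Gamma_{kj}\|_F^2, $$
which is $o(V)\cdot V/4=o(V^2)$ by the first (max-row-sum) assumption on $\Gamma$.

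For (a), direct computation gives $\E[D_k^2\mid\mathcal{F}_{k-1}]=4\|u_k\|^2$ and $\E[\sum_k 4\|u_k\|^2]=V$, so the task reduces to $\V(\sum_k\|u_k\|^2)=o(V^2)$. Expanding produces
$$ \sum_k\|u_k\|^2 \;=\; \sum_j\xi_j'P_j\xi_j \;+\; \sum_{j\ne l}\xi_j'B_{jl}\xi_l, $$
where $P_j=\sum_{k>j}\Gamma_{kj}'\Gamma_{kj}$ and $B_{jl}=\sum_{k>\max(j,l)}\Gamma_{kj}'\Gamma_{kl}$ are Gram-type matrices built from sub-blocks of $\Gamma$. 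For the diagonal part I would apply the fourth-moment bound $\V(\xi_j'P_j\xi_j)\le C\|P_j\|_F^2\le C\|P_j\|\cdot\tr(P_j)$, combined with the key observation that each $P_j$ is the Gram matrix of a sub-block-column of $\Gamma$, so that $\|P_j\|\le\|\Gamma\|^2$. Summing yields the bound $C\|\Gamma\|^2\cdot V/4$, which is $o(V^2)$ under the operator-norm condition.

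The hardest step is controlling the off-diagonal piece $\sum_{j\ne l}\xi_j'B_{jl}\xi_l$, whose variance equals $4\sum_{j<l}\|B_{jl}\|_F^2$. The critical observation is that the block matrix $B$ with blocks $B_{jl}$ coincides with $\tilde\Gamma'\tilde\Gamma$, where $\tilde\Gamma$ denotes the block strictly lower-triangular part of $\Gamma$; consequently $\sum_{jl}\|B_{jl}\|_F^2=\|\tilde\Gamma'\tilde\Gamma\|_F^2\le\|\tilde\Gamma\|^2\|\tilde\Gamma\|_F^2$ with $\|\tilde\Gamma\|_F^2=\|\Gamma\|_F^2/2$. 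The delicate point is bounding $\|\tilde\Gamma\|$ in terms of $\|\Gamma\|$, since triangular truncation can in principle inflate the operator norm; I would handle this either by accepting a tolerable (at worst logarithmic) loss that remains $o(\|\Gamma\|_F^2)$ under the second assumption, or by a symmetry-preserving second-moment calculation carried out directly on $\Gamma$ and exploiting the freedom to re-order clusters, thereby avoiding the truncation altogether. In either route, the operator-norm condition $\|\Gamma\|^2=o(\|\Gamma\|_F^2)$ is precisely what rules out a low-rank, chi-squared-type limit and secures the Gaussian conclusion.
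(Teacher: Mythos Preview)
Your martingale route is a legitimate alternative to the paper's approach, but as written it contains two gaps, one minor and one substantive.

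\textbf{Diagonal term.} The inequality $\V(\xi_j'P_j\xi_j)\le C\|P_j\|_F^2$ does \emph{not} follow from the hypothesis $\E|\tau'\xi_j|^4\le C\|\tau\|^4$. A counterexample is $\xi_j=Z\epsilon$ with $\epsilon\sim N(0,I_{T_j})$ and $Z$ an independent scalar with $\E Z^2=1$, $\E Z^4=K>1$: the fourth-moment hypothesis holds with constant $3K$, yet for $P_j=I_{T_j}$ one has $\V(\xi_j'P_j\xi_j)=(K-1)T_j^2+2KT_j$, which is not $O(T_j)=O(\|P_j\|_F^2)$. Only the cruder bound $\V(\xi_j'P_j\xi_j)\le C(\tr P_j)^2$ is available (this is the paper's Lemma~\ref{lem: helpful inequalities}). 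Fortunately this suffices: $\sum_j(\tr P_j)^2\le\max_j\bigl(\sum_{k\ne j}\|\Gamma_{kj}\|_F^2\bigr)\cdot\tfrac{V}{2}=o(V^2)$ by the row-sum assumption, so the diagonal part survives, but via the first hypothesis rather than the second.

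\textbf{Off-diagonal term.} This is where the real difficulty lies. Your bound $\sum_{j,l}\|B_{jl}\|_F^2=\|\tilde\Gamma'\tilde\Gamma\|_F^2\le\|\tilde\Gamma\|^2\|\tilde\Gamma\|_F^2$ is correct, but controlling $\|\tilde\Gamma\|$ by $\|\Gamma\|$ is exactly the block-triangular truncation problem, and the worst-case blow-up is $\Theta(\log N)$. Accepting that loss would require $(\log N)^2\|\Gamma\|^2/\|\Gamma\|_F^2\to0$, which is strictly stronger than the stated hypothesis, so the ``tolerable log loss'' route does not prove the lemma as written. The alternative you gesture at---a symmetry-preserving calculation or cluster reordering---does not help within the martingale framework: \emph{any} ordering induces a triangular truncation, and the log factor is intrinsic to the filtration structure, not to the particular order chosen.

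\textbf{Contrast with the paper.} The paper avoids the truncation problem altogether by invoking a Stein-type CLT (Lemma~A2.1 of \cite{solvsten2020robust}) based on leave-one-out perturbations $\Delta_iW=V^{-1/2}(\xi_i-\tilde\xi_i)'\sum_{j\ne i}\Gamma_{ij}\xi_j$, where $\tilde\xi_i$ is an independent copy. This device is symmetric across clusters---no filtration, no ordering---so all conditions are expressed directly in $\Gamma$ and $\Gamma^2$, never in a triangular piece. The verification of the analogue of your conditional-variance step (their condition~(i)) then reduces to bounds involving $\|\Gamma\|^2/\|\Gamma\|_F^2$ and $\max_i\sum_j\|\Gamma_{ij}\|_F^2/\|\Gamma\|_F^2$ with no loss. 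That is the concrete payoff of the Stein route over the martingale one here: it matches the symmetric hypotheses of the lemma without the artificial asymmetry that an ordering introduces.
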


Let $\Sigma_i$ denote the covariance matrix of $(e_i', v_i')'$, with submatrices $\Sigma_{e,i}$ and $\Sigma_{v,i}$. Define the normalized shock vector for cluster $i$ as $\xi_i = \Sigma_i^{-1/2} (e_i', v_i')'$, which is a $2T_i \times 1$ vector. Also define the cross-covariance matrix $\Psi_i = \E[e_i (v_i \otimes e_i)']$ of dimension $T_i \times T_i^2$, and the variance matrix $\Phi_i = \V(v_i \otimes e_i)$ of dimension $T_i^2 \times T_i^2$. \looseness=-1

\begin{ass} \label{ass: gaussianity} 
The cluster-specific errors $(e_i', v_i')'$ satisfy the following conditions:\looseness=-1
    
\begin{enumerate}[label=(\roman*)]
    \item\label{ass: gaussianity1} 
    Each $\Sigma_i$ is invertible, with $\|\Sigma_i^{-1}\| \leq C$, and $\xi_i$ are independent, mean-zero vectors satisfying $\E[v_i' A_{ii} e_i] = 0$.\looseness=-1        
       
    \item\label{ass: gaussianity2} 
    Lower bound on variance ratio:
    $
    \min_i \min_{\tau} \frac{\tau' \Phi_i \tau}{\tau'(\Sigma_{v,i} \otimes \Sigma_{e,i}) \tau} > c;
   $ \looseness=-1
   
   \item\label{ass: gaussianity3} 
   Bounded alignment:
        $\max_i \left\| \Sigma_{e,i}^{-1/2} \Psi_i \Phi_i^{-1/2} \right\| \leq 1 - c;$    \looseness=-1
    
    \item\label{ass: gaussianity4} 
    Higher-order moment bounds:
       $\E |\tau' \xi_i|^4 \leq C \| \tau \|^4,$ and $
    \E (\tau' (v_i \otimes e_i))^{2 + 2\delta} \leq C \left( \tau' \Phi_i \tau \right)^{1 + \delta},
    $
    for some $\delta > 0$.\looseness=-1
\end{enumerate}
\end{ass}

\noindent The assumptions above ensure two linear CLTs: one for sums of $e_i$ and one for sums of $v_i \otimes e_i$. Condition \ref{ass: gaussianity2} guarantees that $(\Sigma_{v,i}^{-1/2} v_i) \otimes (\Sigma_{e,i}^{-1/2} e_i)$ has eigenvalues bounded away from zero, preventing near-degeneracy. Condition \ref{ass: gaussianity3} ensures that correlations between $e_i$ and $v_i \otimes e_i$ remain bounded away from one. Condition  \ref{ass: gaussianity4} rules out heavy-tailed behavior and strong higher-order dependence that could dominate the covariance-based approximation.\looseness=-1

\begin{ass}\label{ass: gaussianity matrix A}
Let $A$ be an $n \times n$ matrix such that
\begin{enumerate}[label=(\roman*)]
    \item\label{ass: gaussianity matrix A1}$\sum_{i=1}^N(A'\lambda)_i^\prime\Sigma_{e,i}(A'\lambda)_i+\| A\|_F^2\geq cn$;\looseness=-1

    \item\label{ass: gaussianity matrix A2} $\frac{\max_i\|\Sigma_i\|^2\|A_{ii}\|^2_F}{n}\to 0$ and $\frac{\max_i(A'\lambda)_i^\prime\Sigma_{e,i}(A'\lambda)_i}{n}\to 0$ ;\looseness=-1
         
    \item\label{ass: gaussianity matrix A3} either one of the two conditions hold: 
    \begin{itemize}
        \item[(a)] $ \frac{\max_i\|\Sigma_i\|^2\cdot\sum_{i=1}^N\sum_{j\neq i}\|A_{ij}\|_F^2}{n}\to 0$
        \item[(b)] $\frac{\max_i\|\Sigma_i\|^2\cdot\max_i\sum_{j=1}^N( \| A_{ij}\|_F^2+\| A_{ji}\|_F^2)}{n}\to 0$ and $\frac{\| A\|^2\cdot\max_i\|\Sigma_i\|^2}{n}\to 0$.
    \end{itemize}
        
\end{enumerate}
\end{ass}

Part \ref{ass: gaussianity matrix A1} guarantees that the normalization in the CLT is no slower than $1/\sqrt{n}$. Part \ref{ass: gaussianity matrix A2} ensures cluster negligibility for the linear CLT, Part \ref{ass: gaussianity matrix A3} (b) does so for the quadratic CLT, while Part \ref{ass: gaussianity matrix A3} (a) assumes that the quadratic part is negligible in comparison to the linear one. \looseness=-1

\begin{thm}\label{thm: gaussianity}
Let $x = \lambda + v$, where $\lambda = \E[x]$, and suppose Assumptions~\ref{ass: gaussianity} and~\ref{ass: gaussianity matrix A} hold. Then, as $N \to \infty$, $\frac{x'Ae}{\omega} \xrightarrow{d} \mathcal{N}(0,1),$ where  $\omega^2 = \V(x'Ae) \geq c n $ for some  $c > 0.$
\end{thm}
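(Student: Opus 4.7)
The plan is to use the Hoeffding-type decomposition already developed in Section~\ref{sec: variance of numerator}, writing $x'Ae = L + D + C$ with $L = \lambda'Ae = \sum_j (A'\lambda)_j'e_j$, $D = \sum_j v_j'A_{jj}e_j$ (mean-zero by Assumption~\ref{ass: gaussianity}\ref{ass: gaussianity1}), and $C = \sum_{i\neq j} v_i'A_{ij}e_j$. Since $L + D = \sum_j U_j$ with $U_j = ((A'\lambda)_j' + v_j'A_{jj})e_j$ depending only on $(e_j,v_j)$, the $U_j$ are independent across clusters, and cluster independence further gives $\V(x'Ae) = \V(L+D) + \V(C)$; the lower bound $\omega^2 \geq cn$ comes from Assumption~\ref{ass: gaussianity matrix A}\ref{ass: gaussianity matrix A1}. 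The two pieces call for different CLTs because $L+D$ is a sum of independent cluster-wise terms, whereas $C$ is a second-order degenerate $U$-statistic.

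For $L + D$ I would apply a Lyapunov CLT: independence across $j$ reduces the task to bounding $\sum_j \E|U_j|^{2+\delta}$, for which the moment bounds in Assumption~\ref{ass: gaussianity}\ref{ass: gaussianity4} (applied to $e_j$ and to $v_j\otimes e_j$) combine with the cluster-negligibility bounds $\max_i\|\Sigma_i\|^2\|A_{ii}\|_F^2/n\to 0$ and $\max_i (A'\lambda)_i'\Sigma_{e,i}(A'\lambda)_i/n\to 0$ from Assumption~\ref{ass: gaussianity matrix A}\ref{ass: gaussianity matrix A2}. For $C$, I would apply Lemma~\ref{lem: CLT for quadratics} after defining $\xi_i = \Sigma_i^{-1/2}(e_i',v_i')'$, rewriting each summand as $v_i'A_{ij}e_j = \xi_i'\tilde\Gamma_{ij}\xi_j$ with $\tilde\Gamma_{ij} = \Sigma_i^{1/2}[0\ I_{T_i}]'A_{ij}[I_{T_j}\ 0]\Sigma_j^{1/2}$, then symmetrizing via $\Gamma_{ij} = \tilde\Gamma_{ij} + \tilde\Gamma_{ji}'$ for $i<j$ and $\Gamma_{ii}=0$ so that $C = \sum_{i<j}\xi_i'\Gamma_{ij}\xi_j$. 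The required fourth-moment bound on $\xi_i$ is exactly Assumption~\ref{ass: gaussianity}\ref{ass: gaussianity4}, and the two Lemma~\ref{lem: CLT for quadratics} conditions follow from the block-level bounds $\|\Gamma_{ij}\|_F^2 \leq C\|\Sigma_i\|\|\Sigma_j\|(\|A_{ij}\|_F^2+\|A_{ji}\|_F^2)$ and $\|\Gamma\| \leq C\,\max_i\|\Sigma_i\|\cdot\|A\|$ combined with Assumption~\ref{ass: gaussianity matrix A}\ref{ass: gaussianity matrix A3}(b).

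To combine the two limits I would split on Assumption~\ref{ass: gaussianity matrix A}\ref{ass: gaussianity matrix A3}. Under (a), the same Frobenius bound yields $\V(C) \leq C \max_i\|\Sigma_i\|^2 \sum_{i\neq j}\|A_{ij}\|_F^2 = o(n)$, so $C/\omega\xrightarrow{p}0$ by Chebyshev and Slutsky reduces the claim to the linear CLT for $L+D$. Under (b), both pieces may contribute at the same order, so I would package the argument into a single martingale CLT on the filtration $\mathcal{F}_j = \sigma(\xi_1,\ldots,\xi_j)$ using the differences $Y_j = U_j + \sum_{i<j}(v_i'A_{ij}e_j + v_j'A_{ji}e_i)$: these are $\mathcal{F}_j$-measurable, centered given $\mathcal{F}_{j-1}$ by cluster independence, and sum to $x'Ae$. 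The conditional Lindeberg step is routine given Assumption~\ref{ass: gaussianity}\ref{ass: gaussianity4}; the main obstacle will be showing $\sum_j\E[Y_j^2\mid\mathcal{F}_{j-1}]/\omega^2 \xrightarrow{p} 1$, since $\E[Y_j^2\mid\mathcal{F}_{j-1}]$ mixes linear, within-cluster quadratic, and cross-cluster pieces, and its concentration requires fourth-moment control over the $\mathcal{F}_{j-1}$-measurable sums $\sum_{i<j}A_{ij}'v_i$ and $\sum_{i<j}A_{ji}e_i$ together with the alignment and non-degeneracy hypotheses in Assumption~\ref{ass: gaussianity}\ref{ass: gaussianity2}--\ref{ass: gaussianity3} to prevent the three variance components from collapsing into one another.
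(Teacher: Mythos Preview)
Your decomposition $x'Ae = (L+D) + C$, the Lyapunov CLT for the linear-plus-diagonal part, the application of Lemma~\ref{lem: CLT for quadratics} to the cross-cluster quadratic after the $\xi_i = \Sigma_i^{-1/2}(e_i',v_i')'$ transformation, and the case split on Assumption~\ref{ass: gaussianity matrix A}\ref{ass: gaussianity matrix A3} all match the paper's proof. Two points deserve comment.

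\textbf{Placement of Assumptions~\ref{ass: gaussianity}\ref{ass: gaussianity2}--\ref{ass: gaussianity3}.} You invoke these only at the end, inside the conditional-variance step of your martingale argument. In the paper they are used \emph{earlier}, precisely to obtain the lower bound $\omega^2 \geq cn$, which does not follow from Assumption~\ref{ass: gaussianity matrix A}\ref{ass: gaussianity matrix A1} alone. The bounded-alignment condition \ref{ass: gaussianity3} is what guarantees that the correlation between $(A'\lambda)_j'e_j$ and $v_j'A_{jj}e_j$ is bounded away from $\pm 1$, so that $\V(\omega_j e_j) \geq c\big(\V((A'\lambda)_j'e_j) + \V(v_j'A_{jj}e_j)\big)$; the non-degeneracy condition \ref{ass: gaussianity2} is what gives $\V(v_j'A_{jj}e_j) \geq c\|A_{jj}\|_F^2$ (together with $\|\Sigma_i^{-1}\|\leq C$ from \ref{ass: gaussianity1}). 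Only after these two steps does summing over $j$ and adding the off-diagonal Frobenius contribution recover the left side of Assumption~\ref{ass: gaussianity matrix A}\ref{ass: gaussianity matrix A1}. Without this, your claim ``$\omega^2 \geq cn$ comes from Assumption~\ref{ass: gaussianity matrix A}\ref{ass: gaussianity matrix A1}'' is a gap.

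\textbf{Case (b): martingale versus the paper's route.} Under \ref{ass: gaussianity matrix A3}(b) you switch to a martingale CLT on the filtration $\mathcal{F}_j$. The paper does not do this: it keeps the linear and quadratic pieces separate, verifies Lyapunov for the former and the hypotheses of Lemma~\ref{lem: CLT for quadratics} for the latter. Joint convergence is then implicit because the engine behind Lemma~\ref{lem: CLT for quadratics} (Lemma~A2.1 of \cite{solvsten2020robust}, an Efron--Stein/Chatterjee-type normal approximation) accommodates an added linear term at essentially no extra cost---the leave-one-out difference $\Delta_i W$ simply picks up an $\omega_i e_i$ contribution whose fourth moment is already controlled. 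Your martingale route is a legitimate alternative, but the conditional-variance concentration you correctly flag as ``the main obstacle'' effectively re-derives the content of Lemma~\ref{lem: CLT for quadratics} from scratch inside a larger calculation, so it is longer without buying additional generality.
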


We now discuss how Assumption~\ref{ass: gaussianity matrix A} applies to the matrix $A^*$ defined in Theorem~\ref{thm: leave out}. According to Lemma~\ref{lem: efficiency statement}, the Frobenius norm $\|A^*\|_F^2$ captures the effective sample size. Assumption~\ref{ass: gaussianity matrix A}\ref{ass: gaussianity matrix A1} holds provided $A^*$ retains sufficient variation relative to $n$. From Theorem~\ref{thm: leave out}(c), each row of $A^*$ corresponds to a projection, and we have: 
$$\sum_{\tilde\ell=1}^n(A_{\ell\tilde\ell}^*)^2=A_{\ell\ell}^*\mbox{ 
 and }\|A^*\|_F^2=\tr(A^*)=\sum_{\ell}A_{\ell\ell}^*.$$
Since $A^*$ is observable, $\|A^*\|_F^2$ can be computed and directly compared to $n$. A simple sufficient condition for Assumption~\ref{ass: gaussianity matrix A}\ref{ass: gaussianity matrix A1} is illustrated below:
\begin{example}
Suppose the data is indexed by $\ell = \ell(i, t)$ and that a standard weak exogeneity condition holds: $\E[e_{it} \!\mid\! x_{is} : s \le t] = 0$. Let $\mathcal{X}_0 = \{\ell(i,1)\}$ denote the set of first-period observations for each cluster. For these $\ell$, we have $\mathcal{E}_{\ell \tilde\ell} = 1$ for all $\tilde\ell$, so no observations are excluded from the projection, and thus $A^*_{\ell \ell} = M_{\ell \ell}$. If $\min_{\ell \in \mathcal{X}_0} M_{\ell \ell} > c > 0$ and average cluster size is bounded, then Assumption~\ref{ass: gaussianity matrix A}\ref{ass: gaussianity matrix A1} holds. The condition $M_{\ell \ell} > c$ is standard in the many regressors/instruments literature \citep[e.g.,][]{cattaneo2018inference}. \qed
\end{example}

\begin{ass}[Balanced design]\label{ass: balanced} Assume that the set of controls and the exclusion matrix $\mathcal{E}$ are such that for any non-trivial submatrix $M_{\ell}$---formed from entries $M_{\ell_1\ell_2}$ with $\mathcal{E}_{\ell\ell_1}=0$ and $\mathcal{E}_{\ell \ell_2}=0$ ---satisfies $\|M_{\ell}^{-1}\|\leq C$.    
\end{ass}

\noindent Theorem~\ref{thm: leave out} shows that $A^* = (I - B)M$, where $B$ is a sparse matrix with relatively few nonzero entries. Assumption~\ref{ass: balanced} ensures that $A^*$ remains close to $M$ under appropriate matrix norms.\looseness=-1

\begin{lemma}\label{lem: assumption check} The following primitive conditions are sufficient for parts of Assumption~\ref{ass: gaussianity matrix A}:
\begin{itemize}
    \item If $\frac{\max_i\|\Sigma_i\|^2T_i}{n}\to 0,$ then the first part of Assumption \ref{ass: gaussianity matrix A}\ref{ass: gaussianity matrix A2} holds;
    \item If Assumption \ref{ass: balanced} holds, $\|\lambda\|_\infty<\infty$ and $\frac{(\max_i T_i)^3\|\Sigma\|\cdot\|M\|_{\infty}^2}{n}\to 0$, then the second part of Assumption \ref{ass: gaussianity matrix A}\ref{ass: gaussianity matrix A2} holds;
    \item If Assumption \ref{ass: balanced} holds, and $\frac{\max_i\|\Sigma_i\|^2\cdot\max_i T_i^2}{n}\to 0,$ then  Assumption \ref{ass: gaussianity matrix A}\ref{ass: gaussianity matrix A3}(b) holds.
\end{itemize}
\end{lemma}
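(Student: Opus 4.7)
The plan is to verify the three bullets of Lemma~\ref{lem: assumption check} in turn, using two ingredients from Theorem~\ref{thm: leave out}: (a) the row identity $\sum_{\ell=1}^n (A^*_{\tilde\ell\ell})^2 = A^*_{\tilde\ell\tilde\ell} \le 1$, which follows from part~\ref{thm: leave out3} because row $\tilde\ell$ of $A^*$ coincides with row $\tilde\ell$ of the projection matrix $M^{(\tilde\ell)}$; and (b) the factorization $A^* = (I-B)M$ from part~\ref{thm: leave out2}, where $B$ is block-diagonal across clusters since $B_{\tilde\ell\ell}=0$ whenever $\mathcal{E}_{\tilde\ell\ell}=1$ and all cross-cluster pairs automatically satisfy $\mathcal{E}_{\tilde\ell\ell}=1$.

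For the first bullet, summing the row identity over $\tilde\ell \in S_i$ yields $\|A^*_{ii}\|_F^2 \le \sum_{j=1}^N \|A^*_{ij}\|_F^2 = \sum_{\tilde\ell \in S_i} A^*_{\tilde\ell\tilde\ell} \le T_i$, and multiplying through by $\|\Sigma_i\|^2/n$ gives the conclusion directly.

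The crux of the remaining bullets is a uniform operator-norm bound on $B$ using Assumption~\ref{ass: balanced}. Combining the sparsity of $B$ with the (CC) condition forced on $A^*$, each row $\tilde\ell$ of $B$, restricted to $\mathcal{J}_{\tilde\ell}=\{\ell:\mathcal{E}_{\tilde\ell\ell}=0\}$, solves the linear system $B_{\tilde\ell,\mathcal{J}_{\tilde\ell}} M_{\mathcal{J}_{\tilde\ell},\mathcal{J}_{\tilde\ell}} = M_{\tilde\ell,\mathcal{J}_{\tilde\ell}}$. Assumption~\ref{ass: balanced} bounds the inverse of the square submatrix by a constant, and $\|M_{\tilde\ell,\mathcal{J}_{\tilde\ell}}\|^2 \le M_{\tilde\ell\tilde\ell} \le 1$ bounds the right-hand side, so $\|B_{\tilde\ell,\cdot}\|^2 \le C$. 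Block-diagonality then upgrades this to $\|B\| = \max_i \|B_{ii}\| \le \max_i \|B_{ii}\|_F \le C\sqrt{\max_i T_i}$. For the second bullet, decompose $A^{*\prime}\lambda = M\lambda - M B'\lambda$: row-sum bounds on $M$ give $\|(M\lambda)_i\|^2 \le T_i\|M\|_\infty^2\|\lambda\|_\infty^2$, while block-diagonality of $B$ combined with $\|\lambda_i\|\le\sqrt{T_i}\|\lambda\|_\infty$ yields $\|B'\lambda\|_\infty \le C(\max_j T_j)\|\lambda\|_\infty$, hence $\|(MB'\lambda)_i\|^2 \le C T_i(\max_j T_j)^2 \|M\|_\infty^2 \|\lambda\|_\infty^2$. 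Using $\|\Sigma_{e,i}\|\le\|\Sigma\|$, one obtains $(A^{*\prime}\lambda)_i'\Sigma_{e,i}(A^{*\prime}\lambda)_i \le C\|\Sigma\|\|M\|_\infty^2\|\lambda\|_\infty^2 (\max_j T_j)^3$, and dividing by $n$ closes the bullet.

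For the third bullet, $\sum_j\|A^*_{ij}\|_F^2 \le T_i$ is already established; writing column $\ell$ of $A^*$ as $(I-B)M_{\cdot,\ell}$ and using $\|M_{\cdot,\ell}\|^2 = M_{\ell\ell}\le 1$ gives $\|A^*_{\cdot,\ell}\|^2 \le 2M_{\ell\ell}(1+\|B\|^2) \le C\max_j T_j$, so summing over $\ell\in S_i$ yields $\max_i\sum_j(\|A^*_{ij}\|_F^2+\|A^*_{ji}\|_F^2) \le C(\max_i T_i)^2$. Similarly $\|A^*\|^2 \le (1+\|B\|)^2\|M\|^2 \le C\max_i T_i$, so both conditions in Assumption~\ref{ass: gaussianity matrix A}\ref{ass: gaussianity matrix A3}(b) are majorized by $C\max_i\|\Sigma_i\|^2(\max_i T_i)^2/n\to 0$. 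The main obstacle is the operator-norm bound on $B$: without Assumption~\ref{ass: balanced} no useful control is available, and both the second and third bullets rest on $\|B\|\le C\sqrt{\max_i T_i}$; everything else reduces to elementary row- and column-norm accounting driven by the projection structure of $A^*$.
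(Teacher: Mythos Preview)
Your proposal is correct and follows essentially the same route as the paper: use the projection row identity $\sum_\ell (A^*_{\tilde\ell\ell})^2=A^*_{\tilde\ell\tilde\ell}\le 1$ for the first bullet, invoke Assumption~\ref{ass: balanced} to bound $\|B_{\tilde\ell,\cdot}\|\le C$ rowwise, and then combine with the block-diagonality of $B$ to control $\|A^{*\prime}\lambda\|_\infty$, the column norms $\sum_i\|A^*_{ij}\|_F^2$, and $\|A^*\|$. The only differences are organizational: you obtain the slightly sharper $\|B\|\le\|B_{ii}\|_F\le C\sqrt{\max_i T_i}$ (the paper uses $\|B\|^2\le\|B\|_1\|B\|_\infty\le C(\max_iT_i)^2$) and you bound the columns of $A^*$ via the operator norm of $I-B$ rather than by expanding $(I-B)M$ entrywise, but both lead to the same final orders and neither changes the argument materially.
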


\noindent For linear CLTs, the condition $\max_i T_i^2 / n \to 0$ suffices \citep{hansen2019asymptotic, sasaki2022non}, while the quadratic CLT (Assumption~\ref{ass: gaussianity matrix A}\ref{ass: gaussianity matrix A3}(b)) requires the stricter condition $\max_i T_i^4 / n \to 0$.\looseness=-1

In high-dimensional settings with many regressors, an analog of the second part of Assumption~\ref{ass: gaussianity matrix A}\ref{ass: gaussianity matrix A2}—e.g., $\frac{\|M \lambda\|_\infty^2}{n} \to 0$—is often imposed as a high-level assumption because direct verification is difficult; see, for example, Assumption 3 in \cite{cattaneo2018inference}. Lemma~\ref{lem: assumption check} offers low-level primitive conditions that imply this assumption. The bound $\|\lambda\|_\infty<\infty$ is mild since $\lambda_\ell = \E[x_\ell]$, and for a fixed effects projection matrix, we have $\|M\|_\infty = 2$.\looseness=-1

\section{Inference with clustered data}\label{sec: weak id}

\subsection{Jackknife variance estimator}

Suppose we test the null hypothesis $H_0: \beta = \beta_0$, and define $U_i = y_i - x_i' \beta_0$. Even under $H_0$, these differ from the true errors $e_i$ due to the inclusion of a predictable component: $U_i = e_i + w_i' \delta$. By the partialling-out property of $A^*$, we have $x'A^*U = x'A^*e.$ \looseness=-1

The jackknife variance estimator, introduced by \cite{efron1981jackknife} for symmetric statistics based on independent and identically distributed data, has also been advocated as a useful tool in settings involving non-symmetric statistics and non-identically distributed data \cite[see, e.g.,][]{bousquet2004concentration,mackinnon2023fast}. In our setting, we define the statistic of interest as $\mathcal{Z} = x'A^*U = x'A^*e = f(D_1, \dots, D_N)$, where $D_i = (x_i, U_i)$ denotes the observed data for cluster $i$. Let $\mathcal{Z}_{(i)} = f(D_1, \dots, D_{i-1}, 0, D_{i+1}, \dots, D_N)$ denote the statistic computed when cluster $i$ is removed (i.e., replaced by zero). The jackknife estimator of the variance is then given by
\[
\hat V_\textrm{JK} 
= \sum_{i=1}^N (\mathcal{Z} - \mathcal{Z}_{(i)})^2
= \sum_{i=1}^N \left(\sum_{j \ne i} x_j' A^*_{ji} U_i + \sum_{j=1}^N x_i' A^*_{ij} U_j\right)^2.
\]
\looseness=-1

The following lemma establishes that $\hat V_\textrm{JK}$ is a conservative estimator, potentially overestimating the true variance.
\begin{lemma}\label{lem: jackknife variance}
    Under Assumptions \ref{ass: panel model} and \ref{ass: gaussianity}, the jackknife variance estimator satisfies   $ \E[\hat V_\textrm{JK}] \geq \V(\mathcal{Z}).$    Moreover, if $A^*_{ij} = 0$ for all $i \ne j$ (i.e., $A^*$ is block-diagonal), then the estimator is unbiased under the null: $\E[\hat V_\textrm{JK}] = \V(\mathcal{Z}) = \omega^2$ when $\beta = \beta_0$.
    \looseness=-1
\end{lemma}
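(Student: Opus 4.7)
The plan is to prove both claims via an orthogonal decomposition of each jackknife increment combined with the Efron-Stein inequality, exploiting that $\mathcal{Z} = f(D_1, \ldots, D_N)$ is a square-integrable function of the independent cluster-level data $D_i = (x_i, U_i)$. The key observation is that $\mathcal{Z}_{(i)}$ is $\sigma(D_{-i})$-measurable, with $D_{-i} = \{D_k : k \neq i\}$, but generally differs from $\E[\mathcal{Z} \mid D_{-i}]$; this gap is what makes the jackknife conservative.

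First I would decompose $\delta_i := \mathcal{Z} - \mathcal{Z}_{(i)} = R_i + c_i$, where $R_i = \mathcal{Z} - \E[\mathcal{Z} \mid D_{-i}]$ and $c_i = \E[\mathcal{Z} \mid D_{-i}] - \mathcal{Z}_{(i)}$. Since $c_i$ is $D_{-i}$-measurable while $\E[R_i \mid D_{-i}] = 0$, the tower property gives $\E[R_i c_i] = 0$, so $\E[\delta_i^2] = \E[R_i^2] + \E[c_i^2] \geq \E[R_i^2]$. The Efron-Stein inequality applied to the independent coordinates $D_1, \ldots, D_N$ yields $\V(\mathcal{Z}) \leq \sum_i \E[\V(\mathcal{Z} \mid D_{-i})] = \sum_i \E[R_i^2]$. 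Combining these,
\[
\E[\hat V_\textrm{JK}] = \sum_i \E[\delta_i^2] \;\geq\; \sum_i \E[R_i^2] \;\geq\; \V(\mathcal{Z}),
\]
which proves the first claim.

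For the second claim, block-diagonality of $A^*$ means $\mathcal{Z} = \sum_i T_{ii}$ with $T_{ii} = x_i' A^*_{ii} U_i$ a function of $D_i$ alone, and $\mathcal{Z}_{(i)} = \sum_{k \neq i} T_{kk}$. Two simplifications occur simultaneously under $H_0$: (i) independence of the summands makes Efron-Stein an equality, $\sum_i \E[R_i^2] = \sum_i \V(T_{ii}) = \V(\mathcal{Z})$; and (ii) the constraint $A^* W = 0$ combined with block-diagonality forces $A^*_{ii} W_i = 0$ for every $i$, which together with the centering condition $\E[v_i' A^*_{ii} e_i] = 0$ from Assumption~\ref{ass: gaussianity}\ref{ass: gaussianity1} yields $\E[T_{ii}] = 0$, and therefore $\E[\mathcal{Z} \mid D_{-i}] = \mathcal{Z}_{(i)} + \E[T_{ii}] = \mathcal{Z}_{(i)}$ so that $c_i = 0$. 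Consequently $\E[\delta_i^2] = \E[R_i^2] = \V(T_{ii})$ for each $i$, and $\E[\hat V_\textrm{JK}] = \sum_i \V(T_{ii}) = \V(\mathcal{Z}) = \omega^2$.

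The main conceptual obstacle is spotting the right decomposition: once one recognizes that $\mathcal{Z}_{(i)}$ plays the role of a (biased) proxy for $\E[\mathcal{Z} \mid D_{-i}]$, the orthogonality of $R_i$ and $c_i$ delivers the conservative bound without any explicit evaluation of the nuisance $c_i$. The block-diagonal case is then just the minimal scenario under which Efron-Stein is tight and the drift $c_i$ is annihilated by the partialling-out identity $A^* W = 0$, recovering exact unbiasedness.
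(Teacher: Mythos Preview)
Your proof is correct and takes a genuinely different route from the paper. The paper proceeds by brute-force expansion: it writes out $\mathcal{Z}-\mathcal{Z}_{(j)}$ explicitly as $-\mu_j+\omega_je_j+\sum_{i\neq j}(\lambda_j'A^*_{ji}e_i+v_i'A^*_{ij}w_j'\delta)+\sum_{i\neq j}(v_i'A^*_{ij}e_j+v_j'A^*_{ji}e_i)$, squares, takes expectations term by term using cross-cluster independence, and then compares the resulting sum to the Hoeffding-decomposition variance formula for $\mathcal{Z}$, obtaining the exact identity
\[
\E[\hat V_\textrm{JK}]=\V(\mathcal{Z})+\sum_j\mu_j^2+\sum_{j<i}\V(v_i'A^*_{ij}e_j+v_j'A^*_{ji}e_i)+\sum_j\sum_{i\neq j}\V(\lambda_j'A^*_{ji}e_i+v_i'A^*_{ij}w_j'\delta).
\]
Your approach instead abstracts away the bilinear structure entirely: the decomposition $\delta_i=R_i+c_i$ with $R_i\perp c_i$, followed by Efron--Stein on the independent blocks $D_i$, delivers the inequality in two lines without ever computing a single cross-moment. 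For the block-diagonal case your argument that $A^*W=0$ forces $A^*_{ii}W_i=0$ (hence $c_i=0$) and that additivity makes Efron--Stein tight is clean and correct. What your approach buys is generality and brevity; what the paper's explicit computation buys is a quantitative handle on the excess---one can read off from their identity that the over-coverage is governed by $\sum_{i\neq j}\|A^*_{ij}\|_F^2$ and the drift terms $\mu_j$, which feeds directly into the discussion following the lemma about when the jackknife is only mildly conservative.
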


\cite{efron1981jackknife} showed that the jackknife variance estimator tends to be conservative for symmetric statistics derived from independent and identically distributed data. While the jackknife correctly captures the variance of first-order $U$-statistics, it double-counts second-order terms. In our setting, $\mathcal{Z}$ is non-symmetric and based on independent but non-identically distributed data. The jackknife is conservative for two main reasons: it gives double weight to the quadratic term (as noted in \cite{efron1981jackknife}), and $\mathcal{Z}_{(i)}$ is improperly centered since $\E[U_i - e_i] \neq 0$. The jackknife variance estimator is unbiased when $A^*_{ij} = 0$ for all $i \ne j$, in which case the removal of a single cluster does not affect the residualization of others. Consequently, $\mathcal{Z}_{(i)}$ is properly centered, and no cross-cluster covariances arise. In general, the degree of conservativeness is small when $\sum_{j=1}^N\sum_{i\neq j}\|A_{ij}^*\|_F^2$ is much smaller than $\sum_{j=1}^N\|A_{jj}^*\|_F^2$.\looseness=-1

\subsection{Tests and confidence sets}

Our estimator, $\hat\beta^{A^*} = \frac{x' A^* y}{x' A^* x} = \frac{z' y}{z' x}$, belongs to the class of just-identified internal IV estimators. Similar to the IV estimator for dynamic panels proposed by \cite{arellano1991some}, our approach may be subject to weak identification concerns (see \cite{bun2010weak}). To construct inference procedures that are robust to weak identification, we propose using the Anderson–Rubin (AR) test. The AR test delivers valid inference under both strong and weak identification.

To test the null hypothesis $H_0: \beta = \beta_0$, we employ the test statistic
\[
AR(\beta_0) = \frac{(x' A^* (y - x \beta_0))^2}{\hat V(\beta_0)},
\]
and reject the null if this statistic exceeds the $1 - \alpha$ quantile of the $\chi^2_1$ distribution. The variance estimator $\hat V(\beta_0)$ is given by $\hat V_\textrm{JK}$, which is based on the implied errors $U = U(\beta_0) = y - x \beta_0$. By Theorem~\ref{thm: gaussianity}, the AR test has asymptotic size no greater than the nominal level. 
\looseness=-1

A confidence set is obtained by inverting the AR test. Since both the numerator and the denominator are quadratic in $\beta_0$, the inversion reduces to solving a quadratic inequality. The resulting confidence set is always non-empty and always includes the estimator $\hat\beta^{A^*}$. 
\looseness=-1

When identification is strong, the AR-based test and confidence set are asymptotically equivalent to those based on the $t$-statistic, using $\hat\beta^{A^*}$ as the point estimator and $\frac{\hat V(\hat\beta^{A^*})}{(x' A^* x)^2}$ as the squared standard error. Nevertheless, we advocate using the AR test, as the validity of $t$-statistic-based inference is not generally guaranteed.

\section{Empirical Application}\label{sec: empirical application}

Policy interventions with spatial effects present a compelling setting for applying our method. In such contexts, researchers often distinguish between the \emph{direct} effect of treatment---on the treated units themselves---and the \emph{total} effect, which includes indirect impacts arising from spillovers. A prominent example is provided by \cite{egger2022general} (hereafter EHMNW), who study a large-scale fiscal intervention in rural Kenya. In their experiment, one-time cash transfers in total equivalent to approximately 15 percent of local GDP were randomly assigned to 653 villages. We apply our method to this setting to estimate the direct effect of the intervention on treated villages.

The villages (our units of observation) were divided into 68 sub-locations (clusters), the median sub-location contains $7$ villages.  Treated villages were randomly selected through a two-stage randomization procedure. Within each treated village, all eligible households received a one-time cash transfer of USD 1,871 (PPP-adjusted). Eligibility was determined through a poverty-based means test. The median village includes $98$ households, approximately one-third of whom met the eligibility criteria. EHMNW examine a range of outcomes, including household consumption, firm behavior, and local prices. Here, we focus exclusively on village-aggregated outcomes for total household consumption. Results for other outcomes are similar and available upon request.

The parameter of interest in our application is the village-level \emph{direct} treatment effect on average household consumption. Although treatment was randomly assigned, a simple OLS regression is unreliable due to anticipated interference from neighboring villages. EHMNW assume that spatial spillovers do not extend beyond a radius of $R$ kilometers, with their preferred specification (selected through BIC criteria) being $R = 2$.
  
This setting fits naturally within our design-based framework, which we illustrate through estimation under two specifications. In both cases, we define the exogeneity matrix $\mathcal{E}$ using cutoffs based on pairwise distances between villages. We assume  the exclusion restriction $E[v_{\tilde\ell}({y}_\ell - \beta x_\ell)] = 0$ holds if and only if $\ell=\tilde\ell$ or the distance between villages $\tilde\ell$ and $\ell$ is at least $R$ kilometers. In practice, researchers should choose a distance cutoff below which exogeneity may be questionable. Here, we illustrate the behavior of the $\hat\beta^{A^*}$ estimator using a range of cutoffs from 0 to 3 kilometers. In the data, the average pairwise distance between villages within a sub-location is $2.2$ kilometers. On average, each village has $1.5$, $4.6$, and $6.7$ neighbors within the sub-location and within 1, 2, and 3 kilometers\footnote{Based on GPS coordinates. We dropped nine observations for which we do not have GPS coordinates.}, respectively. Our estimates under different choices of $R$ are shown in Figure~\ref{fig:estimates}.

\begin{figure}[htbp!]
    \begin{center}
        \subfloat[Baseline\label{fig:panelA}]{
            \includegraphics[width=0.4\textwidth]{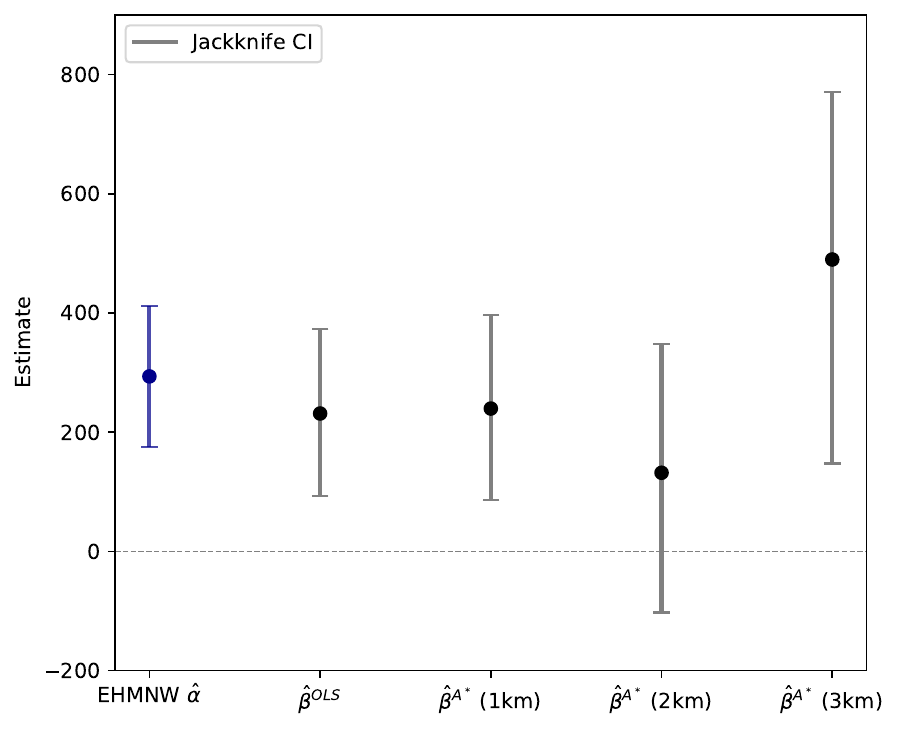}
        }
        \hfill
        \subfloat[Continuous treatment\label{fig:panelB}]{
            \includegraphics[width=0.4\textwidth]{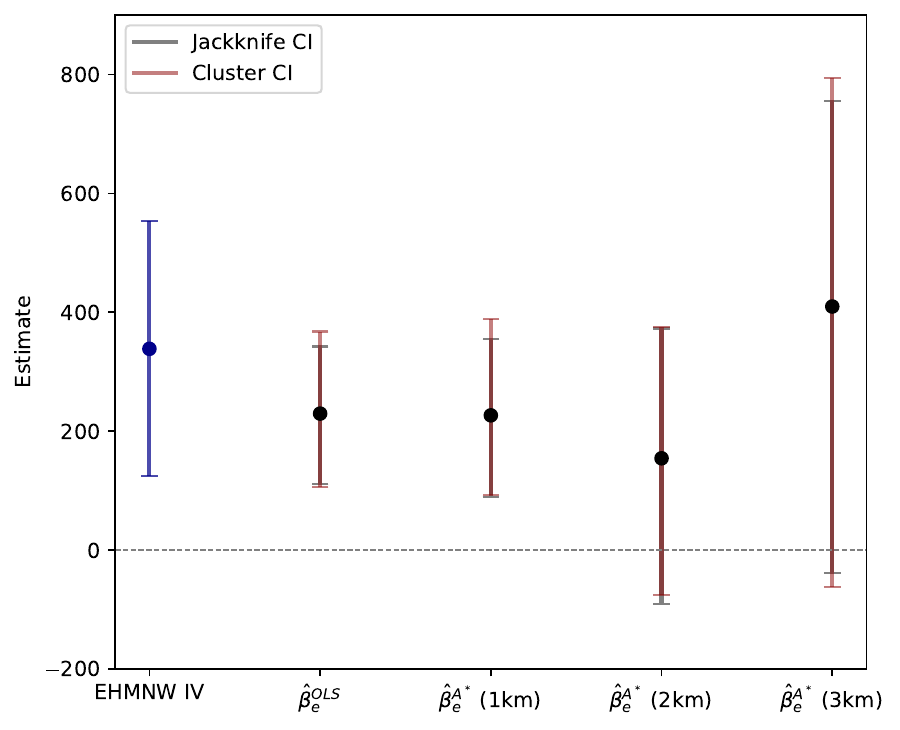}
        }
        \caption{Estimated effects on household consumption}
        \label{fig:estimates}
    \end{center}
    {\small Notes. The outcome variable in Panel (a) is average household consumption; in Panel (b), it is average household consumption among eligible households. Effect sizes in Panel (b) are scaled by the average transfer amount. Jackknife and cluster-robust confidence intervals are computed by inverting the respective AR test statistics.  The estimates labeled “EHMNW” are taken directly from the original paper: the $\hat{\alpha}$ estimate in Panel (a) is based on Specification (1) from EHMNW and reported in their Table 1 (row 1, column 1); the IV estimate in Panel (b) is based on Specification (2) from the paper and reported in Table 1 (row 1, column 2). These estimates use spatial standard errors based on household GPS coordinates, following Conley (2008).
    }
\end{figure}

For specification (a), let ${y}_\ell$ denote the average consumption of households in village $\ell$, and let $x_\ell$ denote the binary treatment status of the village. We assume the correctly specified treatment equation: $x_\ell = p_{i(\ell)} + v_\ell,$ where $p_i$ is a cluster (sub-location) fixed effect capturing treatment propensity. We include sub-location fixed effects to flexibly absorb the realized treatment share, which may differ substantially from the ex ante assignment probability. In practice, the realized propensity can depend on cluster size or arise from re-randomization or re-balancing procedures implemented during the experiment.

The results are reported in Panel (a), alongside the baseline estimator from the specification used in EHMNW. The baseline estimator is reported in Table 1 (row 1, column 1) of EHMNW. We find that the estimates remain relatively stable for distance cutoffs below 2 km, suggesting that the bias in estimating the direct effect due to inter-village spillovers is limited. As larger values of $R$ correspond to more relaxed exogeneity assumptions, the effective sample size decreases, leading to larger standard errors.  Figure~\ref{fig:estimates}, Panel (a), also displays confidence intervals based on the jackknife variance estimator described in Section~\ref{sec: weak id}. In this specification, since the only controls are cluster fixed effects, the $A^*$ matrix is block-diagonal, and the jackknife estimator coincides with the standard cluster-robust variance estimator.

Specification (b), reported in Figure~\ref{fig:estimates}, Panel (b), makes use of the full variation in treatment intensity. Here, we define ${y}_\ell$ as the average consumption of eligible households in village $\ell$, and let $x_\ell$ denote the total transfer allocated to the village. The assumed treatment assignment equation in this case is $x_\ell = W_\ell' \delta + v_\ell,$ where $W_\ell$ includes the number of eligible households in village $\ell$ along with sub-location fixed effects. The parameter of interest, $\beta$, again represents the village-level direct treatment effect. This parameter is somewhat comparable to the total effect on eligible households reported in EHMNW, which was estimated using an instrumental variables strategy and reported in Table 1 (row 1, column 2) of the paper. Estimates under different exogeneity assumptions are presented along with both jackknife and cluster-robust standard errors. In this specification, the controls $W_\ell$ include a covariate that varies within clusters and is not absorbed by the sub-location fixed effects. As a result,  matrix $M$ is no longer block-diagonal, and neither is $A^*$, so the two variance estimators do not coincide.

\paragraph{Discussion.} Both panels illustrate our main takeaway: the point estimates—and especially their precision—are highly sensitive to the researcher’s maintained exogeneity assumptions. Relaxing these assumptions, for example by allowing spillovers to extend within 3~km rather than 2~km, leads to substantially wider confidence intervals. This reflects a reduction in the effective sample size captured by the trace of the $A^*$ matrix under less restrictive exogeneity assumptions.

\begin{figure}[htbp!]
    \begin{center}
        \subfloat[$A^*$ with 1 km cutoff]{
            \includegraphics[width=0.4\textwidth]{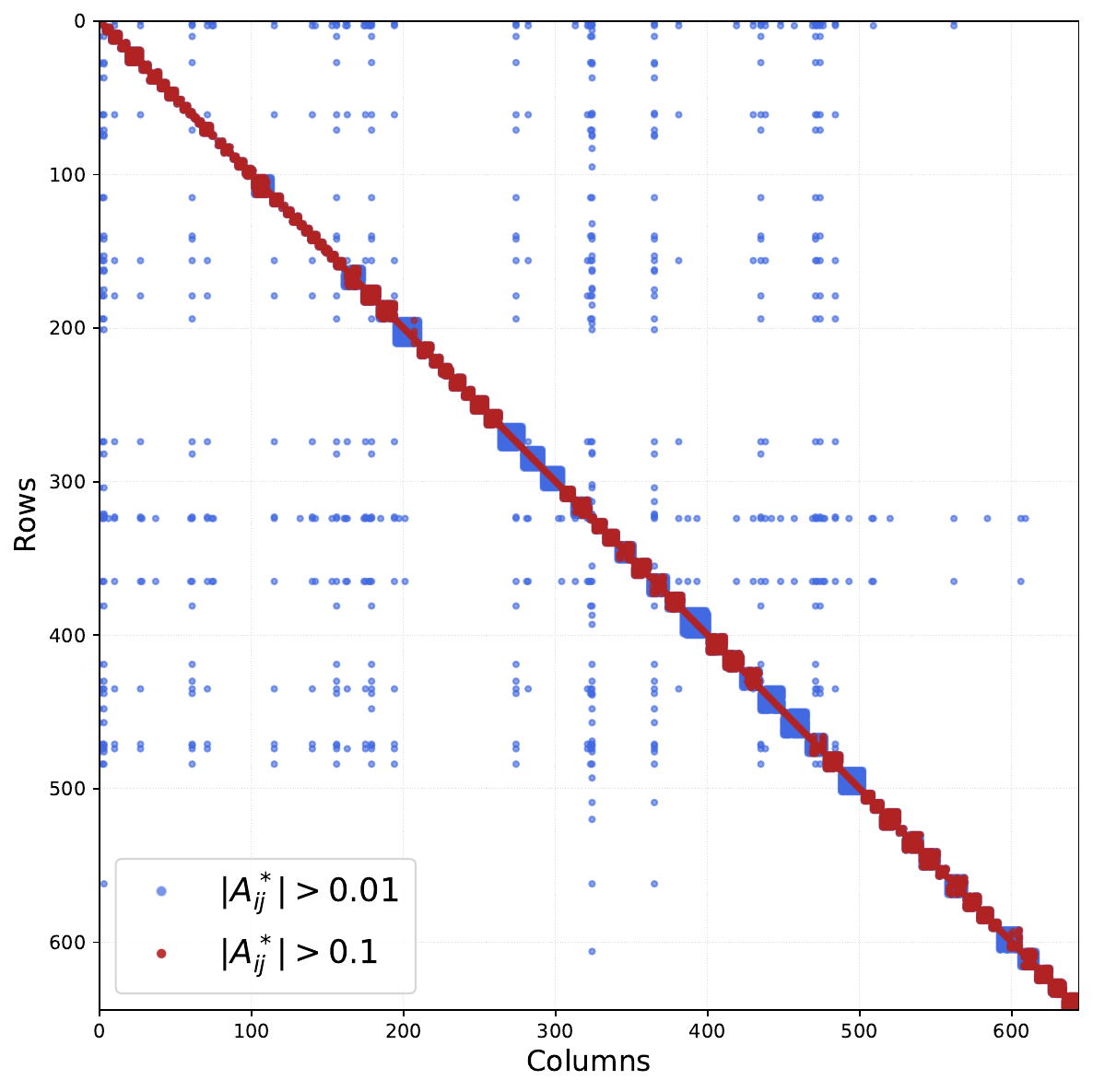}
        }
        \hfill
        \subfloat[$A^*$ with 3 km cutoff]{
            \includegraphics[width=0.4\textwidth]{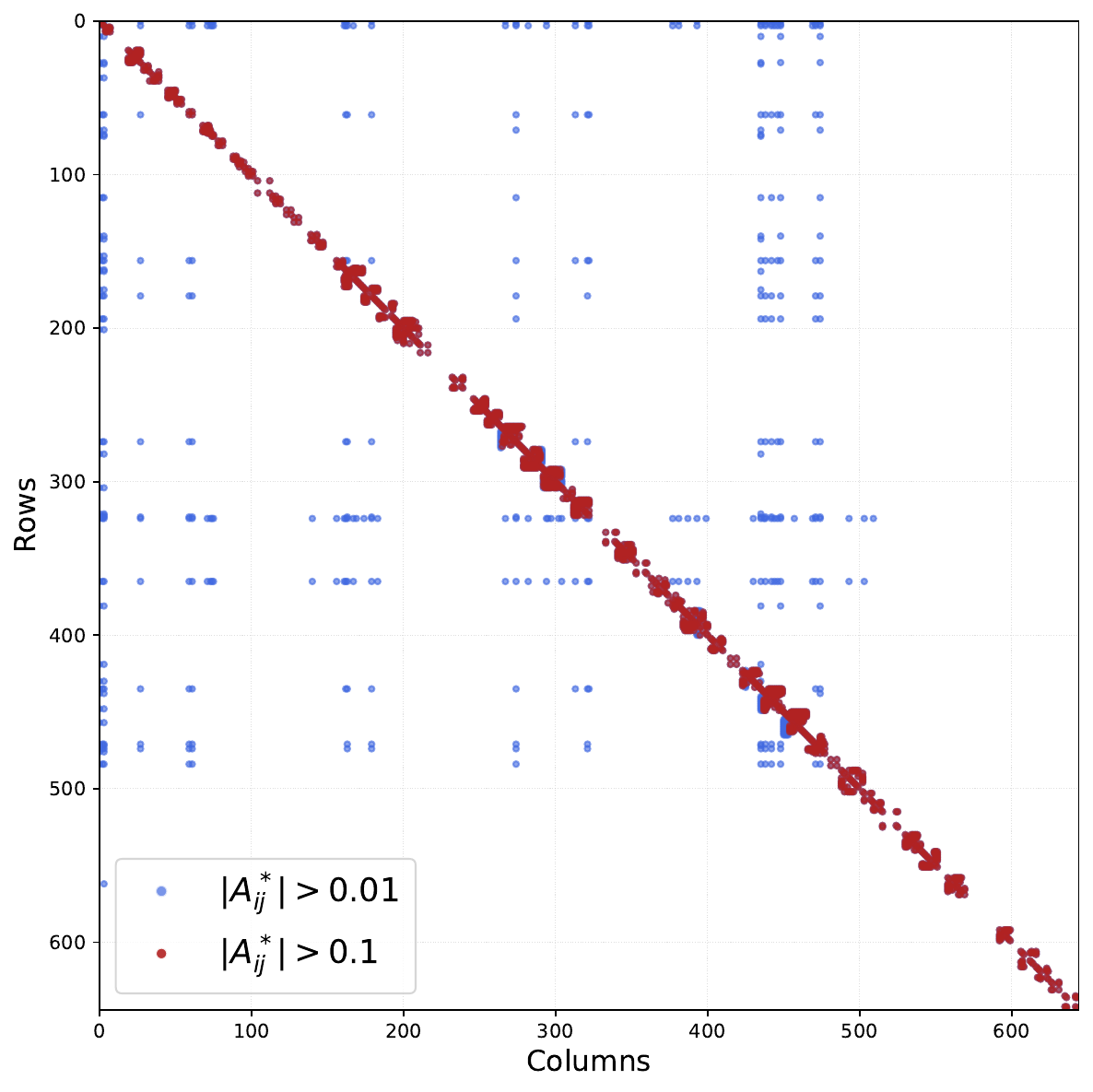}
        }
        \caption{Non-zero structure of $A^*$}\label{fig: A matrix}
    \end{center}
    {\small Notes. This figure visualizes entries of the matrix $A^*$ with absolute value greater than 0.01 (blue) or 0.1 (red). Panel (a) shows $A^*$ computed under 1 km cutoff and Panel (b) shows $A^*$ under 3 km cutoff. 
    }
\end{figure}

The structure of the matrix $A^*$ in the continuous treatment specification (same as in Figure~\ref{fig:estimates}, Panel (b)) is shown in Figure~\ref{fig: A matrix}, Panel (a) for $R = 1$ and Panel (b) for $R = 3$. A blue dot indicates that the absolute value of the corresponding element of $A^*$ exceeds 0.01, while a red dot indicates that it exceeds 0.1. Villages are sorted by cluster (sub-location), so that block structure, if present, is visible. The figure reveals that the matrix $A^*$ is far from block-diagonal: some villages contribute substantially to the residualization of controls across many clusters. As the cutoff radius $R$ increases, the trace of the matrix $A^*$ decreases. This pattern reflects a reduction in the effective sample size. The ratio of Frobenius norms for off-diagonal blocks to diagonal blocks increases with distance, but remains small. When $R=1$, we have $\tr(A^*)=539$, while the relative Frobenius norm of off-diagonal blocks is $0.043$. When $R=3$, $\tr(A^*)=273$ while the Frobenius norm of off-diagonal blocks is $0.048$. Due to the relative smallness of off-diagonal blocks, the confidence sets using clustered and jackknife variance estimators are very close to each other.

%%%%%%%%%%%%%%%%%%%%%%%%%
\bibliographystyle{chicago}
%	\nocite{*}
\bibliography{literature.bib}

@article{arellano1991some,
  title={Some tests of specification for panel data: Monte Carlo evidence and an application to employment equations},
  author={Arellano, Manuel and Bond, Stephen},
  journal={The Review of Economic Studies},
  volume={58},
  number={2},
  pages={277--297},
  year={1991},
  publisher={Wiley-Blackwell}
}

@article{anderson1981estimation,
  title={Estimation of dynamic models with error components},
  author={Anderson, Theodore Wilbur and Hsiao, Cheng},
  journal={Journal of the American Statistical Association},
  volume={76},
  number={375},
  pages={598--606},
  year={1981},
  publisher={Taylor \& Francis}
}

@article{okui20211,
  title={1. Linear dynamic panel data models},
  author={Okui, Ryo},
  journal={Handbook of Research Methods and Applications in Empirical Microeconomics},
  pages={1},
  year={2021},
  publisher={Edward Elgar Publishing}
}

@article{blundell1998initial,
  title={Initial conditions and moment restrictions in dynamic panel data models},
  author={Blundell, Richard and Bond, Stephen},
  journal={Journal of Econometrics},
  volume={87},
  number={1},
  pages={115--143},
  year={1998},
  publisher={Elsevier}
}

@article{ahn1995efficient,
  title={Efficient estimation of models for dynamic panel data},
  author={Ahn, Seung C and Schmidt, Peter},
  journal={Journal of Econometrics},
  volume={68},
  number={1},
  pages={5--27},
  year={1995},
  publisher={Elsevier}
}

@article{alvarez2003time,
  title={The time series and cross-section asymptotics of dynamic panel data estimators},
  author={Alvarez, Javier and Arellano, Manuel},
  journal={Econometrica},
  volume={71},
  number={4},
  pages={1121--1159},
  year={2003},
  publisher={Wiley Online Library}
}

@article{hansen2019asymptotic,
  title={Asymptotic theory for clustered samples},
  author={Hansen, Bruce E and Lee, Seojeong},
  journal={Journal of Econometrics},
  volume={210},
  number={2},
  pages={268--290},
  year={2019},
  publisher={Elsevier}
}

@article{hayashi1983nearly,
  title={Nearly efficient estimation of time series models with predetermined, but not exogenous, instruments},
  author={Hayashi, Fumio and Sims, Christopher},
  journal={Econometrica},
  pages={783--798},
  year={1983},
  publisher={JSTOR}
}

@incollection{bun2015dynamic,
    author = {Bun, Maurice J. G. and Sarafidis, Vasilis},
    isbn = {9780199940042},
    title = {Dynamic Panel Data Models},
    booktitle = {The Oxford Handbook of Panel Data},
    publisher = {Oxford University Press},
    year = {2015},
    doi = {10.1093/oxfordhb/9780199940042.013.0003},
    url = {https://doi.org/10.1093/oxfordhb/9780199940042.013.0003},
    eprint = {https://academic.oup.com/book/0/chapter/355335865/chapter-ag-pdf/45733014/book\_41981\_section\_355335865.ag.pdf},
}

@article{arellano1995another,
  title={Another look at the instrumental variable estimation of error-components models},
  author={Arellano, Manuel and Bover, Olympia},
  journal={Journal of Econometrics},
  volume={68},
  number={1},
  pages={29--51},
  year={1995},
  publisher={Elsevier}
}

@article{bun2010weak,
  title={The weak instrument problem of the system GMM estimator in dynamic panel data models},
  author={Bun, Maurice J. G. and Windmeijer, Frank},
  journal={The Econometrics Journal},
  volume={13},
  number={1},
  pages={95--126},
  year={2010},
  publisher={Oxford University Press Oxford, UK}
}

@article{djogbenou2019asymptotic,
  title={Asymptotic theory and wild bootstrap inference with clustered errors},
  author={Djogbenou, Antoine A and MacKinnon, James G and Nielsen, Morten {\O}rregaard},
  journal={Journal of Econometrics},
  volume={212},
  number={2},
  pages={393--412},
  year={2019},
  publisher={Elsevier}
}

@article{cattaneo2018inference,
  title={Inference in linear regression models with many covariates and heteroscedasticity},
  author={Cattaneo, Matias D. and Jansson, Michael and Newey, Whitney K.},
  journal={Journal of the American Statistical Association},
  volume={113},
  number={523},
  pages={1350--1361},
  year={2018},
  publisher={Taylor \& Francis}
}

@article{verdier2018estimation,
    author = {Verdier, Valentin},
    title = {Estimation and Inference for Linear Models with Two-Way Fixed Effects and Sparsely Matched Data},
    journal = {The Review of Economics and Statistics},
    volume = {102},
    number = {1},
    pages = {1-16},
    year = {2020},
    abstract = {Models with multiway fixed effects are frequently used to address selection on unobservables. The data used for estimating these models often contain few observations per value of either indexing variable (sparsely matched data). I show that this sparsity has important implications for inference and propose an asymptotically valid inference method based on subsetting. Sparsity also has important implications for point estimation when covariates or instrumental variables are sequentially exogenous (e.g., dynamic models), and I propose a new estimator for these models. Finally, I illustrate these methods by providing estimates of the effect of class size reductions on student achievement.},
    issn = {0034-6535},
    doi = {10.1162/rest_a_00807},
    url = {https://doi.org/10.1162/rest\_a\_00807},
    eprint = {https://direct.mit.edu/rest/article-pdf/102/1/1/1617336/rest\_a\_00807.pdf},
}

@article{nickell1981biases,
  title={Biases in dynamic models with fixed effects},
  author={Nickell, Stephen},
  journal={Econometrica},
  pages={1417--1426},
  year={1981},
  publisher={JSTOR}
}

@article{henderson1981deriving,
  title={On deriving the inverse of a sum of matrices},
  author={Henderson, Harold V and Searle, Shayle R},
  journal={SIAM review},
  volume={23},
  number={1},
  pages={53--60},
  year={1981},
  publisher={SIAM}
}

@article{efron1981jackknife,
  title={The jackknife estimate of variance},
  author={Efron, Bradley and Stein, Charles},
  journal={The Annals of Statistics},
  pages={586--596},
  year={1981},
  publisher={JSTOR}
}

@article{sasaki2022non,
  title={Non-robustness of the cluster-robust inference: with a proposal of a new robust method},
  author={Sasaki, Yuya and Wang, Yulong},
  journal={arXiv preprint arXiv:2210.16991},
  year={2022}
}

@article{hoeffding1948class,
  title={A Class of Statistics with Asymptotically Normal Distribution},
  author={Hoeffding, Wassily},
  journal={The Annals of Mathematical Statistics},
  pages={293--325},
  year={1948},
  publisher={JSTOR}
}

@article{chamberlain1987asymptotic,
  title={Asymptotic efficiency in estimation with conditional moment restrictions},
  author={Chamberlain, Gary},
  journal={Journal of Econometrics},
  volume={34},
  number={3},
  pages={305--334},
  year={1987},
  publisher={Elsevier}
}

@article{solvsten2020robust,
  title={Robust estimation with many instruments},
  author={S{\o}lvsten, Mikkel},
  journal={Journal of Econometrics},
  volume={214},
  number={2},
  pages={495--512},
  year={2020},
  publisher={Elsevier}
}

@article{ligtenberg2023inference,
  title={Inference in IV models with clustered dependence, many instruments and weak identification},
  author={Ligtenberg, Johannes W},
  journal={arXiv preprint arXiv:2306.08559},
  year={2023}
}

@article{mackinnon2023fast,
  title={Fast and Reliable Jackknife and Bootstrap Methods for Cluster-Robust Inference},
  author={MacKinnon, James G. and Nielsen, Morten {\O}rregaard and Webb, Matthew D.},
  journal={Journal of Applied Econometrics},
  volume={38},
  number={5},
  pages={671--694},
  year={2023},
  publisher={Wiley Online Library},
  doi={10.1002/jae.2952}
}

@incollection{bousquet2004concentration,
  title={Concentration Inequalities},
  author={Bousquet, Olivier and Boucheron, St{\'e}phane and Lugosi, G{\'a}bor},
  booktitle={Advanced Lectures on Machine Learning},
  pages={208--240},
  year={2004},
  publisher={Springer},
  doi={10.1007/978-3-540-28650-9_9}
}

@article{paluck2016changing,
  title={Changing climates of conflict: A social network experiment in 56 schools},
  author={Paluck, Elizabeth Levy and Shepherd, Hana and Aronow, Peter M},
  journal={Proceedings of the National Academy of Sciences},
  volume={113},
  number={3},
  pages={566--571},
  year={2016},
  publisher={National Academy of Sciences}
}

@article{jayachandran2017cash,
  title={Cash for carbon: A randomized trial of payments for ecosystem services to reduce deforestation},
  author={Jayachandran, Seema and De Laat, Joost and Lambin, Eric F and Stanton, Charlotte Y and Audy, Robin and Thomas, Nancy E},
  journal={Science},
  volume={357},
  number={6348},
  pages={267--273},
  year={2017},
  publisher={American Association for the Advancement of Science}
}

@article{blattman2021place,
  title={Place-based interventions at scale: The direct and spillover effects of policing and city services on crime},
  author={Blattman, Christopher and Green, Donald P and Ortega, Daniel and Tob{\'o}n, Santiago},
  journal={Journal of the European Economic Association},
  volume={19},
  number={4},
  pages={2022--2051},
  year={2021},
  publisher={Oxford University Press}
}

@article{portnoy2022linearity,
  title={Linearity of unbiased linear model estimators},
  author={Portnoy, Stephen},
  journal={The American Statistician},
  volume={76},
  number={4},
  pages={372--375},
  year={2022},
  publisher={Taylor \& Francis}
}

@article{lei2022estimators,
  title={What Estimators Are Unbiased For Linear Models?},
  author={Lei, Lihua and Wooldridge, Jeffrey},
  journal={arXiv preprint arXiv:2212.14185},
  year={2022}
}

@article{egger2022general,
  title={General equilibrium effects of cash transfers: experimental evidence from Kenya},
  author={Egger, Dennis and Haushofer, Johannes and Miguel, Edward and Niehaus, Paul and Walker, Michael},
  journal={Econometrica},
  volume={90},
  number={6},
  pages={2603--2643},
  year={2022},
  publisher={Wiley Online Library}
}

@book{lehmann1998theory,
  title={Theory of point estimation},
  author={Lehmann, Erich Leo and Casella, George},
  year={1998},
  publisher={Springer}
}

@article{hansen2022modern,
  title={A modern Gauss--Markov theorem},
  author={Hansen, Bruce E},
  journal={Econometrica},
  volume={90},
  number={3},
  pages={1283--1294},
  year={2022},
  publisher={Wiley Online Library}
}

@article{banerjee2024can,
  title={Can a trusted messenger change behavior when information is plentiful? Evidence from the first months of the COVID-19 pandemic in West Bengal},
  author={Banerjee, Abhijit and Alsan, Marcella and Breza, Emily and Chandrasekhar, Arun G and Chowdury, Abhijit and Duflo, Esther and Goldsmith-Pinkham, Paul and Olken, Benjamin A},
  journal={Review of Economics and Statistics},
  pages={1--33},
  year={2024},
  publisher={MIT Press 255 Main Street, 9th Floor, Cambridge, Massachusetts 02142, USA~…}
}

@article{banerjee2019using,
  title={Using gossips to spread information: Theory and evidence from two randomized controlled trials},
  author={Banerjee, Abhijit and Chandrasekhar, Arun G and Duflo, Esther and Jackson, Matthew O},
  journal={The Review of Economic Studies},
  volume={86},
  number={6},
  pages={2453--2490},
  year={2019},
  publisher={Oxford University Press}
}

@article{banerjee2013diffusion,
  title={The diffusion of microfinance},
  author={Banerjee, Abhijit and Chandrasekhar, Arun G and Duflo, Esther and Jackson, Matthew O},
  journal={Science},
  volume={341},
  number={6144},
  pages={1236498},
  year={2013},
  publisher={American Association for the Advancement of Science}
}
%%%%%%%%%%%%%%%%%%%%%%%%%
\clearpage

\appendix

\section{Appendix }
\begin{ass}\label{ass: technical}
\begin{enumerate}[label=(\roman*)]
        
     \item\label{ass: technical1} Observations are independent across clusters: the set $\{e_\ell, x_\ell\}_{\ell \in S_i}$ is independent across $i$ with  $\max_\ell \E[x_\ell^4 + e_\ell^4] < C$.
    
    \item\label{ass: technical2} The expected variation of the residualized regressor $\tilde x_{\ell}$ diverges with the sample size $n$; $\frac{1}{n}\sum_{\ell=1}^n \E[ \tilde x_{\ell}^2 ] \to Q > 0$.

    \item\label{ass: technical3} The number of controls $K$ may grow with $n$, subject to $\frac{K}{n} < 1 - c$. The largest cluster size satisfies $\max_i T_i / n = o(1)$.

\end{enumerate}

\end{ass}

\paragraph{Discussion of assumptions.}  Assumption \ref{ass: technical}\ref{ass: technical1} formalizes the earlier description: observations are independent across clusters but may be dependent within clusters. The finite fourth moment condition is standard in linear econometrics. Assumption \ref{ass: technical}\ref{ass: technical2} ensures sufficient variation in the residualized regressor $\tilde x$ for consistency of OLS, when asymptotic bias is not an issue. Assumption \ref{ass: technical}\ref{ass: technical3} accommodates high-dimensional controls $W$, including multiple sets of fixed effects. While clusters may be large and grow with $n$, each cluster must remain asymptotically negligible, as also required in \cite{hansen2019asymptotic,djogbenou2019asymptotic}. Thus, the number of clusters $N$ diverges with the sample size $n$.\looseness=-1

\begin{proof}[Proof of Lemma \ref{lem: nickell bias}]
    By definition, \( \tilde{x}_\ell = \sum_{\ell^* = 1}^n M_{\ell \ell^*} x_{\ell^*} \). Assumption~\ref{ass: panel model} implies:
\(
\E[ \tilde{x}_\ell e_\ell ] = \sum_{\ell^* =1} ^n M_{\ell \ell^*} \E[ x_{\ell^*} e_\ell ].
\)
Therefore,
\begin{align}
    \frac{1}{n} \sum_{\ell = 1}^n \E[ \tilde{x}_\ell e_\ell ] 
    = \frac{1}{n} \sum_{\ell = 1}^n \sum_{\tilde{\ell}=1}^n M_{\ell \tilde{\ell}} \E[ x_{\tilde{\ell}} e_\ell ].
    \label{eq: bias}
\end{align}

To establish convergence in probability, we use an Efron–Stein-type argument. For generic clustered variables \( (v_\ell, u_\ell) \), with \( \max_\ell \E[v_\ell^4 + u_\ell^4] < C \), define:
\[
\tilde{v}_\ell = \sum_{\ell^* = 1}^n M_{\ell \ell^*} v_{\ell^*}, \quad 
\mathcal{V}_i = \sum_{\ell \in S_i} \tilde{v}_\ell u_\ell, \quad 
\mathcal{U}_i = \sum_{\ell, \ell^* \in S_i} M_{\ell \ell^*} v_\ell u_{\ell^*}.
\]
Note:
$\sum_{\ell = 1}^n \tilde{v}_\ell^2 \le \sum_{\ell = 1}^n v_\ell^2, 
\sum_{\ell \in S_i} \left( \sum_{\ell^* \in S_i} M_{\ell \ell^*} v_{\ell^*} \right)^2 \le \sum_{\ell \in S_i} v_\ell^2.$
Define \( \mathcal{M} = \max_i \sum_{\ell \in S_i} u_\ell^2 \). Then:
\begin{align}
\frac{1}{n^2} \sum_{i=1}^N \E[ \mathcal{V}_i^2 ]
&\le \frac{1}{n^2} \sum_{i=1}^N \E\left[ \sum_{\ell \in S_i} \tilde{v}_\ell^2 \mathcal{M} \right]
\le \E\left[ \frac{\mathcal{M}}{n^2} \sum_{\ell = 1}^n v_\ell^2 \right]
\le C \sqrt{ \E \left( \frac{\mathcal{M}^2}{n^2} \right) } \nonumber \\
&\le C \sqrt{ \E \left[ \frac{1}{n^2} \sum_{i=1}^N \left( \sum_{\ell \in S_i} u_\ell^2 \right)^2 \right] }
\le C \sqrt{ \frac{\max_i T_i}{n} } = o(1). \label{eq: conv1}
\end{align}
Similarly,
\begin{align}
\frac{1}{n^2} \sum_{i=1}^N \E[ \mathcal{U}_i^2 ]
\le \frac{1}{n^2} \sum_{i=1}^N \E\left[ \sum_{\ell \in S_i} v_\ell^2 \sum_{\ell \in S_i} u_\ell^2 \right]
\le C \frac{1}{n^2} \sum_{i=1}^N T_i^2 
\le C \frac{\max_i T_i}{n} = o(1). \label{eq: conv2}
\end{align} 
Now consider:
\(
\mathcal{S}_n = \frac{1}{n} \sum_{\ell = 1}^n \tilde{x}_\ell e_\ell 
= \frac{1}{n} \sum_{\ell = 1}^n x_\ell \tilde{e}_\ell 
= \frac{1}{n} \sum_{\ell = 1}^n \sum_{\ell^* = 1}^n M_{\ell \ell^*} x_\ell e_{\ell^*}.
\)
Define:
\[
\Delta_i \mathcal{S}_n := \mathcal{S}_n - \frac{1}{n} \sum_{\ell \notin S_i} \sum_{\ell^* \notin S_i} M_{\ell \ell^*} x_\ell e_{\ell^*}
= \frac{1}{n} \sum_{\ell \in S_i} \left( \tilde{x}_\ell e_\ell + x_\ell \tilde{e}_\ell \right)
- \frac{1}{n} \sum_{\ell, \ell^* \in S_i} M_{\ell \ell^*} x_\ell e_{\ell^*}.\]
Applying \eqref{eq: conv1} and \eqref{eq: conv2} with \( (v_\ell, u_\ell) \in \{(x_\ell, e_\ell), (e_\ell, x_\ell)\} \), we conclude:\looseness=-1
$\sum_{i=1}^N \E[(\Delta_i \mathcal{S}_n)^2] = o(1).$
By the Efron-Stein inequality, \( \V(\mathcal{S}_n) \le \sum_{i=1}^N \E[ (\Delta_i \mathcal{S}_n)^2 ] \), this implies:
$\frac{1}{n} \sum_{\ell = 1}^n \tilde{x}_\ell e_\ell - \E\left[ \frac{1}{n} \sum_{\ell = 1}^n \tilde{x}_\ell e_\ell \right] = o_p(1).$ 
A similar argument for \( \mathcal{S}_n = \frac{1}{n} \sum_{\ell = 1}^n \tilde{x}_\ell^2 = \frac{1}{n} \sum_{\ell = 1}^n \tilde{x}_\ell x_\ell \), where:
\[
\Delta_i \mathcal{S}_n = \frac{2}{n} \sum_{\ell \in S_i} \tilde{x}_\ell x_\ell - \frac{1}{n} \sum_{\ell, \ell^* \in S_i} M_{\ell \ell^*} x_\ell x_{\ell^*},
\]
implies:
$\frac{1}{n} \sum_{\ell = 1}^n \tilde{x}_\ell^2 - \E\left[ \frac{1}{n} \sum_{\ell = 1}^n \tilde{x}_\ell^2 \right] = o_p(1).$ \looseness=-1 By Assumption~\ref{ass: technical}\ref{ass: technical2}, we have:
$\frac{1}{n} \sum_{\ell = 1}^n \tilde{x}_\ell^2 = Q + o_p(1),$ with  $Q > 0.$ Therefore, combining with \eqref{eq: bias} and applying the continuous mapping theorem, the result follows.    
\end{proof}

\begin{proof}[Proof of Lemma \ref{lem- no unbiased est}.] The following example borrows from Chapter~2.1 of \cite{lehmann1998theory}. Consider i.i.d.\ Bernoulli variables $\{z_i\}_{i=1}^n$ with $p \in (0,1)$ and i.i.d.\ Rademacher variables $\{q_i\}_{i=1}^n$, independent of $\{z_i\}$. Define
$x_i = q_i(\sqrt{2}\, z_i + 1 - z_i), 
\quad 
y_i = q_i(z_i/\sqrt{2} + 1 - z_i), 
\quad i=1,\dots,n.$ Then $x_i y_i = 1$ a.s., $x_i \in \{\sqrt{2}, -\sqrt{2}, 1, -1\}$ with probabilities $p/2, p/2, (1-p)/2, (1-p)/2$, and $\E[x_i] = \E[y_i] = 0$. The data satisfy the correctly specified regression
\[
y_i = \beta x_i + e_i, \quad 
\beta = \frac{\E[y_i x_i]}{\E[x_i^2]} = \frac{1}{1+p}, \quad \E[e_i] = \E[x_i e_i] = 0.
\]
Thus, this is one of the distributions from class $\mathcal{F}$.
Any estimator $u(x,y)$ has expectation
\[
\E[u(x,y)] = \sum_{(x_1,...,x_n,y_1,...,y_n)} u(x_1,...,x_n,y_1,...,y_n)\, p^k (1-p)^{n-k} 2^{-n},
\]
where $k$ is the number of indices with $z_i=1$. Hence $\E[u]$ is a polynomial in $p$ of degree at most $n$ and cannot equal $\beta = (1+p)^{-1}$ for all $p \in (0,1)$. Thus, no unbiased estimator exists for this much simpler model.
\end{proof}

\begin{proof}[Proof of Lemma \ref{lem: OLS not cc}] A possible data generating process has $\E[x_{\tilde \ell} e_\ell] = M_{\tilde\ell \ell} (1-\mathcal{E}_{\tilde \ell  \ell})$. For the outcome model, the lemma then follows from \eqref{eq: bias}. The derivation for the design-based model is similar.
\end{proof}

\begin{proof}[Proof of Lemma \ref{lem: correct centered class}.]
Fix a column coordinate $j$.  Take $F\in\mathcal F$ such that $y=x\beta + W\delta_j + e$ where $\delta_j$ is the unit vector with 1 in the $j$-th position. In addition, let the marginal distribution of $x$ be a point mass at $x_0$, and let $e$ be independent of $x$. In this case, for $u(x,y)$ to be correctly centered on $F$, we have \begin{equation}
    \mathbb E_F[x'Ay] - \beta \mathbb E_F[x'Ax] =\mathbb E_F[x'AW\delta_j]= x_0'AW_j=0.
\end{equation} This holds for all values of $x_0$. This implies $A W_j=0$ for all $j$, hence $AW=0$, i.e.,  $AM=A$.

Finally, take a pair of indexes $(\tilde\ell_0,\ell_0)$ such that $\mathcal E_{\tilde \ell_0,\ell_0}=0$. Take $F\in \mathcal F$ such that $\delta=0,\beta=\beta_0$ and $\mathbb E_F[x_{\tilde\ell}e_{\ell}]=0$ for all $(\tilde\ell,\ell)\neq (\tilde\ell_0,\ell_0)$ and $\E_F[x_{\tilde\ell_0}e_{\ell_0}]=1$. Let $u(x,y) = \frac{x'Ay}{x'Ax}$ be correctly centered, then 
\begin{equation}
    0 = \mathbb E_{F}[x'Ay] - \beta_0 \mathbb E_{F}[x'Ax] = \mathbb E_F[x'Ae] = \sum_{\ell,\tilde\ell}A_{\tilde\ell \ell} E_F[x_{\tilde\ell}e_{\ell}] = A_{\tilde\ell_0 \ell_0}\mathbb E_F[x_{\tilde\ell_0}e_{\ell_0}] .
\end{equation}
Hence we must have $A_{\tilde\ell \ell} = 0 $ whenever $\mathcal E_{\tilde\ell\ell}=0$.
\end{proof}

\begin{proof}[Proof of Lemma \ref{lem: efficiency statement}.]  
Consider the optimization problem \eqref{eq: alternative optimization}, and suppose the minimum is attained at \( z^* = \tilde A' x \). For any other \( z = A' x \), with \( A \in \mathcal{A} \), define a linear combination \( z_a = z^* + a z \), which also lies in the same class. Minimizing \( \E\|z_a - x\|^2 \) over \( a \), the first-order condition at \( a = 0 \) yields:
$\E[z' z^*] = \E[z' x].$ Now evaluate the objective function:
\[
V(A) = \frac{\sigma^2 \E[z' z]}{[\E(z' x)]^2} = \frac{\sigma^2 \E[z' z]}{[\E(z' z^*)]^2} \geq \frac{\sigma^2}{\E[z^{*\prime} z^*]} = V(\tilde A),
\]
where the inequality follows from the Cauchy--Schwarz inequality and the first-order condition \( \E[z^{*\prime} z^*] = \E[z^{*\prime} x] \). Finally, observe that the objective in \eqref{eq: alternative optimization} simplifies as:
\[
\E\|z - x\|^2 = \tr\left[ (A' - I_n) \E[xx'] (A - I_n) \right] = \lambda \|A - I_n\|_F^2,
\]
which matches the objective in \eqref{eq: optimization problem}. Hence, \( \tilde A = A^* \) minimizes both objectives.
\end{proof}

\begin{proof}[Proof of Theorem \ref{thm: leave out}.]
\textbf{Part \ref{thm: leave out1}.} Consider the optimization problem:
\begin{equation}
    \|A - M\|_F^2 = (\mathrm{vec}(A) - \mathrm{vec}(M))' (\mathrm{vec}(A) - \mathrm{vec}(M)) \to \min \quad \text{s.t. } \mathcal{L} \, \mathrm{vec}(A) = 0.
\end{equation}
The solution is given by:
$\mathrm{vec}(A^*) = \left(I_{n^2} - \mathcal{L}' (\mathcal{L} \mathcal{L}')^+ \mathcal{L} \right) \mathrm{vec}(M).$

\textbf{Part \ref{thm: leave out2}.} Since $\mathcal{A}$ is a linear subspace and the Frobenius norm arises from an inner product, the solution to the projection problem
\[
A^* = \arg\min_{A \in \mathcal{A}} \|A - M\|_F^2 = \mathrm{proj}_F(M, \mathcal{A})
\]
is the orthogonal projection of $M$ onto $\mathcal{A}$. Hence, $M - A^*$ is orthogonal to any $A \in \mathcal{A}$ in Frobenius inner product: \( \langle M - A^*, A \rangle_F = 0 \).

Consider a matrix $B$ with property $B_{\tilde\ell \ell} = 0$ whenever $\mathcal{E}_{\tilde\ell\ell}=1$, then
\begin{equation}\label{eq: orthogonality property}
    \langle A,BM\rangle_F=\tr(A'BM)=\tr(MA'B)=\tr((AM)'B)=\tr(A'B)=\sum_{\ell_1,\ell_2}A_{\ell_1,\ell_2}B_{\ell_1,\ell_2}=0,
\end{equation}
since for any pair $(\ell_1, \ell_2)$, either $B_{\ell_1,\ell_2} = 0$ or $A_{\ell_1,\ell_2} = 0$ for all $A \in \mathcal{A}$.
then $\langle BM, A \rangle_F = 0$ for all $A \in \mathcal{A}$. If there exists a matrix $B$ with that property and such that  $A^* = M - BM \in \mathcal{A}$, then $A^*$ must be the projection of $M$ onto $\mathcal{A}$. To find such $B$, we need $(M - BM)_{\tilde\ell \ell} = 0$  whenever $\mathcal{E}_{\tilde\ell\ell}=0$. This condition leads to the linear system:
\begin{equation}
  M_{\tilde\ell \ell} - \sum_{\ell_1 : \mathcal{E}_{\tilde\ell\ell_1}=0} B_{\tilde\ell \ell_1} M_{\ell_1 \ell} = 0, \quad \text{ whenever }\mathcal{E}_{\tilde\ell\ell}=0.  
\end{equation}
This system has as many equations as unknowns. Define $M_{\tilde\ell}$ to be sub-matrix of $M$ containing only elements $M_{\ell_1\ell_2}$ for indexes such that $\mathcal{E}_{\tilde\ell\ell_1}=0$ and $\mathcal{E}_{\tilde\ell\ell_2}=0$. If the matrix $M_{\tilde\ell}$ is full rank, the solution is:
\begin{equation}
    B_{\tilde\ell \ell_1} = \sum_{\ell:  \mathcal{E}_{\tilde\ell\ell}=0} M_{\tilde\ell \ell} \left(M_{\tilde\ell}^{-1} \right)_{\ell \ell_1},
\quad \text{for } \ell_1 \text{ s.t. } \mathcal{E}_{\tilde\ell\ell_1}=0, \text{ and } 0 \text{ otherwise}.
\end{equation}
If $M_{\tilde\ell}$ is rank-deficient, the system still has a solution if having
$\sum_{\ell: \mathcal{E}_{\tilde\ell\ell}=0}  c_\ell M_{\ell_1, \ell} = 0 $ for all $\ell_1 $ such that $ \mathcal{E}_{\tilde\ell\ell_1}=0$ implies that 
$\sum_{\ell: \mathcal{E}_{\tilde\ell\ell}=0} c_\ell M_{\tilde\ell \ell} = 0.$ This holds since $M_{\ell_1, \ell} = \tilde w_{\ell_1}' \tilde w_\ell$, where $\tilde w_\ell = (W'W)^{-1/2} w_\ell$, so:
\[
\sum_{\ell: \mathcal{E}_{\tilde\ell\ell}=0} c_\ell M_{\ell_1, \ell} = 0 \Rightarrow
\sum_{\ell: \mathcal{E}_{\tilde\ell\ell}=0} c_\ell \tilde w_\ell = 0 \Rightarrow
\sum_{\ell: \mathcal{E}_{\tilde\ell\ell}=0} c_\ell M_{\tilde\ell \ell} = 0.
\]
Thus, the system is solvable, and we can choose
$B_{\tilde\ell \ell_1} = \sum_{\ell: \mathcal{E}_{\tilde\ell\ell}=0} M_{\tilde\ell \ell} \left(M_{\tilde\ell}^{+} \right)_{\ell \ell_1}$
as a concrete solution using a generalized inverse.

\textbf{Part \ref{thm: leave out3}.}  
Let $U = W'W$, and define selection matrices \( \mathbf{i}_{\tilde\ell}=\mathrm{diag}(\mathcal{E}_{\tilde\ell,\cdot}) \) and \( \mathbf{i}_{\tilde\ell^c}=I_n-\mathbf{i}_{\tilde\ell} \) as diagonal matrices selecting observations whose errors are uncorrelated and correlated with $x_{\tilde\ell}$, respectively. Let $W_{\tilde\ell} = \mathbf{i}_{\tilde\ell} W, \quad W_{\tilde\ell^c} = \mathbf{i}_{\tilde\ell^c} W,$ so that:
\begin{equation}
    U = W'W = W_{\tilde\ell}' W_{\tilde\ell} + W_{\tilde\ell^c}' W_{\tilde\ell^c}.
\end{equation}
By the generalized Woodbury identity \citep{henderson1981deriving}:
\begin{equation}
    (W_{\tilde\ell}' W_{\tilde\ell})^+ = U^{-1} + U^{-1} W_{\tilde\ell^c}' (I_n - W_{\tilde\ell^c} U^{-1} W_{\tilde\ell^c}')^+ W_{\tilde\ell^c} U^{-1}.
\end{equation}
Using notation: $M = I_n - W U^{-1} W', \quad M^{(\tilde\ell)} = I_n - W_{\tilde\ell} (W_{\tilde\ell}' W_{\tilde\ell})^+ W_{\tilde\ell}'.$
The term
\[
\mathbf{i}_{\tilde\ell^c}' (I_n - W_{\tilde\ell^c} U^{-1} W_{\tilde\ell^c}')^+ \mathbf{i}_{\tilde\ell^c} = \mathbf{i}_{\tilde\ell^c} ( \mathbf{i}_{\tilde\ell^c} M \mathbf{i}_{\tilde\ell^c})^+ \mathbf{i}_{\tilde\ell^c}
\]
is a submatrix of $M$. Pre- and post-multiplying the generalized Woodbury formula by the $\tilde\ell$-th and $\tilde\ell_1$-th rows of $W$, we obtain:
\begin{equation}
    M^{(\tilde\ell)}_{\tilde\ell \tilde\ell_1} = M_{\tilde\ell \tilde\ell_1} - \sum_{\ell_1, \ell_2:  \mathcal{E}_{\tilde\ell\ell_1}=0, \mathcal{E}_{\tilde\ell\ell_2}=0} M_{\tilde\ell \ell_1} (M_{\tilde\ell}^{+})_{\ell_1, \ell_2} M_{\ell_2, \tilde\ell_1}
= M_{\tilde\ell \tilde\ell_1} - \sum_{\ell_2: \mathcal{E}_{\tilde\ell\ell_2}=0 } B_{\tilde\ell \ell_2} M_{\ell_2, \tilde\ell_1} = A^*_{\tilde\ell \tilde\ell_1},
\end{equation}
where the last equality follows from part \ref{thm: leave out2}.
\end{proof}

\begin{lemma}\label{lem: helpful inequalities}
Assume a random vector $\xi \in \mathbb{R}^T$ satisfies $\E[\xi] = 0$, $\V(\xi) = I_T$, and for all non-random vectors $\tau$, $\E|\tau' \xi|^4 \leq C \|\tau\|^4$. Then for any positive definite matrix $A \in \mathbb{R}^{T \times T}$,
\[
\E[(\xi' A \xi)^2] \leq C (\tr(A))^2.
\]
\end{lemma}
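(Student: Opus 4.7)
The plan is to reduce the problem to a diagonal quadratic form by spectral decomposition, then control each term with the hypothesized fourth-moment bound applied coordinate-wise.

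First, I would write $A = U\Lambda U'$ with $U$ orthogonal and $\Lambda=\mathrm{diag}(\lambda_1,\dots,\lambda_T)$, $\lambda_i\geq 0$. Setting $\eta = U'\xi$, I would verify that $\eta$ inherits the assumed properties of $\xi$: it is mean zero, has $\V(\eta)=U'I_T U=I_T$, and for any non-random $\tau$ satisfies
\[
\E|\tau'\eta|^4 = \E|(U\tau)'\xi|^4 \leq C\|U\tau\|^4 = C\|\tau\|^4,
\]
since orthogonal transformations preserve the Euclidean norm. Hence the hypothesis is invariant under the rotation and we may work with $\eta$ in the eigenbasis of $A$.

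Next I would rewrite
\[
\xi'A\xi = \eta'\Lambda\eta = \sum_{i=1}^T \lambda_i\eta_i^2,
\qquad
\E[(\xi'A\xi)^2] = \sum_{i,j=1}^T \lambda_i\lambda_j\,\E[\eta_i^2\eta_j^2].
\]
The central step is to bound $\E[\eta_i^2\eta_j^2]$ uniformly. Applying the fourth-moment hypothesis to $\tau=e_k$ (the $k$th standard basis vector) gives $\E[\eta_k^4]\leq C$ for each $k$, and Cauchy--Schwarz then yields $\E[\eta_i^2\eta_j^2]\leq\sqrt{\E[\eta_i^4]\E[\eta_j^4]}\leq C$ for all $i,j$.

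Combining these bounds and using $\lambda_i\geq 0$,
\[
\E[(\xi'A\xi)^2] \leq C\sum_{i,j=1}^T \lambda_i\lambda_j = C\Bigl(\sum_{i=1}^T\lambda_i\Bigr)^2 = C(\tr(A))^2,
\]
which is the claim. There is no real obstacle here; the only subtlety is to notice that the fourth-moment hypothesis is rotation-invariant, which is what justifies passing to the eigenbasis of $A$ and reading off $\E[\eta_k^4]\leq C$ from the assumption applied to standard basis vectors. The constant $C$ on the right-hand side is the same $C$ as in the hypothesis, and positive definiteness of $A$ is used only to ensure $\lambda_i\geq 0$ so that $\sum_i\lambda_i=\tr(A)\geq 0$ and can be squared without cancellation.
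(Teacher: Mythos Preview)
Your proof is correct and follows essentially the same approach as the paper: both use the spectral decomposition of $A$, expand the square into $\sum_{j,k}\mu_j\mu_k\,\E[(\psi_j'\xi)^2(\psi_k'\xi)^2]$, bound each cross term by Cauchy--Schwarz and the fourth-moment hypothesis applied to unit vectors, and sum. Your change of variables $\eta=U'\xi$ is just a repackaging of the paper's direct evaluation of $\psi_j'\xi$; the two arguments are the same up to notation.
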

\begin{proof}[Proof of Lemma \ref{lem: helpful inequalities}.]
As $A$ is positive definite, it has spectral decomposition \( A = \sum_{j=1}^T \mu_j \psi_j \psi_j' \), where $\mu_j > 0$ and $\{\psi_j\}$ is an orthonormal set of eigenvectors. Then:
\begin{align*}
\E[(\xi' A \xi)^2] &= \E\left( \sum_{j=1}^T \mu_j (\psi_j' \xi)^2 \right)^2 = \sum_{j,k=1}^T \mu_j \mu_k \E[(\psi_j' \xi)^2 (\psi_k' \xi)^2] \\
&\leq \sum_{j,k=1}^T \mu_j \mu_k \sqrt{\E(\psi_j' \xi)^4} \sqrt{\E(\psi_k' \xi)^4} \leq C \sum_{j,k=1}^T \mu_j \mu_k = C (\tr(A))^2.\qedhere
\end{align*}
\end{proof}

\begin{proof}[Proof of Lemma \ref{lem: CLT for quadratics}.]
We use Lemma A2.1 from \cite{solvsten2020robust}. Define:
\[
\Delta_i W = \frac{1}{\sqrt{V}} (\xi_i - \tilde \xi_i)' \sum_{j \ne i} \Gamma_{ij} \xi_j,
\]
where $\tilde \xi_i$ is an independent copy of $\xi_i$. To apply Lemma A2.1, we must verify:
\[
\sum_{i=1}^N \E[(\Delta_i W)^2] = O(1) \quad \text{and} \quad \sum_{i=1}^N \E[(\Delta_i W)^4] \to 0.
\]
For the second moment:
$\sum_i \E[(\Delta_i W)^2] = \frac{2}{V} \sum_{i=1}^N \sum_{j \ne i} \tr(\Gamma_{ij} \Gamma_{ij}') = 1.$
For the fourth moment, decompose the expression into two terms. 
\begin{align*}
   \sum_i\E[|\Delta_i W|^4]=\frac{4}{V^2}\sum_{i=1}^N\E[\sum_{j\neq i}(\xi_i'\Gamma_{ij}\xi_j)^4+\sum_{j\neq i}\sum_{k\notin \{i,j\}}(\xi_i'\Gamma_{ij}\xi_j)^2(\xi_i'\Gamma_{ik}\xi_k)^2].
\end{align*}
The first term is:
\begin{align*}
      &\frac{4}{V^2}\E\left\{\sum_{i=1}\sum_{j\neq i}\E\left[(\xi_i'\Gamma_{ij}\xi_j)^4|\xi_j\right]\right\}\leq \frac{C}{V^2}\E\sum_i\sum_{j\neq i}\|\Gamma_{ij}\xi_j\|^4 \\ 
      &= \frac{C}{V^2}\E\left\{\sum_i\sum_{j\neq i}\left(\xi_j'\Gamma_{ij}'\Gamma_{ij}\xi_j\right)^2\right\}\leq  \frac{C}{V^2}\sum_i\sum_{j\neq i}\left(\tr(\Gamma_{ij}'\Gamma_{ij})\right)^2= \frac{C}{V^2}\sum_i\sum_{j\neq i}\|\Gamma_{ij}\|_F^4,
\end{align*}
where we used Lemma \ref{lem: helpful inequalities}.
For the second term, we notice that
\begin{align*}
  &\E[ (\xi_i'\Gamma_{ij}\xi_j)^2(\xi_i'\Gamma_{ik}\xi_k)^2]=\E(\xi_i'\Gamma_{ij}\Gamma_{ij}'\xi_i\xi_i'\Gamma_{ik}\Gamma_{ik}'\xi_i) \\
   &\leq \sqrt{\E[(\xi_i'\Gamma_{ij}\Gamma_{ij}'\xi_i)^2]}\sqrt{\E[(\xi_i'\Gamma_{ik}\Gamma_{ik}'\xi_i)^2]}\leq C\|\Gamma_{ij}\|_F^2\|\Gamma_{ik}\|^2_F.
\end{align*}
Combining both yields
\begin{align*}
    \sum_i\E|\Delta_i W|^4\leq\frac{C}{V^2}\sum_i\sum_{j\neq i}\left(\|\Gamma_{ij}\|_F^4+\sum_{k\notin \{i,j\}}\|\Gamma_{ij}\|_F^2\|\Gamma_{ik}\|^2_F\right)\leq\frac{C\max_i\left(\sum_{j\neq i}\|\Gamma_{ij}\|_F^2\right)}{V}\to 0.
\end{align*}

\textbf{Condition (i) of Lemma A2.1 of \cite{solvsten2020robust}:} We need to verify:
\[
\frac{2}{V} \sum_{i=1}^N \sum_{k>i} \sum_{j \ne i} \xi_j' \Gamma_{ji} (\xi_i \xi_i' + I_{T_i}) \Gamma_{ik} \xi_k \xrightarrow{p} 1.
\]
This expression decomposes into five terms: \( 2I_1 + 4I_2 + I_3 + 6I_4 + \frac{4}{V} \sum_{i,k>i} \|\Gamma_{ik}\|_F^2 \). The final term converges to 1 as \( V = 2\|\Gamma\|_F^2 \), so it suffices to show that \( I_1, I_2, I_3, I_4 \xrightarrow{p} 0 \).
\begin{align}
I_1 &= \frac{1}{V} \sum_{i=1}^N \sum_{k>i} \sum_{j \notin \{i,k\}} \xi_j' \Gamma_{ji} (\xi_i \xi_i' - I_{T_i}) \Gamma_{ik} \xi_k \xrightarrow{p} 0, \label{eq:st1} \\
I_2 &= \frac{1}{V} \sum_{i=1}^N \sum_{k>i} \sum_{j \notin \{i,k\}} \xi_j' \Gamma_{ji} \Gamma_{ik} \xi_k \xrightarrow{p} 0, \label{eq:st4} \\
I_3 &= \frac{2}{V} \sum_{i=1}^N \sum_{k>i} \tr \left( \Gamma_{ik} (\xi_k \xi_k' - I) \Gamma_{ki} (\xi_i \xi_i' - I) \right) \xrightarrow{p} 0, \label{eq:st2} \\
I_4 &= \frac{2}{V} \sum_{i=1}^N \sum_{k>i} \tr \left( (\xi_k \xi_k' - I) \Gamma_{ki} \Gamma_{ik} \right) \xrightarrow{p} 0. \label{eq:st3}
\end{align}

In all cases, the expectation of the left-hand side is zero, so we only need to check that the variance of each expression converges to zero. 
\begin{align*}
     &\V(I_1)\leq\frac{C}{V^2}\sum_{i=1}^N\sum_{k\neq i}\sum_{j\notin\{i,k\} }\E\left(\tr(\Gamma_{ji}(\xi_i\xi_i^\prime-I_{T_i})\Gamma_{ik}\Gamma_{ki}(\xi_i\xi_i^\prime-I_{T_i})\Gamma_{ij})+(\xi_j'\Gamma_{ji}\xi_i)^2(\xi_i^\prime\Gamma_{ik}\Gamma_{kj}\xi_j)\right)\\
     &=\frac{C}{V^2}\sum_{i,k}^N\sum_{j\neq k}\left[\E\left( \xi_i^\prime\Gamma_{ij}\Gamma_{ji}\xi_i\xi_i^\prime\Gamma_{ik}\Gamma_{ki}\xi_i \right)- \tr(\Gamma_{ji}\Gamma_{ik}\Gamma_{ki}\Gamma_{ij})\right]+\frac{C}{V^2}\sum_{i,k,j}^N\E[(\xi_j'\Gamma_{ji}\xi_i)^2(\xi_i^\prime\Gamma_{ik}\Gamma_{kj}\xi_j)].
\end{align*}
Here we used the circular property of traces, the fact that $\xi$'s with different indexes are independent with identity covariance, and $\Gamma_{jj}=0.$ The first summand of the last sum is bounded by 
\begin{align*}
   &\frac{C}{V^2}\left(\sum_{i=1}^N\E\left( \xi_i^\prime\bar\Gamma_{ii}\xi_i\right)^2- \tr(\Gamma^4)+\sum_{i,j}\tr(\Gamma_{ji}\Gamma_{ij}\Gamma_{ji}\Gamma_{ij})\right)\leq \frac{C}{V^2}\left(\sum_{i}(\tr(\bar \Gamma_{ii}))^2+\sum_{i,j}\|\Gamma_{ij}\|_F^4\right) \\ 
   &\leq \frac{C}{V^2}\left(\sum_{i}(\sum_j\|\Gamma_{ij}\|_F^2)^2+\sum_{i,j}\|\Gamma_{ij}\|_F^4\right)\leq \frac{C}{V}\max_i\sum_j\|\Gamma_{ij}\|_F^2\to 0.
\end{align*}
Here we used $\tr(\Gamma_{ji}\Gamma_{ij}\Gamma_{ji}\Gamma_{ij})=\|\Gamma_{ji}\Gamma_{ij}\|_F^2\leq \|\Gamma_{ji}\|^2\|\Gamma_{ij}\|_F^2\leq \|\Gamma_{ij}\|_F^4.$ Denote $\bar\Gamma=\Gamma^2$. Then
\begin{align*}
&\frac{C}{V^2}\sum_{i,k,j=1}^N\left|\E[(\xi_j'\Gamma_{ji}\xi_i)^2(\xi_i^\prime\Gamma_{ik}\Gamma_{kj}\xi_j)]\right|= \frac{C}{V^2}\sum_{i,j=1}^N\left|\E[(\xi_j'\Gamma_{ji}\xi_i)^2(\xi_i^\prime\bar\Gamma_{ij}\xi_j)]\right| \\ 
&\leq  \frac{C}{V^2}\sum_{i,j=1}^N\sqrt{\E(\xi_j'\Gamma_{ji}\xi_i)^4}\sqrt{\E(\xi_i^\prime\bar\Gamma_{ij}\xi_j)^2}\leq \frac{C}{V^2}\sum_{i,j=1}^N\|\Gamma_{ji}\|_F^2\cdot\|\bar\Gamma_{ij}\|_F\leq \frac{C\max_{i,j}\|\bar\Gamma_{ij}\|_F}{V}.
\end{align*}
We notice that
 $\|\bar\Gamma_{ij}\|_F=\|\sum_k\Gamma_{ik}\Gamma_{kj}\|_F\leq\sqrt{\sum_k\|\Gamma_{ik}\|_F^2}\sqrt{\sum_k\|\Gamma_{kj}\|_F^2}\leq\max_i\sum_{j}\|\Gamma_{ij}\|_F^2.$
Thus, $I_1\xrightarrow{p} 0$. All summands in $I_2$ are uncorrelated unless the set of indexes $j,k$ coincides.
\begin{align*}
    \V(I_2)\leq \frac{C}{V^2}\sum_{i,l=1^N}\sum_{k>i,k>l}\sum_{j\neq i}\tr(\Gamma_{ji}\Gamma_{ik}\Gamma_{kl}\Gamma_{lj})=\frac{C}{V^2}\tr(\Gamma^2\tilde\Gamma)\leq \frac{C}{V^2}\|\Gamma^2\|\|\tilde\Gamma\|_F^2\leq \frac{C\|\Gamma^2\|}{\|\Gamma\|^2_F}\to0.
\end{align*}
In this argument we defined $\tilde\Gamma_{il}=\sum_{k>i,k>l}\Gamma_{ik}\Gamma_{kl}$ and $\|\tilde\Gamma\|_F^2\leq\|\Gamma\|_F^2$. This gives $I_2\xrightarrow{p} 0$.

Introduce $\eta_{ik}=\tr\left(\Gamma_{ik}(\xi_k\xi_k'-I)\Gamma_{ki}(\xi_i\xi_i^\prime-I)\right)$.
\begin{align*}    
&\E[\eta_{ik}^2]=\E\left(\xi_k'\Gamma_{ki}(\xi_i\xi_i^\prime-I)\Gamma_{ik}\xi_k-\tr(\Gamma_{ik}\Gamma_{ki}(\xi_i\xi_i^\prime-I))\right)^2\\
&=\E[\V(\xi_k'\Gamma_{ki}(\xi_i\xi_i^\prime-I)\Gamma_{ik}\xi_k|\xi_i)]\leq\E[\E[(\xi_k'\Gamma_{ki}(\xi_i\xi_i^\prime-I)\Gamma_{ik}\xi_k)^2|\xi_i]]\\
&\leq C\E\left(\tr(\Gamma_{ki}(\xi_i\xi_i^\prime-I)\Gamma_{ik})\right)^2=C\E\left(\xi_i'\Gamma_{ik}\Gamma_{ki}\xi_i-\|\Gamma_{ik}\|_F^2\right)^2 \\ 
&\leq C\E\left(\xi_i'\Gamma_{ik}\Gamma_{ki}\xi_i\right)^2\leq C\|\Gamma_{ik}\|_F^4.
\end{align*}
Notice that $\eta_{ik}$ are uncorrelated with any other $\eta$'s except $\eta_{ki}$. Thus
\begin{align}
\V(I_3)=\V\left(\frac{1}{V}\sum_{i=1}^N\sum_{k\neq i}\eta_{ik}\right) =\frac{2}{V^2}\sum_{i=1}^N\sum_{k\neq i}\E[\eta_{ik}^2] \leq \frac{C\max_{ik}\|\Gamma_{ik}\|_F^2}{V}\to 0.
\end{align}
Finally, to prove $I_4\xrightarrow{p}0$ we notice that all summands are independent, thus
\begin{align*}
    &\V\left(\frac{1}{V}\sum_{k=1}^N\xi_k'\bar\Gamma_{kk}\xi_k\right)=\frac{1}{V^2}\sum_{k=1}^N\V\left(\xi_k'\bar\Gamma_{kk}\xi_k\right)\leq \frac{1}{V^2}\sum_{k=1}^N\E\left(\xi_k'\bar\Gamma_{kk}\xi_k\right)^2 \\ &\leq C\frac{1}{V^2}\sum_{k=1}^N(\tr \bar\Gamma_{kk})^2=C\frac{1}{V^2}\sum_{k=1}^N(\tr \sum_{j}\Gamma_{kj}\Gamma_{jk})^2=C\frac{1}{V^2}\sum_{k=1}^N( \sum_{j}\|\Gamma_{kj}\|_F^2)^2\to 0. \qedhere 
\end{align*}
\end{proof}

\begin{proof}[Proof of Theorem \ref{thm: gaussianity}.] 
Define $\tilde A_{ij}=\Sigma_i^{1/2}\left(\begin{array}{cc}
    0 & 0 \\
    A_{ij} &0 
\end{array}\right)\Sigma_j^{1/2}$ to be  a $(2T_i) \times (2T_j)$ matrix. Construct the full $(2n) \times (2n)$ matrix $\tilde A$ with $\tilde A_{ij}$ in block $(i,j)$ and $\tilde A_{ii} = 0$ for all $i$. Note that $\tilde A$ is generally not symmetric. The variance of the quadratic component can be written as: \looseness=-1
    $$\V\left(\sum_j\sum_{i\neq j}v_i'A_{ij}e_j\right)=\V\left(\sum_j\sum_{i\neq j}\xi_i'\tilde A_{ij}\xi_j\right)=\sum_j\sum_{i\neq j}\tr ( \tilde A_{ij} \tilde A_{ij}'+ \tilde A_{ij} \tilde A_{ji})=
 \tr ( \tilde A \tilde A')+\tr ( \tilde A ^2).$$
 The first term, $\tr(\tilde A \tilde A')$, corresponds to $\sum_j \V(Q_j e_j)$ in the decomposition of Equation \eqref{eq: quad_var}. The second term, $\tr(\tilde A^2)$, captures the correction due to cross-cluster dependence. 
 
 If two random variables $\eta_1$ and $\eta_2$ have correlation bounded  by $1 - c$, then $\V(\eta_1 + \eta_2) \geq c(\V(\eta_1) + \V(\eta_2))$. Assumption~\ref{ass: gaussianity matrix A}\ref{ass: gaussianity matrix A1} on bounded alignment and Assumption~\ref{ass: gaussianity}\ref{ass: gaussianity1} imply
\begin{align}
    &\max_{\tau_1, \tau_2} \left| \mathrm{corr}(\tau_1' e_i, \tau_2' (v_i \otimes e_i)) \right|  \leq 1 - c,\\ 
    \omega^2 \geq \sum_{j=1}^N c &\left( \V((\lambda' A)_j e_j) + \V(\mathrm{vec}(A_{jj}')'  (v_j \otimes e_j)) \right) + \tr(\tilde A \tilde A').
\end{align}
Notice that $\tr(\tilde A \tilde A')=\|\tilde A\|_F^2\geq C^{-2}\sum_{i\neq j}\|A_{ij}\|_F^2$ due to Assumption~\ref{ass: gaussianity}\ref{ass: gaussianity1}.
\begin{equation}
\label{eq: low bound on secon variance} 
\V(v_j' A_{jj} e_j) = \V(\mathrm{vec}(A_{jj}')' (v_j \otimes e_j)) \geq c \cdot \mathrm{vec}(A_{jj}')' (\Sigma_{v,j} \otimes \Sigma_{e,j}) \mathrm{vec}(A_{jj}') = c \|  A_{jj} \|_F^2.
\end{equation}
Combining the above with Assumption~\ref{ass: gaussianity matrix A}\ref{ass: gaussianity matrix A1}, we conclude $\omega^2 \geq c n$.

Next, recall that: $x' A e = \sum_{j=1}^N \omega_j e_j + \sum_{j=1}^N \sum_{i \neq j} v_i' A_{ij} e_j.$ We first establish asymptotic normality for the linear component, $\sum_{j=1}^N \omega_j e_j$, using Lyapunov's condition. For $\sum_j (\lambda' A)_j e_j$, Lyapunov’s condition holds due to bounded fourth moments from Assumption~\ref{ass: gaussianity}\ref{ass: gaussianity4}. For $\sum_j v_j' A_{jj} e_j$, we apply the higher-moment bound in Assumption~\ref{ass: gaussianity}\ref{ass: gaussianity4}:
\begin{align*}
\E |v_j' A_{jj} e_j|^{2 + 2\delta} &\leq \E |\mathrm{vec}(A_{jj}')' (v_j \otimes e_j)|^{2 + 2\delta} 
\leq C \left( \mathrm{vec}(A_{jj}')' \Phi_j \mathrm{vec}(A_{jj}') \right)^{1 + \delta} \\
&\leq C \left( \mathrm{vec}(A_{jj}')' (\Sigma_{v,j} \otimes \Sigma_{e,j}) \mathrm{vec}(A_{jj}') \right)^{1 + \delta} = C \| \tilde A_{jj} \|_F^{2 + 2\delta}.
\end{align*}
Hence: $\sum_{j=1}^N \E |v_j' A_{jj} e_j|^{2 + 2\delta} \leq C \max_j \| \tilde A_{jj} \|_F^{2\delta} \sum_{j=1}^N \| \tilde A_{jj} \|_F^2.$ Using  Assumption~\ref{ass: gaussianity matrix A}\ref{ass: gaussianity matrix A2}, which implies $\frac{\max_j \| \tilde A_{jj} \|_F^2}{n} \to 0$, we have:
\begin{equation*}
\frac{\sum_{j=1}^N\E\left|v_j'A_{jj}e_j\right|^{2+2\delta}}{\omega^{1+\delta}}\leq C\frac{\max_j\|\tilde A_{jj}\|_F^{2\delta}\sum_{j=1}^N\|\tilde A_{jj}\|_F^{2}}{\left(\V(v_j'A_{jj}e_j)\right)\cdot\omega^{\delta}}\leq C\frac{\max_j\|\tilde A_{jj}\|_F^{2\delta}}{n^\delta}\to 0.
\end{equation*}
Thus, the linear part satisfies a CLT:
$\frac{1}{\omega} \sum_{j=1}^N \omega_j e_j \xrightarrow{d} \mathcal{N}(0,1).$

We now apply Lemma \ref{lem: CLT for quadratics} to the quadratic part:
$\sum_{i=1}^N \sum_{j \neq i} v_i' A_{ij} e_j = \sum_{i,j} \xi_i' \tilde A_{ij} \xi_j,$
with $\Gamma = \frac{1}{2}(\tilde A + \tilde A')$. Assumption \ref{ass: gaussianity matrix A}\ref{ass: gaussianity matrix A3} ensures that either the quadratic term is asymptotically negligible compared to $\omega$, or the conditions of Lemma \ref{lem: CLT for quadratics} are satisfied. Because the linear and the quadratic terms are uncorrelated in finite samples, their asymptotic normality implies that they are asymptotically independent and hence jointly normal.
\end{proof}

\begin{proof}[Proof of Lemma \ref{lem: assumption check}] For the first statement notice that $\sum_{\tilde\ell} (A^*_{\ell \tilde\ell})^2 = A^*_{\ell\ell} \leq 1$, so:
\[
\sum_{j=1}^N \| A_{ij}^*\|_F^2 \leq    \sum_{\ell \in S_i} A^*_{\ell\ell} \leq C \max_i T_i.
\]
According to Theorem \ref{thm: leave out}, $A^* = (I - B)M$, where $B$ is sparse with non-zero elements $B_{\ell\tilde\ell}$ only if   $\mathcal{E}_{\ell\tilde\ell}=0$. If $\|M_{\ell}^{-1}\| \leq C$, then $\|B_\ell\|^2 \leq C$ and all elements of $B$ are bounded. Thus, $\|I-B\|_\infty\leq C\max_i T_i$,
\begin{align*}
    &\max_i(A'\lambda)_i^\prime\Sigma_{e,i}(A'\lambda)_i 
    \leq C\max_i\|\Sigma_i\|\cdot\|(A'\lambda)_i\|_2^2
    \leq C \max_i\|\Sigma_i\|\cdot\max_iT_i\|(A'\lambda)\|_\infty^2 \\ 
    &\leq C \max_i\|\Sigma_i\|\cdot \max_iT_i\cdot\|M\|_\infty^2\|1-B\|_\infty^2\|\lambda\|_\infty^2
    \leq C \max_i\|\Sigma_i\|\cdot\max_iT_i^3\|M\|_\infty^2. 
\end{align*}
For the third statement:
\begin{align*}
    &\sum_\ell (A^*_{\ell\ell_1})^2\leq \sum_\ell M^2_{\ell\ell_1}+\sum_\ell(\sum_{\tilde\ell:\mathcal{E}_{\ell\tilde\ell}=0}B_{\ell\tilde\ell}M_{\tilde\ell \ell_1})^2\leq M_{\ell_1,\ell_1}+\sum_\ell\|B_\ell\|^2\sum_{\tilde\ell:\mathcal{E}_{\ell\tilde\ell}=0}M_{\tilde\ell \ell_1}^2 \\ 
    &\leq 1+C\sum_\ell\sum_{\tilde\ell:\mathcal{E}_{\ell\tilde\ell}=0}M_{\tilde\ell\ell_1}^2\leq 1+C(\max_iT_i)M_{\ell_1,\ell_1}\leq C\max_iT_i.
\end{align*}
Here we used the fact that $\mathcal{E}_{\ell\tilde\ell}=0$ only within a cluster and thus has less than $T_i$ elements. 
$$ \sum_{i=1}^N\|A_{ij}^*\|_F^2\leq \sum_{\ell_1\in S_j}\sum_{\ell=1}^n A_{\ell\ell_1}^*\leq C\max_kT_k^2. 
$$
From Theorem \ref{thm: leave out}, $B$ is block-diagonal with row/column sparsity bounded by $\max_i T_i$. Thus:
$\|B\|^2 \leq \|B\|_1 \|B\|_\infty \leq C (\max_i T_i)^2.$
Hence,    $\| A^*\|^2\leq \| (I-B)M\|^2\leq  C(\max_iT_i)^2.$
\end{proof}

\begin{proof}[Proof of Lemma \ref{lem: jackknife variance}.]
We start by expanding the jackknife difference:
$$
\mathcal{Z}-\mathcal{Z}_{(j)}=-\mu_j+\omega_je_j+\sum_{i\neq j}(\lambda_j'A^*_{ji}e_i+v_i'A^*_{ij}w_j'\delta)+\sum_{i\neq j}(v_i'A^*_{ij}e_j+v_j'A^*_{ji}e_i),
$$
where $\mu_j=\sum_{i\neq j}\lambda_i'A_{ij}^*w_j'\delta$, $\omega_j=(A^*\lambda)_j+v_j'A_{jj}^*$. We now compute the squared expectation:
\[
\E[(\mathcal{Z} - \mathcal{Z}_{(j)})^2] = \mu_j^2 + \V(\omega_j e_j) + \sum_{i \neq j} \V\left( \lambda_j' A^*_{ji} e_i + v_i' A^*_{ij} w_j' \delta \right) + \sum_{i \neq j} \V\left( v_i' A^*_{ij} e_j + v_j' A^*_{ji} e_i \right),
\]
using the fact that all components are uncorrelated due to independence across clusters.
Summing over $j$ and comparing to the true variance $\V(\mathcal{Z})$, we obtain:
\begin{align}\label{eq: jackknife var}
\E[\hat V_\textrm{JK}] = \V(\mathcal{Z}) + \sum_j \mu_j^2 
+ \sum_{i=1}^N \sum_{j < i} \V\left( v_i' A^*_{ij} e_j + v_j' A^*_{ji} e_i \right)
+ \sum_j \sum_{i \ne j} \V\left( \lambda_j' A^*_{ji} e_i + v_i' A^*_{ij} w_j' \delta \right).
\end{align}

Therefore, $\E[\hat V_\textrm{JK}] \geq \V(\mathcal{Z})$, and the jackknife variance estimator is conservative. In the special case $A^*_{ij} = 0$ for all $i \ne j$, all additional terms vanish, yielding
$\E[\hat V_\textrm{JK}] = \V(\mathcal{Z}).$
\end{proof}

\section{Simulation in Section~\ref{sec:design_base}}
\label{sec:sim}

The DGP is a stochastic block model with $N = 50$ clusters of size $T = 10$, for a total of $n = 500$ units. Within each cluster, each pair of units forms an undirected edge independently with
  probability $p_{\mathrm{in}} = 0.3$; there are no cross-cluster edges. The weighted adjacency matrix $G$ is defined as $G_{\ell k} = F_{\ell k} \cdot \omega_{\ell k}$, where $F_{\ell k} \in \{0,1\}$
  indicates whether $\ell$ and $k$ are connected and $\omega_{\ell k} = \omega_{k\ell} \overset{\mathrm{iid}}{\sim} \mathrm{Exp}(1)$ for each edge. Crucially, $G$ is \emph{not} row-normalized: the row
   sums $s_\ell = \sum_k G_{\ell k}$ vary across units. Cluster-level treatment saturation is drawn as $\mu_i \overset{\mathrm{iid}}{\sim} U[0.1, 0.9]$ for $i = 1, \dots, N$. The network structure, edge
  weights, and cluster-level saturations are sampled once and held fixed across all simulation replications. Unit-level treatment drawn as $x_\ell\sim \text{Bernoulli}(\mu_{i(\ell)})$ independenly across units and the outcome is $y_\ell =     
  \beta  x_\ell + \alpha \sum_{k} G_{\ell k}  x_k + e_\ell$ where $e_\ell \overset{\mathrm{iid}}{\sim} N(0,1)$. $x_\ell, e_\ell$ are redrawn in each of $S = 2{,}000$ simulation replications and $\beta=1$.

\end{document}